\def\boxit#1{\vbox{\hrule\hbox{\vrule\kern6pt \vbox{\kern6pt#1\kern5pt}
\kern6pt\vrule}\hrule}}
\newcommand{\bX}{{\boldsymbol X}}
\newcommand{\bY}{{\boldsymbol Y}}
\newcommand{\bZ}{{\boldsymbol Z}}
\newcommand{\bx}{{\boldsymbol x}}
\newcommand{\bz}{{\boldsymbol z}}
\newcommand{\bi}{{\boldsymbol i}}
\newcommand{\bc}{{\boldsymbol c}}
\newcommand{\bv}{{\boldsymbol v}}
\newcommand{\mR}{\mathbb{R}}
\newcommand{\BX}{\mathbb{X}}
\newcommand{\mG}{{\cal G}}
\newcommand{\mK}{{\cal K}}
\newcommand{\mX}{{\cal X}}
\newcommand{\bxi}{{\boldsymbol \xi}}
\newcommand{\bbeta}{{\boldsymbol \beta}}
\newcommand{\btheta}{{\boldsymbol \theta}}
\newcommand{\bvartheta}{{\boldsymbol \vartheta}}
\newcommand{\bvarepsilon}{{\boldsymbol \varepsilon}}
\newcommand{\bomega}{{\boldsymbol \omega}}
\newcommand{\var}{{\mbox{Var}}}
\newcommand{\diag}{{\mbox{diag}}}
\newtheorem{theorem}{Theorem}[]
\newtheorem{lemma}{Lemma}[]
\newtheorem{prop}{Proposition}[]
\newtheorem{corollary}{Corollary}[]
\newtheorem{algorithm}{Algorithm}[]
\newtheorem{remark}{Remark}[]
\begin{document}

\title{A Langevinized Ensemble Kalman Filter for Large-Scale Static and Dynamic Learning}

\author{Peiyi Zhang, Qifan Song, and Faming Liang
\thanks{To whom correspondence should be addressed: Faming Liang.
  P. Zhang is Graduate Student (email: zhan2763@purdue.edu),  Q. Song is Assistant Professor (email: qfsong@purdue.edu),
  and F. Liang is Professor (email: fmliang@purdue.edu), Department of Statistics, Purdue University, 
  West Lafayette, IN 47907.
 }
 }


\maketitle

\begin{abstract}

The Ensemble Kalman Filter (EnKF) has achieved great successes in data assimilation in  atmospheric and oceanic sciences, but its failure in convergence to the right filtering distribution precludes its use for uncertainty quantification. We reformulate the EnKF under the framework of Langevin dynamics, which leads to a new particle filtering algorithm, the so-called Langevinized EnKF. The Langevinized EnKF inherits the forecast-analysis procedure from the EnKF and the use of mini-batch data from the stochastic gradient Langevin-type algorithms, which make it scalable with respect to both the dimension and sample size. We prove that the Langevinized EnKF converges to the right filtering distribution in Wasserstein distance
under the big data scenario that the dynamic system consists of a large number of stages and has a large number of samples observed at each stage. We reformulate the Bayesian inverse problem as a dynamic state estimation problem based on the techniques of subsampling and Langevin diffusion process. We illustrate the  performance of the Langevinized EnKF using a variety of examples, including the Lorenz-96 model, high-dimensional  variable selection, Bayesian deep learning, and Long Short Term Memory (LSTM) network learning with dynamic data. 
 
\vspace{2mm}

\underline{Keywords:}  Data Assimilation; Inverse Problem;  State Space Model; Stochastic Gradient Markov Chain Monte Carlo; Uncertainty Quantification

\end{abstract}

\section{Introduction}

Coming with the new century, the integration of computer technology into science and daily life has enabled scientists to collect massive volumes of data, such as satellite data, high-throughput biological assay data and website transaction logs. To address computational difficulty encountered in Bayesian analysis of big data, a variety of scalable MCMC 
algorithms have been developed, which include 
stochastic gradient MCMC algorithms 
\citep{Welling2011BayesianLV, Ding2014BayesianSU, Ahn2012BayesianPS, Chen2014StochasticGH, Betancourt2015, Li2016PSGLD, Ma2015ACR,NemethF2019},
split-and-merge algorithms \citep{ConsensusMC2016, LiSD2017, Srivastava2018, XueLiang2019},
mini-batch Metropolis-Hastings algorithms \citep{chen2016min, korattikara2014austerity, bardenet2014towards, Maclaurin2014FireflyMC, Bardenet2017OnMC}, 
 nonreversible Markov process-based algorithms \citep{BierkensFR2016, BouchardVD2016}, and 
 Bayesian bootstrapping \citep{LiangKS2016,FongLH2019}, among others.

 Although the scalable MCMC algorithms have achieved great successes in Bayesian learning with static data, none of them could be directly applied to dynamic data.
 In the literature, learning with static data is often termed as static or off-line learning, and learning with dynamic data is often termed as dynamic or on-line learning. 
 Dynamic learning is important and challenging, as dynamic data collection is general, heterogeneous and messy.
 We note that classical sequential Monte Carlo or particle filter algorithms, see e.g. \cite{DoucetFG2001}, 
 lack the scalability necessary for dealing with large-scale dynamic data, which strive to make use of all available data at each processing step.
 The ensemble Kalman filter (EnKF) \citep{Evensen1994} 
 has achieved great successes in high-dimensional data assimilation problems, see e.g., \cite{EvensenVan1996}, \cite{Aanonsen2009} and \cite{HoutekamerM2001}, 
 but it fails to converge to the right filtering 
 distribution for general nonlinear dynamic systems \citep{LawTt2016}. 
 How to make Bayesian on-line learning with large-scale dynamic data has posed a great challenge on current statistical methods.
 
 In this paper, we are interested in conducting Bayesian on-line learning for the dynamic system:
 \begin{equation} \label{EnKFasimeq}
 \begin{split}
   x_{t}& =g(x_{t-1})+u_{t}, \quad u_{t}\sim N(0, U_{t}), \\
   y_t & =H_t x_t+\eta_t, \quad \eta_{t} \sim N(0, \Gamma_t),   \\
 \end{split}
  \end{equation}
 for $t=1,2,\ldots, T$, where $x_t \in \mathbb{R}^p$ and $y_{t}\in \mathbb{R}^{N_t}$ denote, respectively, the state and observations at stage $t$; and the dimension $p$, the number of stages $T$, and the sample sizes $N_t$'t are all assumed to be very large. 
 For the dynamic system,
 the top equation is called the state evolution equation, where  $g(\cdot)$ is called state propagator and can be nonlinear;
 and the bottom equation is called the measurement equation, where the propagator $H_t$ relates the state variable to the measurement variable and yields the expected value of the prediction given the model states and parameters. 
 The dynamic system (\ref{EnKFasimeq}) is of central importance in this paper. It itself models the data 
 assimilation problems with linear measurement 
 equations (studied in Section 3.3), the data assimilation problems with nonlinear measurement equations (studied in Section 3.4) and the 
 inverse problems (studied in Sections 3.1 and 3.2) 
 can be converted to it via appropriate transformations. 
 For simplicity, we assume that  
 both the model error $u_t$ and observation error $\eta_t$ are zero-mean Gaussian random variables, and 
  that the covariance matrices $U_t$ and $\Gamma_t$ and the propagator $g(\cdot)$ and $H_t$ are all fully specified, i.e. containing no unknown parameters. How to extend our study to the problems with non-Gaussian random error and/or unknown parameters will be discussed at the end of the paper. Throughout this paper, we will let $t$ index the stage of the dynamic system, 
  let $f(y_t|x_t)$ denote the likelihood function  of $y_t$, let $\pi(x_t|y_{1:t})$ denote the filtering distribution at stage $t$ given the data $\{y_1,y_2,\ldots,y_t\}$, and let
  $\pi(x_{t}|y_{1:t-1})=\int \pi(x_{t}|x_{t-1}) \pi(x_{t-1}|y_{1:t-1}) d x_{t-1}$ 
  denote the predictive distribution of $x_{t}$ given $\{y_1,y_2,\ldots,y_{t-1}\}$.

  Our goal is to develop a scalable particle filtering algorithm under the big data scenario that the dimension $p$, the numbers of stages $T$ and the sample sizes $N_t$'s are all very large. Toward this goal, we reformulate the EnKF under the framework of Langevin dynamics. The resulting  algorithm inherits the forecast-analysis procedure from the EnKF and the use of mini-batch data from the stochastic gradient Langevin-type algorithms. The former makes the new algorithm scalable with respect to the dimension, and the latter makes it scalable with respect to the sample size. To credit to its two precursors, the new algorithm is coined as  Langevinized EnKF. We prove that the Langevinized EnkF 
 converges to the right filtering distributions in Wasserstein distance under the scenario that 
  the number of stages $T$ and the sample sizes $N_t$'s
  are all large. 
  We illustrate the performance of the Langevinized EnKF using a variety of examples, including the Lorenz-96 model \citep{Lorenz1996}, Bayesian deep learning, and Long Short Term Memory (LSTM) network learning. 
 
 The remaining part of this paper is organized as follows. Section 2 provides a brief review for the EnKF and explains its scalability with respect to the dimension.
 Section 3 describes the Langevinized EnKF for both statistic learning and dynamic learning and 
 studies its convergence, where how the inverse problems and the data assimilation problems with a nonlinear measurement equation can be converted to the system (\ref{EnKFasimeq}) is detailed. 
 Sections 4 and 5 illustrate the performance of the Langevinized EnKF for the inverse and data assimilation problems, respectively. Section 6 concludes the paper with a brief discussion.

\section{Why is the EnKF Efficient for High-Dimensional Problems?} \label{EnKFsect}

 Consider the dynamic system (\ref{EnKFasimeq}). To estimate the state variables $x_1,x_2,\ldots,x_T$, where $T$ denotes the total number of stages, \cite{Evensen1994} proposed the EnKF algorithm: 

 \begin{algorithm} \label{EnKFAlg} (EnKF Algorithm)
 \begin{itemize}
 \item[0.] (Initialization) Start with an initial ensemble $x_0^{a,1}, x_0^{a,2}, \ldots, x_0^{a,m}$,
  where $m$ denotes the ensemble size. For each stage $t=1,2,\ldots,T$, do the
  following:

  \item[1.] (Forecast) For $i=1,2,\ldots, m$, draw $u_t^i \sim N(0, U_t)$ and calculate
   \[
    x_t^{f,i}=g(x_{t-1}^{a,i})+u_t^i. 
   \]
   Calculate the sample covariance matrix of $x_t^{f,1},\ldots,x_t^{f,m}$ and denote it by $C_t$.

  \item[2.] (Analysis) For $i=1,2,\ldots,m$, draw $\eta_t^i \sim N(0, \Gamma_t)$ and calculate
   \[
    x_t^{a,i} = x_t^{f,i} +\hat{K}_t (y_t-H_t x_t^{f,i}-\eta_t^i) \stackrel{\Delta}{=}x_t^{f,i} +\hat{K}_t (y_t-y_{t}^{f,i}),
   \]
   where $\hat{K}_t=C_t H_t^T (H_t C_t H_t^T+\Gamma_t)^{-1}$ forms an estimator for
   the Kalman gain matrix $K_t=S_t H_t^T$ $ (H_t S_t H_t^T+\Gamma_t)^{-1}$ and
   $S_t$ denotes the covariance matrix of $x_t^f$.
 \end{itemize}
\end{algorithm}

 If $U_t$, $\Gamma_t$, $g(\cdot)$ and $H_t$ contain some unknown parameters,
 the state augmentation method \citep{Anderson2001, Baeketal2006, GillijnsDeMoor2007} can be used, where the state vector is augmented with the model parameters and the EnKF estimate the state and parameters in a simultaneous manner.

 
 The rationale underlying the EnKF can be explained as follows. Let $x_t^f$ and $x_t^a$ denote a generic sample  obtained at 
 the forecast and analysis step, respectively. 
 The forecast step is to use the forecasted samples $\{x_t^{f,1},\ldots,x_t^{f,m}\}$ to approximate the predictive distribution $\pi(x_t|y_{1:t-1})$. 
 Let  $\mu_t^{\prime}$ and $S_t$ denote the mean and variance of $\pi(x_t|y_{1:t-1})$, respectively. Hence, one can rewrite $x_t^f$ as 
 \[
 x_t^f=\mu_t^{\prime}+w_t,
 \]
 where $w_t$ is a random error with mean 0 and variance $S_t$. If $\pi(x_t|y_{1:t-1})$ is Gaussian, 
   by the identity $K_t=S_tH_t^T(H_t S_t H_t^T+\Gamma_t)^{-1}=
  (I-K_tH_t)S_t H_t^T \Gamma_t^{-1}=
  (H_t^T \Gamma_t^{-1} H_t+S_t^{-1})^{-1} H_t^T \Gamma_t^{-1}$, one can show
 \[
 \begin{split}
   x_t^a &=\big[\mu_t^{\prime}+{K}_t(y_t-H_t\mu_t^{\prime})\big]+\big[(I-{K}_t H_t)w_t-{K}_t \eta_t\big] =\mu_t+e_t,\\
\end{split}
\]
where $\mu_t=\mu_t^{\prime}+{K}_t(y_t-H_t\mu_t^{\prime})$ is the mean of 
$\pi(x_t|y_{1:t}) \propto \pi(y_t|x_t) \pi(x_t|y_{1:t-1})$, and 
$e_t=(I-{K}_t H_t)w_t-{K}_t \eta_t$ is a Gaussian random error with mean 0 and variance  
$\var(e_t)=(I-{K}_t H_t)S_t$; that is, 
 $x_t^a$ is a sample following
 the filtering distribution $\pi(x_t|y_{1:t})$.  
 
The EnKF has two attractive features which make it 
extremely successful in dealing with 
high-dimensional data assimilation problems such as those encountered in 
 reservoir modeling \citep{Aanonsen2009},
 oceanography \citep{EvensenVan1996},  and
 weather forecasting \citep{HoutekamerM2001}.
  First, it approximates each filtering distribution $\pi(x_t|y_{1:t})$ using an ensemble of particles.  Since the ensemble size $m$ is typically much smaller than $p$,  it leads to dimension reduction and computational feasibility
 compared to the Kalman filter, see e.g. \cite{ShumwayStoffer2006}. In
 particular, it approximates $S_t$ by $C_t$, 
  and the storage for the matrix $C_t$ is replaced by 
  particles and thus much reduced. 
  Second, in generating particles from 
 each filtering distribution, it avoids covariance matrix decomposition  
 compared to conventional particle filters. It is known that an LU-decomposition of the covariance matrix has a computational complexity of $O(p^3)$. Instead, the EnKF employs a forecast-analysis procedure to generate particles, which has a computational complexity of 
 $O(\max\{p^2 N_t, N_t^3\}+mpN_t)$ for $m$ particles at stage $t$. That is, the forecast-analysis procedure reduces the computational complexity of the algorithm 
 when $m$ and $N_t$ are smaller than $p$. This explains why the EnKF is so efficient for high-dimensional problems.   

 Despite its great successes,
 the performance of the EnKF is sub-optimal. As shown by \cite{LawTt2016},
 it converges only to a mean-field filter, which provides the optimal linear estimator of the
 conditional mean but not the filtering distribution except in the large sample limit for linear systems. Similar results can be found in \cite{LeGlandMT2009}, \cite{BergouGM2019} and \cite{Kwiatkowaki2015}. 
 
 As an extension, \cite{IglesiasLS2013} applied the EnKF to solve the inverse problem, which is to find
  the parameter $z$ given observations of the form
 \begin{equation} \label{inverseeq}
  y=\mG(z)+\eta,
 \end{equation}
 where $\mG(\cdot)$ is the forward response operator mapping the unknown parameter
 $z$ to the  space of observations, $\eta \sim N(0,\Gamma)$ is Gaussian random noise,
  and $y$ is observed data.
 With the state augmentation approach, they defined the new state vector as
 $x^T=(z^T, \mG(z)^T)$ and an artificial dynamic system as
 \begin{equation} \label{inverseeq2}
 \begin{split}
  x_{t} &= x_{t-1}, \\
  y_{t} &= H x_{t}+\eta_{t}, \\
 \end{split}
 \end{equation}
 where $H=(0,I)$ such that $H X=\mG(z)$ holds, $y_{t} \equiv y$ for all $t=1,2,\ldots$, and $\eta_t \sim N(0,\Gamma)$.
 However, as mentioned previously, the EnKF does not converge to the filtering distribution, so 
 the posterior distribution $\pi(z|y)$ cannot be well approximated by the ensemble,
 and thus uncertainty of the estimate of $z$ cannot be correctly quantified. 
 Numerically, \cite{ErnstSS2015} demonstrated that for nonlinear inverse
 problems the large sample limit does not lead to a good approximation to the posterior distribution.

\section{Langevinized EnKF}

 To motivate the development of the Langevinized EnKF, we first consider a linear inverse problem and
  then extend it to nonlinear inverse and data assimilation problems. 
  
 \subsection{Linear Inverse Problem}  \label{Algsection1}
 
 Consider a Bayesian inverse problem for the linear regression
 \begin{equation} \label{inverseeq3} 
   y=Hx+\eta, 
 \end{equation}
 where $\eta \sim N(0,\Gamma)$ for some covariance matrix $\Gamma$, $y \in \mR^{N}$, and $x\in \mR^{p}$ is an unknown continuous parameter vector.
 To accommodate the case that $N$ is extremely large, we assume
 that $y$ can be partitioned into $B=N/n$ independent and identically distributed blocks $\{y_1,\ldots,y_B\}$,
 where each block is of size $n$ and has the covariance matrix $V$ such that
 $\Gamma=\mbox{diag}[V,\cdots,V]$.

Let $\pi(x)$ denote the prior density function of $x$, which is assumed to be differentiable with respect to $x$. Let $\pi(x|y)$ denote the posterior distribution.
To develop an efficient algorithm for simulating from $\pi(x|y)$, which is scalable with respect to both the sample size $N$ and the dimension $p$,  
 we reformulate the model 
 (\ref{inverseeq3}) as a state-space model through subsampling and Langevin diffusion: 
 \begin{equation} \label{inveq1}
 \begin{split}
  x_{t}&=x_{t-1}+\epsilon_t \frac{n}{2N} \nabla \log\pi(x_{t-1})+w_t,  \\
  y_t & =H_t x_t+v_t, \\
 \end{split} 
 \end{equation}
 where $w_t \sim N(0,\frac{n}{N}\epsilon_t I_p)=N(0, \frac{n}{N}Q_t)$, i.e., $Q_t=\epsilon_t I_p$,
 $y_t$ denotes a block randomly drawn from $\{y_1,\ldots,y_B\}$, 
 $v_t \sim N(0,V_t)$ with $V_t=V$, and
 $H_t$ is a submatrix of $H$ extracted with the corresponding $y_t$. 
 In the state-space model, at each stage $t$, the state (i.e., the parameters of model (\ref{inverseeq3})) 
 evolves according to an Euler-discretized Langevin equation of the prior distribution, and the measurement varies with subsampling. As shown in Theorem S1 of the Supplementary Material, the filtering distribution of the state-space model
 converges to the target posterior 
 $\pi(x|y)$ as $t \to \infty$, provided that 
 $\epsilon_t$ decays to zero in an appropriate rate and 
 the matrix $V$ satisfies some regularity conditions. 
 To simulate from dynamic system (\ref{inveq1}), we propose Algorithm \ref{EnKFnew}, which makes use of both techniques, subsampling and the forecast-analysis procedure and is thus scalable with respect to both the sample size $N$ and the dimension $p$.

 \begin{algorithm} \label{EnKFnew} (Langevinized EnKF for Linear Inverse Problems)
 \begin{itemize}
 \item[0.] (Initialization) Start with an initial ensemble $x_0^{a,1}, x_0^{a,2}, \ldots, x_0^{a,m}$,
  where $m$ denotes the ensemble size. For each stage $t=1,2,\ldots, T$ (first loop), do the following: 

  \item[1.] (Subsampling) Draw without replacement a mini-batch data, 
  denoted by $(y_t,H_t)$, of size $n$ from 
  the full dataset of size $N$. 

  \item Set $Q_t=\epsilon_t I_p$, $R_t=2V_t$, and the Kalman gain matrix $K_t=Q_t H_t^T (H_t Q_t H_t^T+R_t)^{-1}$. 
   For each chain $i=1,2,\ldots,m$ (second loop), do steps 2-3:

  \item[2.] (Forecast) Draw $w_t^i \sim N_{p}(0, \frac{n}{N} Q_t)$ and calculate
   \begin{equation} \label{Algeq1}
    x_t^{f,i}=x_{t-1}^{a,i}+\epsilon_t \frac{n}{2N} \nabla \log\pi(x_{t-1}^{a,i})+w_t^i. 
   \end{equation}

  \item[3.] (Analysis) Draw $v_t^i \sim N_{n}(0, \frac{n}{N} R_t)$ and calculate
   \begin{equation} \label{Algeq2}
    x_t^{a,i} = x_t^{f,i}+K_t (y_t-H_t x_t^{f,i}-v_t^i) \stackrel{\Delta}{=}x_t^{f,i} +K_t (y_t-y_{t}^{f,i}).
   \end{equation}
 \end{itemize}
\end{algorithm}

\begin{remark} \label{remark1}
 The Langevinized EnKF is different from the existing formulation 
 of EnKF for inverse problems \citep{IglesiasLS2013} in three aspects: (i) 
  It reformulates the Bayesian inverse problem as a state-space 
  model, where the state (i.e., parameters) evolves according to a 
  Langevin diffusion process converging 
  to the prior $\pi(x)$ and the measurement varies with subsampling; the subsampling technique 
  enables the algorithm scalable with respect to the sample size $N$; (ii) the measurement noise is drawn
  from a variance inflated distribution $N(0, 2V_t)$ in the analysis step;  and (iii) $Q_t=\epsilon_t I_p$ is a designed diagonal matrix with the learning rate $\epsilon_t=O(t^{-\varpi})$ for some $0<\varpi<1$. 
\end{remark}
 
The convergence of Algorithm \ref{EnKFnew} is established in Theorem S1.  An informal restatement of the theorem is given 
in Proposition \ref{prop1}, which facilitates 
discussions for the property of the Langevinized EnKF algorithm.

 \begin{prop} \label{prop1} (Convergence of Langevinized EnKF) 
  Let $x_t^a$ denote a generic member of the ensemble produced by Algorithm \ref{EnKFnew} in the analysis step of stage $t$.
  If the eigenvalues of $\Sigma_t=\frac{n}{N}(I-K_t H_t)$ are uniformly bounded with respect to $t$,
  $\log\pi(x)$ is differentiable with respect to $x$, and the learning rate $\epsilon_t=O(t^{-\varpi})$ for some
  $0<\varpi <1$,
 then $\lim_{t\to \infty} W_2(\tilde{\pi}_t,\pi_*)=0$, where $\tilde{\pi}_t$ denotes the empirical distribution of $x_t^a$, $\pi_*=\pi(x|y)$ denotes the target posterior distribution, and $W_2(\cdot,\cdot)$ denotes the second-order Wasserstein distance.
 \end{prop}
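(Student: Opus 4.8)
The plan is to show that one forecast--analysis cycle of Algorithm~\ref{EnKFnew} coincides, up to an $O(\epsilon_t^2)$ perturbation, with one step of a stochastic gradient Langevin dynamics (SGLD) \citep{Welling2011BayesianLV} recursion targeting $\pi_*=\pi(x|y)$ with learning rate $\tilde\epsilon_t=n\epsilon_t/N$, and then to invoke the Wasserstein-distance convergence theory for such recursions under a decaying learning rate. First, the analysis step (\ref{Algeq2}) can be rewritten as $x_t^{a,i}=A_tx_t^{f,i}+K_ty_t-K_tv_t^i$ with $A_t:=I-K_tH_t$; substituting the forecast (\ref{Algeq1}) and regrouping,
\[
 x_t^{a,i}=A_tx_{t-1}^{a,i}+\frac{n\epsilon_t}{2N}A_t\nabla\log\pi(x_{t-1}^{a,i})+K_ty_t+\zeta_t^i,\qquad \zeta_t^i:=A_tw_t^i-K_tv_t^i .
\]
The noise $\zeta_t^i$ is independent of $x_{t-1}^{a,i}$, has mean zero, and, because $K_t=Q_tH_t^{T}(H_tQ_tH_t^{T}+R_t)^{-1}$ is the Kalman gain matched to $(Q_t,R_t)$, the Joseph-form identity $A_tQ_tA_t^{T}+K_tR_tK_t^{T}=A_tQ_t$ gives $\cov(\zeta_t^i)=\tfrac{n}{N}A_tQ_t=\epsilon_t\Sigma_t$ with $\Sigma_t=\tfrac{n}{N}(I-K_tH_t)$. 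The hypothesis that the eigenvalues of $\Sigma_t$ stay uniformly bounded is exactly what keeps the injected noise comparable to the isotropic Langevin noise $N(0,\tilde\epsilon_tI_p)$ uniformly in $t$, the mismatch $\epsilon_t\Sigma_t-\tfrac{n\epsilon_t}{N}I_p=-\tfrac{n\epsilon_t}{N}K_tH_t$ being only $O(\epsilon_t^2)$.

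Next, since $Q_t=\epsilon_tI_p$ and $R_t=2V_t$, a Neumann expansion gives $K_t=\tfrac{\epsilon_t}{2}H_t^{T}V_t^{-1}+O(\epsilon_t^2)$, hence $A_t=I+O(\epsilon_t)$ and $K_tH_t=O(\epsilon_t)$. Using $A_tx_{t-1}^{a,i}+K_ty_t=x_{t-1}^{a,i}+K_t(y_t-H_tx_{t-1}^{a,i})=x_{t-1}^{a,i}+\tfrac{\epsilon_t}{2}H_t^{T}V_t^{-1}(y_t-H_tx_{t-1}^{a,i})+O(\epsilon_t^2)$ together with $\tfrac{n\epsilon_t}{2N}A_t\nabla\log\pi(x_{t-1}^{a,i})=\tfrac{n\epsilon_t}{2N}\nabla\log\pi(x_{t-1}^{a,i})+O(\epsilon_t^2)$, and sweeping all $O(\epsilon_t^2)$ terms (including the noise mismatch above) into a remainder $\tilde r_t^i$, the update becomes
\[
 x_t^{a,i}=x_{t-1}^{a,i}+\frac{\tilde\epsilon_t}{2}\Big[\nabla\log\pi(x_{t-1}^{a,i})+\tfrac{N}{n}H_t^{T}V_t^{-1}\big(y_t-H_tx_{t-1}^{a,i}\big)\Big]+\tilde\zeta_t^i+\tilde r_t^i ,
\]
with $\tilde\epsilon_t=n\epsilon_t/N=O(t^{-\varpi})$ and $\tilde\zeta_t^i\sim N(0,\tilde\epsilon_tI_p)$; the variance inflation $R_t=2V_t$ is precisely what makes the analysis step contribute $\tfrac{\epsilon_t}{2}H_t^{T}V_t^{-1}(y_t-H_tx)$, i.e.\ the exact increment needed for the bracket to be an unbiased estimate of the posterior score. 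Indeed, because $\Gamma=\diag[V,\cdots,V]$ and $y_t$ is a block drawn uniformly among the $B=N/n$ blocks, $\mathbb{E}\big[\tfrac{N}{n}H_t^{T}V_t^{-1}(y_t-H_tx)\big]=\sum_{b=1}^{B}H_b^{T}V^{-1}(y_b-H_bx)=\nabla\log f(y|x)$, with conditional variance growing at most quadratically in $\|x\|$; hence the bracket is an unbiased stochastic estimate of $\nabla\log\pi_*(x)$ and the display is exactly an SGLD step for $\pi_*$ perturbed by the higher-order $\tilde r_t^i$.

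Finally, since $0<\varpi<1$ forces both $\tilde\epsilon_t\to0$ and $\sum_t\tilde\epsilon_t=\infty$, I would finish by a standard SGLD argument: the law $\tilde\pi_t$ of $x_t^{a}$ tracks the time-changed Langevin diffusion $dX_s=\tfrac12\nabla\log\pi_*(X_s)\,ds+dW_s$ evaluated at time $s_t=\sum_{k\le t}\tilde\epsilon_k\to\infty$, which is geometrically ergodic towards $\pi_*$ because the Gaussian log-likelihood supplies a quadratic confinement (under identifiability of $H$) dominating $\nabla\log\pi$. A synchronous coupling of the chain with a stationary copy of the diffusion, combined with a Gr\"onwall estimate that bounds the Euler discretization error, the mini-batch gradient noise, and the accumulated effect of $\tilde r_t^i$ against this contraction, and using uniform-in-$t$ moment bounds on the ensemble, then yields $W_2(\tilde\pi_t,\pi_*)\to0$.

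The main obstacle is this last step when $\varpi$ is small, so that $\sum_t\tilde\epsilon_t^2=\infty$: the per-step discretization and perturbation errors cannot simply be summed, and one must exploit the contraction of the limiting diffusion to show they are damped, while also verifying that the higher-order terms $\tilde r_t^i$ and the non-isotropy of $\Sigma_t$ do not shift the stationary law. This is exactly where the uniform boundedness of the eigenvalues of $\Sigma_t$, uniform moment control of the ensemble, and the confinement furnished by the quadratic log-likelihood become indispensable; carrying out these ingredients and the attendant non-asymptotic $W_2$ bound is the technical heart of Theorem S1.
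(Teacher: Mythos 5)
Your algebra up to the decomposition $x_t^{a,i}=A_tx_{t-1}^{a,i}+\frac{n\epsilon_t}{2N}A_t\nabla\log\pi(x_{t-1}^{a,i})+K_ty_t+\zeta_t^i$ with $\mbox{Var}(\zeta_t^i)=\epsilon_t\Sigma_t$ matches the paper exactly (the Joseph-form computation of the noise covariance is the same). The gap comes right after: you Neumann-expand $K_t$, reduce the recursion to \emph{isotropic} SGLD with step $\tilde\epsilon_t=n\epsilon_t/N$ plus $O(\epsilon_t^2)$ remainders $\tilde r_t^i$ and a noise-covariance mismatch, and then defer the control of these accumulated perturbations (via coupling with the limiting diffusion and a Gr\"onwall bound) to a step you yourself flag as unresolved, precisely in the regime $\varpi\le 1/2$ where $\sum_t\epsilon_t^2=\infty$. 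As written, that last step is the entire content of the convergence claim, so the proposal is a plan rather than a proof. The expansion is also unnecessary and creates the difficulty: from your own decomposition, the identity $K_t=(I-K_tH_t)Q_tH_t^TR_t^{-1}$ gives \emph{exactly} $A_tx_{t-1}^{a,i}+K_ty_t=x_{t-1}^{a,i}+\frac{\epsilon_t}{2}\Sigma_t\frac{N}{n}H_t^TV_t^{-1}(y_t-H_tx_{t-1}^{a,i})$ and $\frac{n\epsilon_t}{2N}A_t\nabla\log\pi=\frac{\epsilon_t}{2}\Sigma_t\nabla\log\pi$, so the update is identically the preconditioned recursion (\ref{SGRLDeq}): drift $\frac{\epsilon_t}{2}\Sigma_t\widehat{\nabla}\log\pi(x_{t-1}^a|y_t)$ with an unbiased mini-batch score estimate, and Gaussian noise with covariance exactly $\epsilon_t\Sigma_t$, matched to the same $\Sigma_t$. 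There are no $O(\epsilon_t^2)$ remainders to sweep up and no noise mismatch to couple away.

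This exact matched-preconditioner form is what the paper's Theorem S1 exploits: after the change of variables $\tilde x=\Sigma_t^{-1/2}x$, the step is a standard inexact Langevin update, and convergence follows from Lemma S2 (a generalization of Dalalyan and Karagulyan's $W_2$ bound to varying conditioning matrices and state-dependent gradient-noise variance) together with Corollary S1, which handles the decaying learning rate $\epsilon_t=O(t^{-\varpi})$, $0<\varpi<1$, with zero gradient bias ($\eta=0$) --- exactly the case here. That machinery does require $-\log\pi_*$ to be strongly convex and gradient-Lipschitz (conditions suppressed in the informal Proposition 1 but assumed in Lemma S2), whereas your sketch substitutes an ad hoc ``quadratic confinement under identifiability of $H$,'' which is an extra assumption and fails when $p>N$ unless the prior supplies the curvature. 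If you want to complete your route, you must either (i) keep the preconditioner exact as above and cite or reprove a Dalalyan--Karagulyan-type contraction estimate for varying $\Sigma_t$ and decaying steps, or (ii) genuinely carry out the coupling/Gr\"onwall argument showing the accumulated $O(\epsilon_t^{3/2})$-per-step coupling errors are damped by the contraction; option (i) is what the paper does and is substantially shorter.
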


 In the proof of Theorem S1, it is shown that 
 \begin{equation} \label{SGRLDeq}
  x_t^a  =x_{t-1}^a+ \frac{\epsilon_t}{2} \Sigma_t
  \widehat{\nabla} \log \pi(x_{t-1}^a|y_t)
  +e_t, 
 \end{equation}
 where $e_t$ is a zero mean Gaussian random error with covariance $\var(e_t)=\epsilon_t \Sigma_t$, and   
 $\widehat{\nabla} \log \pi(x_{t-1}^a|y_t) = 
 \frac{N}{n}H_t^T V_t^{-1}(y_t-H_t {x}_{t-1}^a) + \nabla \log \pi(x_{t-1}^a)$ denotes an unbiased
 estimate of $\nabla \log\pi(x_{t-1}^a|y_t)$. That is, the Langevinized EnKF forms a new type of 
 stochastic gradient Riemannian Langevin dynamics (SGRLD) algorithm \citep{PattersonTeh2013, Ahn2012BayesianPS, GirolamiG2011},
 where the Fisher information matrix is adapted with the mini-batch of data by noting that 
 $\epsilon_t \Sigma_t =\frac{n}{N} (I-K_t H_t)Q_t$ is exactly the inverse of the Fisher information matrix of the distribution $\pi(x_t^a|x_{t-1}^a, y_t)$.
 It is known that use of the Fisher information, 
 which rescales parameter updates according to the geometry of the target distribution, can generally improve the convergence of SGMCMC especially when the target distribution exhibits pathological curvature and contains some saddle points \citep{Li2016PSGLD,Dauphin2014}. 

Since we set the learning rate $\epsilon_t=O(t^{-\varpi})$ for some $0<\varpi<1$ in Algorithm \ref{EnKFnew}, by Theorem 2 of \cite{SongLiang2020}, $\{x_t^{a}: t=t_0+1,\ldots,T\}$ can be treated as equally weighted samples, where $t_0$ represents the burn-in period. 
  That is, for any Lipschitz function $\rho(x)$, the posterior mean 
  $E_{\pi} \rho(x)$ $=\int \rho(x) \pi(x|y)$
  can be estimated by 
  \begin{equation} \label{esteq} 
  \widehat{E_{\pi} \rho(x)}= \frac{1}{(T-t_0)m} \sum_{t=t_0+1}^T \sum_{i=1}^m \rho(x_t^{a,i}),
  \end{equation} 
  which converges to $E_{\pi} \rho(x)$ in probability as $T \to \infty$. 
 Alternatively, we can apply a weighted 
  averaging scheme as suggested by \cite{chen2015convergence} and \cite{Teh2016SGLD} for estimating  $E_{\pi} \rho(x)$. As for a conventional SGLD algorithm, we can also set $\epsilon_t$ to a small constant. In this case, the convergence of 
  $x_t^a$ to the posterior distribution is up to an approximation error even when $t\to \infty$.

   It is interesting to point out that when the dimension of $x$ is high, Langevinized EnKF  can be much more efficient than directly implementing (\ref{SGRLDeq}). The latter requires an LU-decomposition of $\Sigma_t$, which has a computational complexity of $O(p^3)$, in generating the random error $e_t$. 
  While the Langevinized EnKF gets around this issue with the forecast-analysis procedure. 
  As shown in the proof of Theorem S1, we have $e_t=(I-K_t H_t)w_t-K_t v_t$ and $\var(e_t)=\epsilon_t \Sigma_t$. 
  The computational complexity of the forecast-analysis procedure is $O(\max\{n^2 p, n^3\}+mnp)$ for generating $m$ particles per iteration, where the 
  first term represents the cost  
  for calculating $K_t$, the second term represents the total cost of $m$ chains for forecasting and analysis, and the cost for calculating $K_t$ is counted 
  as the overhead at each iteration.  
  This procedure is even faster than in the original EnKF algorithm as $Q_t$ is diagonal. For high-dimensional problems, we typically set 
    $n \ll p$, so the total computational complexity of the Langevinized EnKF is $O((n^2p+mnp)T)$, which implies that the algorithm is scalable with respect to both the sample size $N$ and the dimension $p$. In contrast, if (\ref{SGRLDeq}) is directly simulated as a SGRLD algorithm, the total computational complexity will be $O((p^3+np^2)mT)$ for $mT$ iterations, where 
    $O(np^2)$ represents the cost for computing $\Sigma_t \widehat{\nabla} \log\pi(x_{t-1}^a|y_t)$.
    Here we note that the particles in the same ensemble of the Langevinized EnKF are not independent, as they are generated based on the same randomly selected mini-batch of data at each stage. However, they are independent if the full data is used at each iteration.

  Finally, we note that conventional SGRLD  algorithms lack the scalability necessary for high-dimensional problems as computation of the Fisher information matrix can be very costly. For this reason, preconditioned SGLD \citep{Li2016PSGLD} approximates the Fisher information matrix using a diagonal matrix estimated based on the current gradient information only.
  The Langevinized EnKF forms a new type of SGRLD algorithm,
  where the forecast-analysis procedure enables 
  the Fisher information efficiently used in the simulation.

 \subsection{Nonlinear Inverse Problem} \label{Algsection2}

 Consider the nonlinear inverse problem
 \[
  y=  \mG(z)+\eta, \quad \eta \in N(0,\Gamma),
 \]
 where $y=(y_1^T,y_2^T,\ldots,y_B^T)^T$, $\Gamma=\diag[V,V,\ldots,V]$ is a diagonal block matrix, each block $V$ is of size $n\times n$, and $N=B n$ for some positive constant $B$. To reformulate the problem in the central dynamic system (\ref{EnKFasimeq}),
 we define an augmented state vector by an $n$-vector $\gamma_t$:
 \begin{equation} \label{augeq1}
 x_t  =\begin{pmatrix} z \\ \gamma_{t} \\ \end{pmatrix}, \quad 
 \gamma_{t}  = \mG_{t}(z)+u_{t}, \quad u_{t} \sim N(0, \alpha V), 
 \end{equation}
 where $\mG_{t}(\cdot)$ is the mean response function for a mini-batch of data drawn at stage $t$, 
 and $0<\alpha<1$ is a pre-specified constant. In this paper, $\alpha$ is called the variance splitting proportion.  
 
 Let $\pi(z)$ denote the prior density function of $z$, which is differentiable with respect to $z$.
 The conditional distribution of $\gamma_{t}$ is $\gamma_{t}|z \sim N( \mG_{t}(z), \alpha V)$, and then the joint density function of $x_t$ is  $\pi(x_t)=\pi(z) \pi(\gamma_{t}|z)$.
 Based on Langevin dynamics, a system identical to 
 (\ref{inveq1}) in symbol can be constructed for the nonlinear inverse problem: 
 \begin{equation} \label{augeq2}
 \begin{split} 
  x_{t} &= x_{t-1}+ \epsilon_t \frac{n}{2N}\nabla_x \log \pi(x_{t-1}) +  w_t \\ 
  y_{t} &= H_t x_{t}+v_{t}, \\ 
 \end{split}
 \end{equation}
where $w_t \sim N(0,\frac{n}{N} Q_t)$, $Q_t=\epsilon_t I_p$,
 $p$ is the dimension of $x_t$;  $H_t=(0,I)$ such that $H_t x_t=\gamma_{t}$;
 $v_t \sim N(0, (1-\alpha) V)$, which
 is independent of $w_t$ for all $t$; and $y_t$ is a mini-batch 
 sample randomly drawn from $\{y_1, y_2, \ldots, y_{B} \}$.
 
 By the variance splitting state augmentation approach, we have successfully converted the nonlinear inverse problem to 
 the dynamic system (\ref{EnKFasimeq}).
 In this approach, $z$, $\gamma_t$ and $y_t$ form a hierarchical model, and it is easy to derive that 
 \begin{equation} \label{condeq}
 \gamma_t|z,y_t \sim N(\alpha y_t+(1-\alpha) \mG_t(z),  \alpha (1-\alpha)V),
 \end{equation} 
 which will be used later in justifying efficiency of the Langevinized EnKF for nonlinear inverse problems. In particular, if $\alpha$ is close to 1, the conditional variance of $\gamma_t$ given $y_t$ and $z$ can be much smaller than $V$.   
 
 With the above formulation, the following variant of the Langevinized EnKF can be applied to
 simulate from the posterior distribution $\pi(x|y)$. The posterior samples of $\pi(z|y)$
 can be obtained from those of $\pi(x|y)$ via marginalization. 
  Let $x_{t,k}^{a,i}$ denote the $i$-th sample obtained at iteration $k$ of stage $t$. 
 In each stage of the algorithm, a mini-batch sample
  $y_t$ is drawn at random and the augmented state $x_{t}$ is updated for $\mK$ iterations.
  
\begin{algorithm} \label{EnKFnonlinear} (Langevinized EnKF for Nonlinear Inverse Problems)
 \begin{itemize}
 \item[0.] (Initialization) Start with an initial ensemble 
  $x_{1,0}^{a,1}, x_{1,0}^{a,2}, \ldots, x_{1,0}^{a,m}$,
  where $m$ denotes the ensemble size. 
  For each stage $t=1,2,\ldots, T$ (first loop), do the following: 

  \item[1.] (Subsampling) Draw without replacement a mini-batch sample, denoted by $(y_t,H_t)$, of size $n$ from
             the full dataset of size $N$. 
 
  \item For each iteration $k=1,2,\ldots,\mK$ (second loop), 
   set $Q_{t,k}=\epsilon_{t,k} I_p$,
   $R_t=2(1-\alpha) V$ and the Kalman gain matrix $K_{t,k}=Q_{t,k} H_{t}^T (H_{t} Q_{t,k} H_{t}^T+R_t)^{-1}$, 
   and do steps 2-3:  

  \item[2.] (Forecast) For each chain $i=1,2,\ldots,m$ (third loop), draw $w_{t,k}^i \sim N_{p}(0, \frac{n}{N} Q_{t,k})$ 
   and calculate
   \begin{equation} \label{Algeq1b}
    x_{t,k}^{f,i}=x_{t,k-1}^{a,i} + \epsilon_{t,k} \frac{n}{2N} \nabla \log \pi(x_{t,k-1}^{a,i}) +w_{t,k}^i, 
   \end{equation}
   where, if $k=1$, $x_{t,0}^{a,i}=x_{t-1,\mK}^{a,i}$ for its $z$-component and $x_{t,0}^{a,i}=y_t$ for its $\gamma$-component.

  \item[3.] (Analysis) For each chain $i=1,2,\ldots,m$ (third loop), draw $v_{t,k}^i \sim N_{n}(0, \frac{n}{N} R_t)$, 
    and calculate
   \begin{equation} \label{Algeq2b}
    x_{t,k}^{a,i} = x_{t,k}^{f,i}+K_{t,k} (y_{t}-H_{t} x_{t,k}^{f,i}-v_{t,k}^i) 
     \stackrel{\Delta}{=}x_{t,k}^{f,i} +K_{t,k} (y_{t}-y_{t,k}^{f,i}).
   \end{equation}
 \end{itemize}
\end{algorithm}

Compared to Algorithm \ref{EnKFnew}, this algorithm includes a few more iterations at each stage. The
added iterations help to drive $\gamma_t$ towards its 
conditional equilibrium (\ref{condeq}). 
Regarding the convergence of the algorithm, we have the following  remark:

 \begin{remark} \label{rem2a}
 In equation (\ref{Algeq1b}), $\nabla_x \log \pi(x_{t,k-1})$ is calculated based on a mini-batch of data: 
 \begin{equation} \label{remeq1}
 \nabla_x \log \pi(x_{t,k-1}) =\begin{pmatrix} \nabla_z \log \pi(z_{t,k-1})+\frac{1}{\alpha} \frac{N}{n}
  \nabla_z \mG_{t}(z_{t,k-1}) V^{-1} \left(\gamma_{t,k-1}- \mG_{t}(z_{t,k-1})\right) \\ 
  -\frac{1}{\alpha} V^{-1} \left(\gamma_{t,k-1}-\mG_{t}(z_{t,k-1})\right) \\ 
 \end{pmatrix},
 \end{equation}
 where the component $\nabla_z \log \pi(z_{t,k-1})+\frac{1}{\alpha} \frac{N}{n}
 \nabla_z \mG_{t}(z_{t,k-1}) V^{-1} \left(\gamma_{t,k-1}- \mG_{t}(z_{t,k-1})\right)$
 provides an unbiased estimate of $\nabla_z \log\pi(z|y)$ as implied by (\ref{condeq}).
 It follows from the 
 standard convergence theory of SGLD that the $z$-component of $x_{t,k}$ will converge to 
 $\pi(z|y)$, provided that $\epsilon_{t,k}$ satisfies the condition:  
  $\{\epsilon_{t,k}\}$ is a positive sequence, decreasing in $t$ and non-increasing in $k$,  such that for any $k \in \{1,2,\ldots, \mK\}$,  $\epsilon_{t,k}=O(1/t^{\varsigma})$ for some $0 < \varsigma <1$. 
  \end{remark}

  From the Kalman gain matrix $K_{t,k}$, it is easy to see that only the $\gamma$-component of $x_{t,k}$ is updated 
  at the analysis step. Intuitively, $\gamma_{t,k}$ can converge very fast, as it is updated with the second-order gradient information.  Therefore,
  $\mK$ is not necessarily very large. In this paper, $\mK=5$ is set as the default. Further, 
  by (\ref{condeq}),  $\nabla_z \log \pi(z_{t,k-1})+\frac{1}{\alpha} \frac{N}{n}
  \nabla_z \mG_{t}(z_{t,k-1}) V^{-1} (\gamma_{t,k-1}-\mG_{t}(z_{t,k-1}))$ represents an improved gradient estimator compared to the estimator $\nabla_z \log \pi(z_{t})+\frac{N}{n}
  \nabla_z \mG_{t}(z_{t})$ $V^{-1} \left(y_{t}- \mG_{t}(z_{t})\right)$ used by  
  SGLD in simulating from $\pi(z|y)$. It is easy to show that the two stochastic gradients have the same mean value, but the former has a smaller variance than the latter. More precisely, 
  \begin{equation} \label{effeq}
  \var\left (\frac{1}{\alpha} \frac{N}{n} \nabla_z \mG_{t}(z_{t}) V^{-1} \left(\gamma_{t}- \mG_{t}(z_{t})\right) \big| z_t \right) = \frac{1-\alpha}{\alpha} \var \left (  \frac{N}{n} \nabla_z \mG_{t}(z_{t}) V^{-1} \left(y_{t}- \mG_{t}(z_{t})\right) \big | z_t\right ),
  \end{equation}
  which implies that for nonlinear inverse problems, the Langevinized EnKF represents a variance reduction version of SGLD, and it is potentially more efficient than SGLD if $0.5 <\alpha<1$ is chosen. 
  In this paper, we set $\alpha=0.9$ as the default and initialized $\gamma_{t,0}$ by $y_t$ at each stage, which enhances the convergence of the simulation.
   

 \subsection{Data Assimilation with Linear Measurement Equation} \label{Algsection3}

 Consider the dynamic system (\ref{EnKFasimeq}), 
 for which we assume that at each stage $t$, $y_t$ can be partitioned into $B_t=N_t/n_t$ blocks such that
 $y_{t,k}=H_{t,k}x_t+v_{t,k}$, $k=1,2,\ldots B_t$,
 where $N_t$ is the total number of observations at stage $t$,
 $y_{t,k}$ is a block of $n_t$ observations randomly drawn from  $y_t=\{y_{t,1},\ldots, y_{t,B_t}\}$,
 $v_{t,k} \sim N(0,V_t)$ for all $k$, and $v_{t,k}$'s are mutually independent.
 
 To motivate the development of the algorithm, we first consider the Bayesian formula
 \begin{equation} \label{Bayeseq}
 \pi(x_t|y_{1:t})=\frac{f(y_t|x_t) \pi(x_t|y_{1:t-1})}{\int f(y_{t}|x_t)\pi(x_t|y_{1:t-1}) dx_t },
 \end{equation}
 which suggests that to get the filtering distribution $\pi(x_t|y_{1:t})$, the predictive distribution 
 $\pi(x_t|y_{1:t-1})$ should be used as the prior at stage $t$. To estimate the gradient $\nabla \log\pi(x_t|y_{1:t-1})$, we employ the following identity established in \cite{SongLiang2020}:
\begin{equation} \label{Fishereq1} 
\nabla_{\beta}\log\pi(\beta \mid D)=
 \int \nabla_{\beta}\log\pi(\beta \mid \gamma,D) \pi(\gamma \mid \beta, D) d\gamma,
\end{equation}
where $D$ denotes data, and $\beta$ and $\gamma$ denote two parameters of a posterior distribution $\pi(\beta,\gamma|D)$.  
By the identity, we have 
\begin{equation} \label{identeq} 
\begin{split}
\nabla_{x_t} \log \pi(x_t|y_{1:t-1}) & =\int \nabla_{x_t} \log\pi(x_{t}|x_{t-1},y_{1:t-1}) \pi(x_{t-1}|x_t,y_{1:t-1}) d x_{t-1} \\
&= \int \nabla_{x_t} \log\pi(x_{t}|x_{t-1}) \frac{\pi(x_{t-1}|x_{t},y_{1:t-1})}{\pi(x_{t-1}|y_{1:t-1})} \pi(x_{t-1}|y_{1:t-1}) d x_{t-1} \\
&= \int \nabla_{x_t} \log\pi(x_{t}|x_{t-1}) \omega(x_{t-1}|x_t) \pi(x_{t-1}|y_{1:t-1}) d x_{t-1},
\end{split}
\end{equation}
where $\omega(x_{t-1}|x_t)=
{\pi(x_{t-1}|x_{t},y_{1:t-1})}/{\pi(x_{t-1}|y_{1:t-1})} 
=\pi(x_{t}|x_{t-1})/\pi(x_t|y_{1:t-1}) \propto \pi(x_{t}|x_{t-1})$, as $\pi(x_t|y_{1:t-1})$ is a constant for a given particle $x_t$ and the data $\{y_1, y_2,\ldots,y_{t-1}\}$.
Therefore, given a set of samples $\mX_{t-1}=\{x_{t-1,1}, x_{t-1,2}, \ldots, x_{t-1,m'} \}$ 
 drawn from $\pi(x_{t-1}|y_{1:t-1})$, 
 an importance resampling procedure can be employed to draw samples from $\pi(x_{t-1}|x_t, y_{1:t-1})$. 
The importance resampling procedure can be executed very fast, as calculation of the importance weight $\omega(x_{t-1}|x_t)$ does not involve any data.

 With the above formulas and Langevin dynamics,  
 we can construct a dynamic system at stage $t$ as
 \begin{equation} \label{asimeq2}
 \begin{split} 
  x_{t,k} &= x_{t,k-1}- \epsilon_t \frac{n_t}{2N_t} U_t^{-1} (x_{t,k-1}-g(\tilde{x}_{t-1,k-1})) +  w_{t,k}, \\ 
  y_{t,k} &= H_{t,k} x_{t,k}+v_{t,k}, \\ 
 \end{split}
 \end{equation}
 for $k=1,2,\ldots$, where $x_{t,0}=g(x_{t-1})+u_t$;   $\tilde{x}_{t-1,k-1}$ denotes an approximate
 sample drawn from $\pi(x_{t-1}|x_{t,k-1}, y_{1:t-1})$ at iteration $k$ of stage $t$ 
 through the importance resampling procedure;
 $w_{t,k} \sim N(0, \frac{n_t}{N_t} \epsilon_{t,k} I_p)$, $Q_{t,k}=\epsilon_{t,k} I_p$,
 and $p$ is the dimension of $x_{t}$. Note that
 $-U_t^{-1} (x_{t,k-1}-g(\tilde{x}_{t-1,k-1}))$ forms  
 an unbiased estimator of $\nabla \log \pi(x_{t,k-1}|y_{1:t-1})$ only in the ideal case that 
the samples in $\mX_{t-1}$ follows from the distribution $\pi(x_{t-1}|y_{1:t-1})$ and the sample size $|\mX_{t-1}|$ is sufficiently large. Our theory for the convergence of the algorithm 
has taken care of the deviations from the ideal case as discussed in Remark \ref{rem3b}. In practice, 
$\mX_{t-1}$ can be collected at stage $t-1$ from iterations $k_0+1,\ldots, \mK$, where $k_0$ denotes the burn-in period and $\mK$ denotes the total number of iterations performed at each stage.
Applying the Langevinized EnKF at stage $t$ leads to the following algorithm:

\begin{algorithm}\label{EnKFasimb} (Langevinized EnKF for Data Assimilation)
 \begin{itemize}
 \item[0.] (Initialization) Start with an initial ensemble $x_{1,0}^{a,1}, x_{1,0}^{a,2}, \ldots, x_{1,0}^{a,m}$ drawn from  
  the prior distribution $\pi(x_1)$, where $m$ denotes the ensemble size.  
  Set $\mX_t=\emptyset$ for $t=1,2,\ldots,T$. Set $k_0$ as the common burnin period for each stage $t$.  
  For each stage $t=1,2,\ldots,T$ (first loop), 
   and for each iteration $k=1,2,\ldots,\mK$ (second loop), 
   do the following: 

 \item[1.] (Subsampling) Draw without replacement a mini-batch sample, denoted by $(y_{t,k},H_{t,k})$, of size $n_t$ from
          the full dataset of size $N_t$.
 
 \item Set $Q_{t,k}=\epsilon_{t,k} I_p$, $R_t=2V_t$, and the Kalman gain matrix
        $K_{t,k}=Q_{t,k} H_{t,k}^T (H_{t,k} Q_{t,k} H_{t,k}^T+R_t)^{-1}$. 
        For each chain $i=1,2,\ldots, m$ (third loop), do steps 2-5:

 \item[2.] (Importance resampling) If $t>1$,  
  calculate importance weights $\omega_{t,k-1,j}^{i}
  =\pi(x_{t,k-1}^{a,i}|x_{t-1, j})=\phi(x_{t,k-1}^{a,i}: g(x_{t-1,j}),U_t)$ for $j=1,2,\ldots,|\mX_{t-1}|$, 
  where $\phi(\cdot)$ denotes a Gaussian density, and $x_{t-1, j} \in \mX_{t-1}$ denotes the $j$th
  sample in $\mX_{t-1}$; if $k=1$, set $x_{t,0}^{a,i}=g(x_{t-1,\mK}^{a,i})+u_t^{a,i}$ and  $u_t^{a,i} \sim N(0, U_t)$. 
  Resample $s\in \{1,2,\ldots,|\mX_{t-1}|\}$ with a probability $\propto \omega_{t,k-1,s}^i$, 
 i.e., $P(S_{t,k,i}=s)=\omega_{t,k-1,s}^i/\sum_{j=1}^{|\mX_{t-1}|} \omega_{t,k-1,j}^i$, and   
 denote the sample drawn from $\mX_{t-1}$ by $\tilde{x}_{t-1,k-1}^i$. 
  
  \item[3.] (Forecast) Draw $w_{t,k}^i \sim N_{p}(0, \frac{n}{N} Q_{t,k})$. 
    If $t=1$, set 
   \begin{equation} \label{Algeq1c0}
    x_{t,k}^{f,i}=x_{t,k-1}^{a,i}-\epsilon_{t,k} \frac{n_t}{2N_t} \nabla\log\pi(x_{t,k-1}^{a,i})+w_{t,k}^i,
   \end{equation}
   where $\pi(\cdot)$ denotes the prior distribution of $x_1$.
   If $t>1$, set 
   \begin{equation} \label{Algeq1c}
    x_{t,k}^{f,i}=x_{t,k-1}^{a,i}-\epsilon_{t,k} \frac{n_t}{2N_t} U_t^{-1} (x_{t,k-1}^{a,i}-g(\tilde{x}_{t-1,k-1}^i))+w_{t,k}^i.
   \end{equation}

  \item[4.] (Analysis) Draw $v_{t,k}^i \sim N_{n}(0, \frac{n}{N} R_t)$ and set  
   \begin{equation} \label{Algeq2c}
    x_{t,k}^{a,i} = x_{t,k}^{f,i}+K_{t,k} (y_{t,k}-H_{t,k} x_{t,k}^{f,i}-v_{t,k}^i) 
     \stackrel{\Delta}{=}x_{t,k}^{f,i} +K_{t,k} (y_{t,k}-y_{t,k}^{f,i}).
   \end{equation}

  \item[5.] (Sample collection) If $k>k_0$, add the sample $x_{t,k}^{a,i}$ into the set $\mX_{t}$. 
 \end{itemize}
\end{algorithm}

 Since $\mK$ is usually not allowed to go to $\infty$ for such a dynamic system,  $x_{t,\mK}^{a,1},\ldots,x_{t,\mK}^{a,m}\sim \pi(x_t|y_{1:t})$ only 
approximately holds for each stage $t$. 
However, as shown in Theorem S2 of the Supplementary Material, a smaller approximation error at one stage  helps to reduce the approximation error at the next stage; and the approximation error becomes negligible as the number of stages increases, if the number of observations at each stage is reasonably large. 
To facilitate discussion, Theorem S2 is restated as a proposition in a less rigorous language in what follows.

\begin{prop} \label{prop2}
 Assume that for each stage $t$, 
the matrices $H_t$, $U_t$ and $V_t$ and the state propagator $g(x_t)$ satisfy mild regularity conditions (listed in the Appendix).
Then there exists a sufficiently large iteration number $\mK$ such that $\limsup_{t\to \infty} W_2(\tilde{\pi}(x_t|y_{1:t}),\pi(x_t|y_{1:t})) 
= O\left( \frac{1}{{\liminf_t N_t}}\right)$.

\end{prop}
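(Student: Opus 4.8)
The plan is to set up a one-stage error recursion in the Wasserstein-2 ($W_2$) metric and then unroll it. Write $\delta_t = W_2\big(\tilde{\pi}(x_t|y_{1:t}),\pi(x_t|y_{1:t})\big)$ for the quantity to be bounded, and $\delta_t^{\mathrm{pred}} = W_2\big(\tilde{\pi}(x_t|y_{1:t-1}),\pi(x_t|y_{1:t-1})\big)$ for the accuracy of the predictive ensemble that feeds into stage $t$ through the set $\mX_{t-1}$ and the importance resampling step. First I would decompose the stage-$t$ error into three contributions: (i) the bias in the forecast drift $-U_t^{-1}(x_{t,k-1}-g(\tilde{x}_{t-1,k-1}))$, which equals $\nabla\log\pi(x_{t,k-1}|y_{1:t-1})$ in expectation only when $\mX_{t-1}$ is an exact sample from $\pi(x_{t-1}|y_{1:t-1})$ and $|\mX_{t-1}|$ is large, so this term is controlled by $\delta_{t-1}^{\mathrm{pred}}$ together with a Monte Carlo term $O(|\mX_{t-1}|^{-1/2})$ and the variance of the importance weights $\omega(x_{t-1}|x_t)\propto\pi(x_t|x_{t-1})$; (ii) the within-stage relaxation error, i.e.\ the distance between the law of $x_{t,\mK}^{a}$ and $\pi(x_t|y_{1:t})$ when the forecast drift is replaced by the exact gradient of $\log\pi(x_t|y_{1:t-1})$; and (iii) the mini-batch subsampling noise from using blocks $y_{t,k}$ of size $n_t$ rather than the full $y_t$.

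For contribution (ii) I would reuse the SGRLD representation behind (\ref{SGRLDeq}) and Theorem S1: within stage $t$, Algorithm \ref{EnKFasimb} realizes a preconditioned SGLD recursion whose invariant law is $\pi(x_t|y_{1:t})$, a distribution with covariance of order $(S_t^{-1}+H_t^T\Gamma_t^{-1}H_t)^{-1}$, hence of order $1/N_t$ since $H_t^T\Gamma_t^{-1}H_t$ grows like $N_t$ while the predictive precision $S_t^{-1}$ stays $O(1)$; the $W_2$-contraction of Langevin dynamics with the curvature supplied by the regularity conditions in the Appendix then gives that after $\mK$ iterations the residual is at most $\rho^{\mK}$ times the initial gap plus a discretization bias, with $\rho<1$, so that taking $\mK$ large makes this term negligible relative to $1/N_t$. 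The structural fact I would exploit is that the exact Bayes map $\pi(\cdot|y_{1:t-1})\mapsto\pi(\cdot|y_{1:t})$ is itself a $W_2$-contraction with modulus $O(1/N_t)$, for the same precision-balance reason, so an error $\delta_{t-1}^{\mathrm{pred}}$ in the predictive ensemble propagates to $\delta_t$ only after damping by $O(1/N_t)$. Since the resampling step relates $\delta_{t-1}^{\mathrm{pred}}$ to $\delta_{t-1}$ up to the overhead in (i), combining (i)--(iii) with this contraction yields a recursion of the form $\delta_t \le \frac{C_1}{N_t}\,\delta_{t-1} + \frac{C_2}{N_t} + r_t(\mK)$, where $r_t(\mK)\to 0$ as $\mK\to\infty$ uniformly in $t$ under the stated conditions.

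To finish, I would first establish a uniform moment bound $\sup_{t}E\|x_{t,\mK}^{a,i}\|^2<\infty$ by a Lyapunov/drift argument for the composed forecast--analysis map — the analysis step contracts the mean while the forecast drift is dissipative under the conditions on $U_t$, $V_t$ and $g$ — which both legitimizes the $W_2$ manipulations and bounds $\delta_t$ uniformly. Then, choosing $\mK$ so that $\sup_t r_t(\mK)=O(1/\liminf_t N_t)$, I would unroll the recursion: because $C_1/N_t<1$ for all large $t$, the products $\prod_{r=s+1}^{t}(C_1/N_r)$ decay geometrically, and $\limsup_{t\to\infty}\delta_t \le \limsup_{t\to\infty}\big[(C_2/N_t + r_t(\mK))/(1-\sup_{s\text{ large}}C_1/N_s)\big] = O\!\left(1/\liminf_t N_t\right)$, which is the assertion.

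I expect the main obstacle to be contribution (i): showing that the importance-resampling step degrades the unbiasedness of the forecast drift by an amount that is genuinely of order $\delta_{t-1}^{\mathrm{pred}}$ (and not something worse), uniformly over the particle $x_{t,k-1}^{a,i}$ that conditions the resampling weights. This needs a uniform-in-$x_t$ bound on the $\chi^2$-divergence between $\pi(x_{t-1}|x_t,y_{1:t-1})$ and $\pi(x_{t-1}|y_{1:t-1})$, equivalently on $\sup_{x_t}E_{\pi(x_{t-1}|y_{1:t-1})}[\omega(x_{t-1}|x_t)^2]/\big(E[\omega(x_{t-1}|x_t)]\big)^2$, and this is precisely where the regularity conditions on the state propagator $g$ and the model-error covariance $U_t$ (Lipschitz $g$, eigenvalues of $U_t$ bounded away from $0$ and $\infty$) are used; the induction over $t$ must moreover be arranged so that this divergence bound, and with it $\delta_t^{\mathrm{pred}}$, does not deteriorate as the stages accumulate.
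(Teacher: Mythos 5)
Your overall architecture (view each stage as a preconditioned SGLD run with a biased, noisy drift; derive a one-stage $W_2$ recursion; unroll it) matches the paper's proof of Theorem S2, and your within-stage contraction step is essentially the paper's Lemma S2/Corollary S1 (a Dalalyan--Karagulyan-type bound with decaying learning rate). But the recursion you write, $\delta_t \le (C_1/N_t)\,\delta_{t-1} + C_2/N_t + r_t(\mK)$, rests on two claims that you do not prove and that the paper deliberately avoids. First, the ``structural fact'' that the Bayes update $\pi(\cdot|y_{1:t-1})\mapsto\pi(\cdot|y_{1:t})$ (equivalently, the prior-to-posterior map conditioning on the particle $x_{t,k-1}^{a,i}$) is a $W_2$-contraction with modulus $O(1/N_t)$, uniformly in the conditioning value, is exactly the delicate sensitivity-of-posterior-to-prior statement that strong log-concavity alone does not deliver; the paper does not use any such contraction. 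Second, the bound you flag as your ``main obstacle'' --- that the importance-resampling step perturbs the drift by only $O(\delta_{t-1}^{\mathrm{pred}})$ uniformly in $x_{t,k-1}^{a,i}$, via a uniform $\chi^2$ control of the weights $\omega(x_{t-1}|x_t)\propto\pi(x_t|x_{t-1})$ --- is left unresolved, so the crux of your recursion is missing.

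The paper's proof gets the $O(1/\liminf_t N_t)$ rate by a cruder but self-contained device: using condition (A.5) ($\sup_x\|g(x)\|\le M_g$ and $\lambda_{\min}(U_t)\ge\lambda_{t,s}$), the bias of the resampled drift term is bounded by the \emph{constant} $2M_g/\lambda_{t,s}$ (equation (S17)), valid for any approximating $\tilde\pi_t$ and any finite $\mX_{t-1}$, and since the filtering distribution has strong convexity $s_t\asymp N_t$ while the preconditioner eigenvalues scale like $n_t/N_t$, this constant bias translates into an equilibrium $W_2$ floor of order $M_g/N_t$ --- which is already the claimed rate. The inherited error $W_2(\tilde\pi_t,\pi_t)$ is then damped not by $1/N_{t+1}$ but by the factor $\varphi^{\varkappa}$ from running sufficiently many Langevin iterations per stage (this is where ``sufficiently large $\mK$'' enters), with assumption (A.6) providing the uniform bound $M$ on the fresh per-stage initialization gap and Lemma S6 transferring $W_2(\tilde\pi_t,\pi_t)$ through the propagator $g$; the previous-stage error also enters the drift \emph{variance} (via Lipschitz $g$), which only contributes a higher-order term $\propto W_2^2(\tilde\pi_t,\pi_t)/\sqrt{N_{t+1}^3 p}$. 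So your route could in principle yield a sharper recursion, but as written it hinges on the two unproved lemmas above; to complete a proof along the paper's lines you would replace your claim (i) by the constant bias bound $2M_g/\lambda_{t,s}$, drop the $O(1/N_t)$ Bayes-map contraction, and obtain the geometric damping of $\delta_{t-1}$ from $\varphi^{\varkappa} l<1$ together with (A.6), rather than from $C_1/N_t<1$.
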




\begin{remark} \label{rem3b} 
In Step 2, the probability $P(S_{t,k,i}=s)$ is calculated in a self-normalized importance sampling estimator, which is known to be consistent but biased. The bias has 
been taken into account in our proof of Theorem S2 as
implied by equation (S17) in the Supplementary Material.
\end{remark}

\begin{remark} \label{biasremark} We argue that
the convergence rate $O({1}/{\liminf_t N_t})$ is a reasonable precision. If the state $x_t$ is observed, then it is not difficult to derive that $\pi(x_{t+1})\sim N(g(x_t), \Gamma_t)$ and $\pi(x_{t+1}|y_{1:t+1})=\pi(x_{t+1}|y_{t+1})\sim N(\mu_{t+1}, K)$ for some $\mu_{t+1}$ and $K^{-1}=H_t^T\Sigma_t^{-1}H_t+\Gamma_t^{-1}$. When the eigenvalues of $\Sigma_t$ are bounded by constant, and $H$ is row-wisely independent, then the eigenvalues of $K$ have a lower bound of $O(1/N_t)$. In other words, even when the state  $x_t$ is known, the filtering distribution $\pi(x_{t+1}|y_{1:t+1})$ has a variation scale of at least $O(1/\sqrt{N_t})$. Compared to this variation scale, the estimation inaccuracy of  $W_2(\tilde\pi_{t+1},\pi_{t+1})$ is really negligible.
\end{remark}



It is known that weight degeneracy is an inherent default of sequential importance sampling (SIS) (also known as sequential Monte Carlo), especially when the dimension of the system is high. When it 
occurs, the importance weights concentrates on a few samples, the effective sample size is low, and the 
resulting importance sampling estimate is heavily biased. Fortunately, the Langevinized EnKF is essentially immune to this issue. In Langevinized EnKF, the importance resampling procedure  is to draw from $\mX_{t-1}$ a particle which matches a given particle $x_t$ in state propagation such that the gradient $\nabla_{x_t} \log \pi(x_t|y_{1:t-1})$ can be reasonably well estimated, which is then combined with 
the gradient of the likelihood function of the new data $y_t$ to have $x_t$ updated. By (\ref{Bayeseq}), $\pi(x_t|y_{1:t-1})$ works as the prior distribution of $x_t$ for the filtering distribution $\pi(x_t|y_{1:t})$. Therefore, the effect of the importance resampling procedure on the performance of the algorithm is limited if the sample size of $y_t$ is reasonably large at each stage $t$.
In contrast, the importance resampling procedure in SIS is to draw a particle from $\mX_{t-1}$ and treat the particle as from 
the filtering distribution $\pi(x_t|y_{1:t})$. 
For high-dimensional problems, the overlap between the high density regions of the neighboring stage filtering distributions can be very small, which naturally 
causes the weight degeneracy issue.

In summary, the importance resampling step of the Langevinized EnKF aims to draw a sample for the prior $\pi(x_t|y_{1:t-1})$ used at each stage $t$, 
while that of SIS aims to draw a sample for the posterior $\pi(x_t|y_{1:t})$. In consequence, the Langevinized EnKF is less bothered by the weight degeneracy issue.

\subsection{Data Assimilation with Nonlinear Measurement Equation} \label{Algsection4}

To apply the Langevinized EnKF to the case that the measurement equation is nonlinear, 
the variance splitting state augmentation approach proposed in Section \ref{Algsection2} can be used. For simplicity, we describe the algorithm under the full data scenario. Extension of the algorithm to the mini-batch scenario  is straightforward.  Consider the dynamic system
 \begin{equation} \label{Nasim}
 \begin{split}
   z_{t}& =g(z_{t-1})+u_{t}, \quad u_{t}\sim N(0, U_{t}), \\
   y_t & =h(z_t)+\eta_t, \quad \eta_{t} \sim N(0, \Gamma_t),   \\
 \end{split}
 \end{equation}
 where both $g(\cdot)$ and $h(\cdot)$ are nonlinear.
 As for nonlinear inverse problems, we can
 augment the state $z_t$ by $\gamma_{t}$ at each stage $t$, where
\[
   \gamma_{t}=h(z_t)+\xi_t, \quad \xi_{t} \sim N(0, \alpha_t \Gamma_t),
 \]
 for some constant $0<\alpha_t<1$.  
 Let $x_t^T=(z_t^T,\gamma_{t}^T)$ and thus
 $\pi(x_t|z_{t-1}, y_{1:t-1})=\pi(z_t|z_{t-1},y_{1:t-1}) \pi(\gamma_{t}|z_t)$.
 Similar to (\ref{asimeq2}), at stage $t$,  we have the following dynamic system
 \begin{equation} \label{augeqnon}
 \begin{split} 
  x_{t,k} &= x_{t,k-1}+ \frac{\epsilon_t}{2} \nabla_z \log \pi(x_{t,k-1}|\tilde{x}_{t-1,k-1},y_{1:t-1}) +  w_{t,k} \\ 
  y_{t,k} &= H_t x_{t,k}+v_{t,k}, \\ 
 \end{split}
 \end{equation}
where $\tilde{x}_{t-1,k-1}$ denotes a sample drawn from $\pi(x_{t-1}|x_{t,k-1},y_{1:t-1})$, 
$x_{t,k}$ denotes the estimate of $x_t$ obtained at iteration $k$ of stage $t$, 
 $y_{t,k}=y_t$ for all $k=1,2,\ldots, \mK$, $H_t=(0,I)$ such that $H_t x_{t}=\gamma_{t}$,
 $w_{t,k} \sim N(0,\epsilon_{t,k} I_p)$, $p$ is the dimension of $x_t$,
 and $v_{t,k} \sim N(0, (1-\alpha_t) \Gamma_t)$.
 Algorithm \ref{EnKFasimb} can then be applied to simulate samples from
 $\pi(x_t|y_{1:t})$ and thus the samples from $\pi(z_t|y_{1:t})$ can be obtained by marginalization. 
 In mapping the ensemble from stage $t-1$ to stage $t$, $z_{t}$ can be set according to the 
 state evolution equation given in (\ref{Nasim}), while $\gamma_{t}$ can be set to $y_t$. 
 The convergence of the algorithm follows from Theorem S2 and Remark \ref{rem2a}. 


\section{Numerical Studies for Static Learning Problems}

 This section illustrates the performance of the Langevinized EnKF as a general 
 Monte Carlo algorithm for complex inverse problems. 
 Two big data examples are considered, one 
 is Bayesian variable selection for linear regression, and the other is Bayesian 
 variable selection for general nonlinear systems modeled by deep neural networks (DNNs). 
  
\subsection{Bayesian Variable Selection for Large-Scale Linear Regression} \label{LinearEx}

 Consider the linear regression
 \begin{equation} \label{Lineq}
  \bY=\bZ \bbeta +\varepsilon,
 \end{equation}
 where $\bY \in \mR^N$, $\bZ=(Z_1,Z_2,\ldots,Z_p)\in \mR^{N\times p}$, $\bbeta \in \mR^p$, and $\varepsilon\sim N(0, I_N)$.
 An intercept term has been implicitly included in the model.
 We generate ten datasets from this model with $N=50,000$, $p=2,000$, and
 $\bbeta=(\beta_1,\beta_2,\ldots,\beta_{p})=(1,1,1,1,1,-1,-1,-1,0,\ldots,0)$. That is, the first 8 variables are true
  and the others are false. Each variable in $Z$ has a marginal distribution
 of $N(0,I_N)$, but they are mutually correlated with a correlation coefficient of 0.5.

 To conduct Bayesian analysis, we consider the following hierarchical mixture Gaussian 
 prior distribution which, with the latent variable $\xi_i \in \{0,1\}$, can be 
  expressed as 
 \begin{equation} \label{prioreq1}
 \begin{split}
 & \beta_i |\xi_i  \sim (1-\xi_i) N(0, \tau_1^2)+\xi_i N(0,\tau_2^2), \\
 &  P(\xi_i=1) =1-P(\xi_i=0)=p_0, \\
 \end{split} 
 \end{equation} 
 for $i=1,2,\ldots,p$. Such a prior distribution has been widely 
 used in the literature of Bayesian variable selection, see e.g. \cite{GeorgeM1993}. 
 To apply the Langevinized EnKF to this problem, we first 
 integrate out $\xi_i$ from the prior distribution (\ref{prioreq1}), which leads to the marginal distribution 
 \begin{equation} \label{prioreq2}
  \beta_i \sim (1-p_0) N(0,\tau_1^2) +p_0 N(0,\tau_2^2),  \quad i=1,2,\ldots,p,
 \end{equation}  
 for which the log-prior density is differentiable. 
 Algorithm \ref{EnKFnew} can then be applied to simulate 
 from the posterior distribution $\pi(\bbeta|\bY,\bZ)$. 
 In the simulation, we set $p_0=1/p=0.0005$, $\tau_1^2=0.01$ and $\tau_2^2=1$ for
 the prior distribution, and set the ensemble size $m=100$, the mini-batch size $n=100$,
 and the learning rate $\epsilon_t=0.2/\max\{t_0, t\}^{0.6}$, where $t_0= 100$.
  The algorithm was run for 10,000 iterations, which 
 cost 375 CPU seconds on a personal computer with 2.9GHz Intel Core i7 CPU and 16GB RAM. All computation reported in this paper were done on the same computer.
 
 To accomplish the goal of variable selection, we consider the factorization of the posterior distribution 
 $\pi(\bbeta,\bxi|\bY) \propto \pi(\bY|\bbeta) \pi(\bbeta|\bxi) \pi(\bxi)$, where $\bxi=(\xi_1,\xi_2,\ldots,\xi_p)$. Under 
 {\it a priori} independent assumptions for $\beta_i$'s and $\xi_i$'s,  
  we are able to draw posterior samples of $\bxi$ from the following distribution:
\begin{equation} \label{connectioneq}
  \pi(\xi_{ti}=1|\beta_{ti},\bY) = \frac{a_{ti}}{a_{ti}+b_{ti}}, \quad i=1,2,\ldots,p, 
\end{equation}
 where $a_{ti}=\frac{p_0}{\tau_2} \exp(-\beta_{ti}^2/2\tau_2^2)$,
 $b_{ti}=\frac{1-p_0}{\tau_1} \exp(-\beta_{ti}^2/2 \tau_1^2)$, 
 and $\beta_{ti}$ denotes the posterior sample of $\beta_i$ drawn by Algorithm \ref{EnKFnew}
 at stage $t$. Here we denote by $\bbeta_t=(\beta_{t,1},\beta_{t,2},\ldots,\beta_{t,p})$  
 a posterior sample of $\bbeta$ drawn by Algorithm \ref{EnKFnew} at the analysis step of stage $t$.  
 
 Figure \ref{fig:linear} summarizes the variable selection results for one data set. 
 The results for other data sets are similar. 
 Figure \ref{fig:linear}(a) shows the sample trajectories of $\beta_1,\beta_2,\ldots, \beta_9$, which are averaged over the ensemble
 along with iterations.   
 All the samples converge to their true values within 100 iterations, taking about 3.7 CPU seconds.
 This is extremely fast!
 Figure \ref{fig:linear}(b) shows the marginal inclusion probabilities of 
 the variables $Z_1, Z_2,\ldots, Z_{p}$. 
  From this graph, we can see that each of the 8 true variables (indexed 1-8) has a  
  marginal inclusion probability close to 1, while each false variable has a marginal inclusion probability close to 0. 
 Figure \ref{fig:linear}(c) shows the scatter plot of the response variable
  and its fitted value for the training data, 
 and Figure \ref{fig:linear}(d) shows the scatter plot of the response variable and its predicted value for  
 200 test samples generated from the model (\ref{Lineq}).
 In summary, Figure \ref{fig:linear} shows that the Langevinized EnKF is able to 
 identify true variables for large-scale linear regression and, moreover, it is extremely efficient. 
 
 \begin{figure}[htbp]
      \centering
      \includegraphics[width=5.0in]{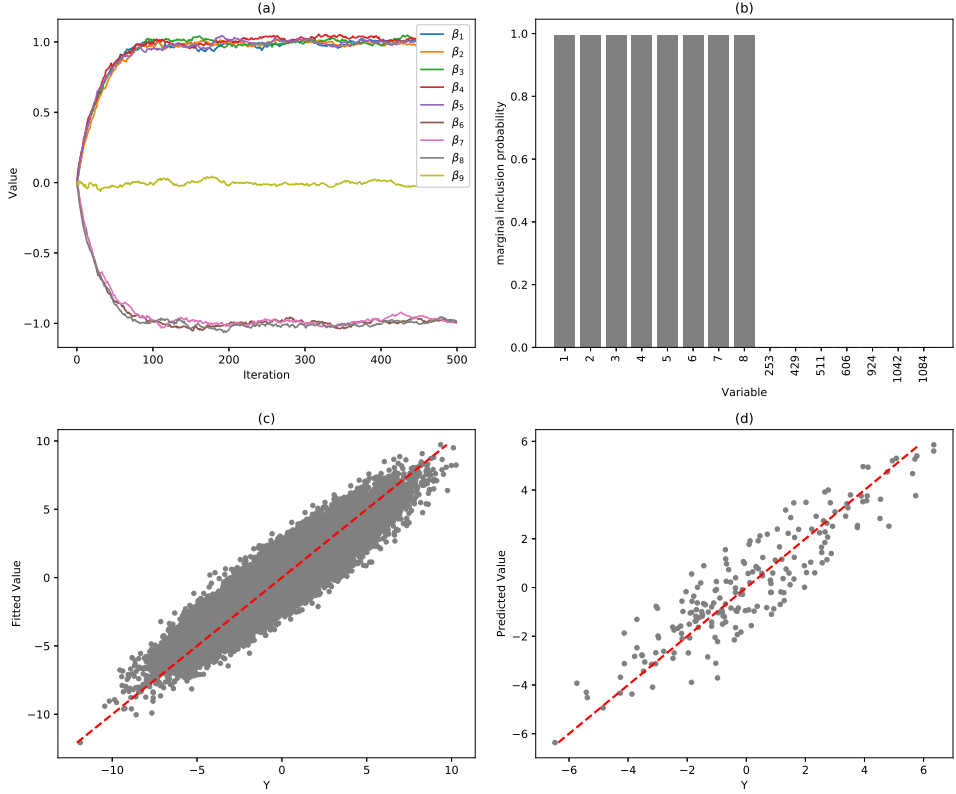}
      \caption{\quad Langevinized EnKF for large-scale linear regression with 500 iterations: (a) Trajectories of 
  $\beta_1,\ldots,\beta_9$,
 where $\beta_1,\ldots,\beta_5$ have a true value of 1,  $\beta_6,\ldots,\beta_8$ have 
 a true value of $-1$,  and $\beta_9$ has a true value of 0.  
   (b) marginal inclusion probabilities of all covariates $X_1,\ldots, X_p$, where the covariates are shown in the rank of marginal inclusion probabilities; 
   (c) scatter plot of the response Y and the fitted value for training samples; 
   and (d) scatter plot of the response Y and the predicted value for test samples. }
  \label{fig:linear}
\end{figure}

 

 For comparison, SGLD \citep{Welling2011BayesianLV}, preconditioned SGLD \citep[pSGLD,][]{Li2016PSGLD}, and 
stochastic gradient Nos\'e-Hoover thermostat \citep[SGNHT,][]{Ding2014BayesianSU}
 were applied to this example. For these algorithms, the learning rates have been tuned to
 their maximum values such that the simulation converges fast while not exploding.
 For SGLD, we set $\epsilon_t=4\times10^{-6}/\max\{t_0, t\}^{0.6}$ with $t_0 = 1000$; 
  for pSGLD, we set
 $\epsilon_t=5\times10^{-6}/\max\{t_0, t\}^{0.6}$ with $t_0 = 1000$; and for SGNHT, we set $\epsilon=0.0001$.
 Other than the learning rate,  
 pSGLD contains two more tuning parameters, 
 which control the extremes of the
 curvatures and the balance of the weights of the historical and current gradients, respectively. 
 They both were set to the default values as suggested by \cite{Li2016PSGLD}.  
 SGNHT also contains an extra parameter, the so-called diffusion parameter, for which different values, including 1, 5, 10, and 20, have been tried. 
 The algorithm performed very similarly with each of 
 the choices. Figures \ref{Linearfig} summarizes the results of the algorithm
 with the diffusion parameter being set to $10$.
 
For a fair comparison, we ran SGLD for 20,000 iterations, pSGLD for 10,000 iterations, and SGNHT for 15,000 iterations, which took about 387 CPU seconds, 410 CPU seconds and 380 CPU seconds, respectively.
All these algorithms cost slightly longer CPU time than the Langevinized EnKF.
 Figure \ref{Linearfig} compares the trajectories of $(\beta_1,\beta_2,\ldots,\beta_9)$ produced by the three 
 algorithms. It indicates the Langevinized EnKF can converge significantly faster than
 SGLD, pSGLD, SGNHT for this example, which can be explained as the advantage of using the second-order gradient information in the Langevinized EnKF. 

\begin{figure}[htbp]
\begin{center}
\begin{tabular}{c}
\epsfig{figure=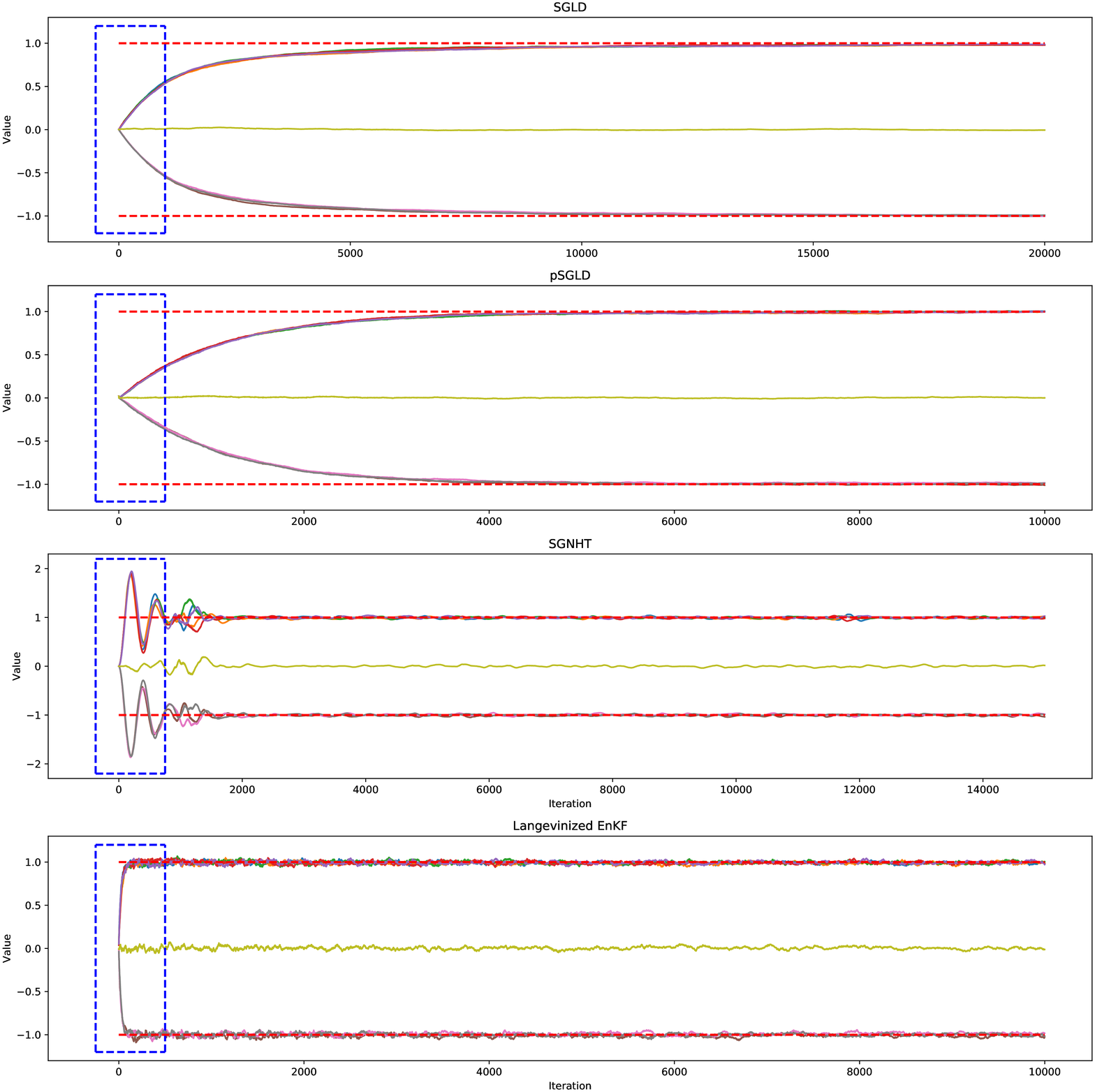,height=4.0in,width=5.0in,angle=0} 
\end{tabular}
\end{center}
 \caption{Convergence trajectories of SGLD, pSGLD, SGNHT and Langevinized EnKF for a large-scale linear regression example:  Trajectories of $(\beta_1,\beta_2,\ldots,\beta_9)$ produced by SGLD (upper), pSGLD (upper middle), SGNHT (lower middle),
  and Langevinized EnKF (lower) in their whole runs, where the blue rectangle highlights the first 
  5\% iterations of the runs.} 
\label{Linearfig}
\end{figure}

 Since the Langevinized EnKF is essentially a parallel SGRLD algorithm, we have also compared it with parallel SGLD, pSGLD and SGNHT. The results were presented in the Supplementary Material. The comparison gave a much larger margin that the Langevinized EnKF significantly 
 outperforms parallel runs of these algorithms.

\subsection{Bayesian Nonlinear Variable Selection with Deep Neural Networks}

The goal of this example is to show that the Langevinized EnKF can be used to train Bayesian DNNs and as one of its important application, the Bayesian DNN can be used in variable selection 
for nonlinear systems.   
We generated 10 data sets from the nonlinear regression model
\begin{equation}
y = \frac{10 x_1^2}{1 + x_2^2} + 5 \sin(x_3x_4) + x_5+\epsilon, 
\end{equation}
where $\epsilon \sim N(0,I_N)$. The variables $x_1$, $\cdots$, $x_5$ together with additional 94 variables were generated as in Section \ref{LinearEx}: all the variables
are mutually correlated with a correlation coefficient of 0.5, and each has a marginal distribution of $N(0,I_N)$. Each data set consists of $N=2,000,000$ samples for training and 200 samples for testing.
 
We modeled the data using a 3-hidden-layer neural network, with $p = 100$ input units including a bias unit 
(for the intercept term), 5 units on each hidden layer and one unit on the output layer. The activation function is LeakyRelu given by $LeakyRelu(x)=1_{(x<0)}(0.1 x)+1_{(x>=0)}(x)$. That is, we modeled the data by the neural network function 
 \[
 y = LeakyRelu(LeakyRelu(LeakyRelu(X\cdot W_1)\cdot W_2+b_1)\cdot W_3+b_2) \cdot W4 + b_3 +\epsilon,
 \]
 where $X=(x_0, x_1, x_2,\ldots, x_{99})$, 
 $W_1 \in \mR^{p\times 5}$, $W_2 \in \mR^{5\times 5}$, $W_3 \in \mR^{5\times 5}$ and $W_4\in \mR^{5 \times 1}$ represent the weight matrices at different layers of the neural network, and
 $b_1\in \mR^{5\times 1}$, $b_2 \in \mR^{5\times 1}$, $b_3 \in \mR^{5\times 1}$ represent bias vectors at different hidden layers. For each element of $W_i$'s and $b_i$'s, we assume 
 a hierarchical mixture Gaussian prior as given in (\ref{prioreq1}). 
 For the prior hyperparameters, we set  $p_0 = 0.01$, $\tau_1^2 = 0.05$, $\tau_2^2 = 1$.
 Conditioned on each posterior sample of $(W_1,W_2,W_3,W_4,b_1,b_2,b_3)$, a sparse neural network can be drawn by simulating an indicator variable for each potential connection of the neural network according to the Bernoulli distribution given in (\ref{connectioneq}).  
 
  To identify important input variables, for each sparse neural network drawn above, we define four $\bxi$-matrices $G_1$, $G_2$, $G_3$ and $G_4$, which correspond to $W_1$, $W_2$, $W_3$ and $W_4$, respectively. Each element of the $\bxi$-matrix indicates the status, existing or not, of the corresponding connection. 
  Define $G=G_1 \cdot G_2 \cdot G_3 \cdot G_4$ as a $p$-vector and further truncates its each element to 1 if greater than 1. That is, each element of $G$ indicates the effectiveness of the corresponding input variable.
  Averaging $G$ over the sparse network samples produces the marginal inclusion probabilities of the input variables.  
  
  Algorithm \ref{EnKFnonlinear} was first applied to this example with the variance split proportion $\alpha=0.9$, the ensemble size $m=20$, the mini-batch size $n=100$, the iteration number $\mK=5$ for each stage, and the total number of stages $T=20,000$.
  The learning rate was set to $\epsilon_{t,k}=4\times10^{-4}/\max\{k_0, k\}^{0.9}$ with $k_0 = 1$ for $k = 1,\cdots, \mK$. Each run took about 4000 CPU seconds. 
  For stochastic gradient MCMC algorithms, to avoid simulations to explode and meanwhile to achieve fast convergence, 
  we often need to start with a very small learning rate and then decrease it very 
  slowly or even keep it as a constant. This note also applies to the 
  remaining examples of this paper. 
  
 \begin{figure}[htbp]
      \centering
      \includegraphics[width = 5.0in]{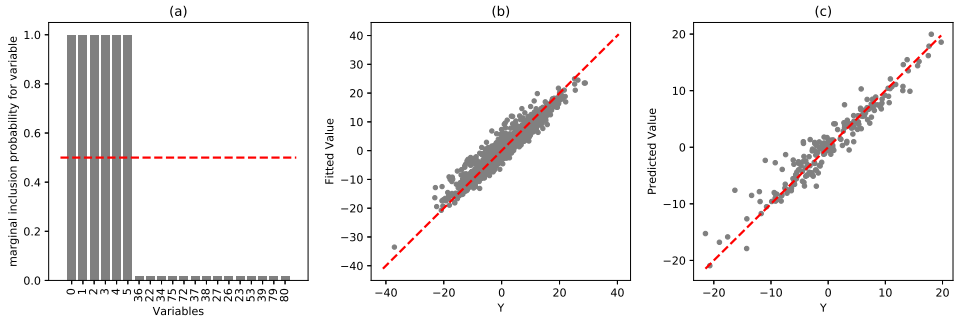}
      \caption{\quad Langevinized EnKF for the nonlinear variable selection example: 
   (a) marginal inclusion probabilities of the variables, where the variables are shown in the rank of marginal inclusion probabilities; 
   (b) scatter plot of the response Y and the fitted value for 2,000 randomly selected training samples; 
   and (c) scatter plot of the response Y and the predicted value for 200 test samples. }
  \label{fig:nonlinear}
\end{figure}

  Figure \ref{fig:nonlinear} summarizes the results for one dataset. The results for the other datasets are similar.
 Figure \ref{fig:nonlinear}(a) shows the marginal inclusion probabilities of all input variables $x_1, x_2, \ldots, x_p$ with a cutoff value of 0.5 (red dash line).
 The results are very encouraging: Each of the true variables (indexed 1-5) has a
  marginal inclusion probability close to 1, while each of the false variables has a 
  marginal inclusion probability close to 0. 
 Figure \ref{fig:nonlinear}(b) shows the scatter plot of the response variable and its fitted value for 2,000 randomly chosen training samples.
 Figure \ref{fig:nonlinear}(c) shows the scatter plot of the response variable and its predicted value for  200 test samples. 
 In summary, Figure \ref{fig:nonlinear} shows that the Langevinized EnKF provides a feasible algorithm for training Bayesian deep neural networks, through which important variables can be identified for nonlinear systems. 
  
For comparison, SGLD and pSGLD were applied to this example. Each algorithm was run in parallel with 20 chains and each chain consisted of 100,000 iterations. For both algorithms, we set the learning rate as 
$\epsilon_t=1\times10^{-4}/\max\{t_0, t\}$ with $t_0 = 10,000$, which has been tuned to its 
 maximum value such that the simulations converge very fast but won't explode. Each run of SGLD cost about 4000 CPU seconds, and each run of pSGLD cost about 5600 CPU seconds. For the Langevinized EnKF, we measured the fitting and prediction errors, in mean squared errors (MSEs), at the last iteration of each stage. 
 For SGLD and pSGLD, we measured the fitting and prediction errors
 at every 5th iteration. Figure \ref{fig:nonlincomp_parallel} (a) shows the paths of the best fitting and prediction MSEs produced by the time by each chain of pSGLD, SGLD, and Langevinized EnKF. Figure \ref{fig:nonlincomp_parallel} (b) shows the path of the best fitting and prediction MSEs by the time produced by respective algorithms. Both plots indicates the superiority of the Langevinized EnKF over SGLD and pSGLD: the Langevinized EnKF tends to produce smaller fitting and prediction errors than SGLD and pSGLD for this example.
 
 \begin{figure}[htbp]
		\centering
		\begin{tabular}{cc}
		\includegraphics[height=2.5in,width = 3.25in]{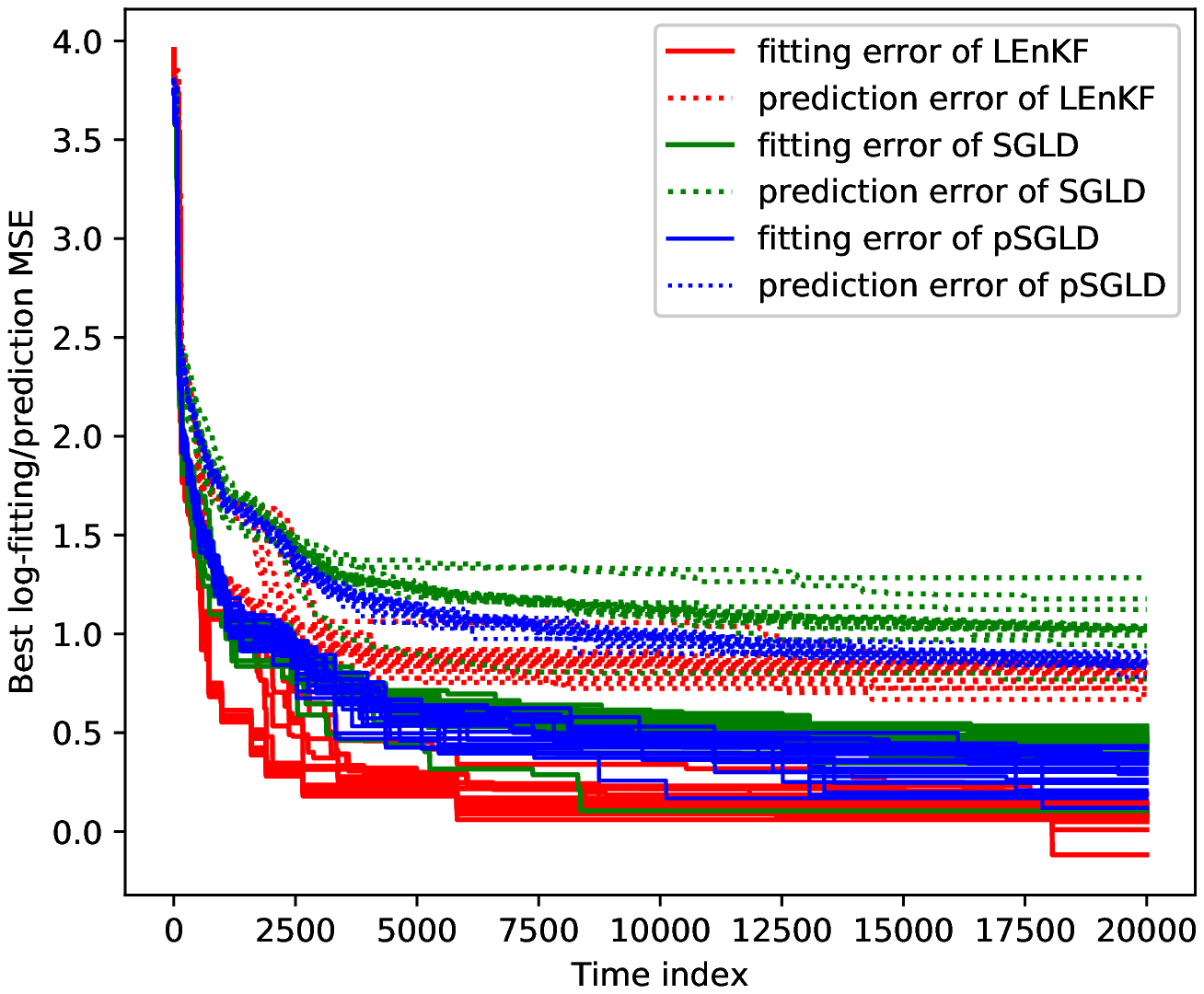} & 
		\includegraphics[height=2.5in,width = 3.25in]{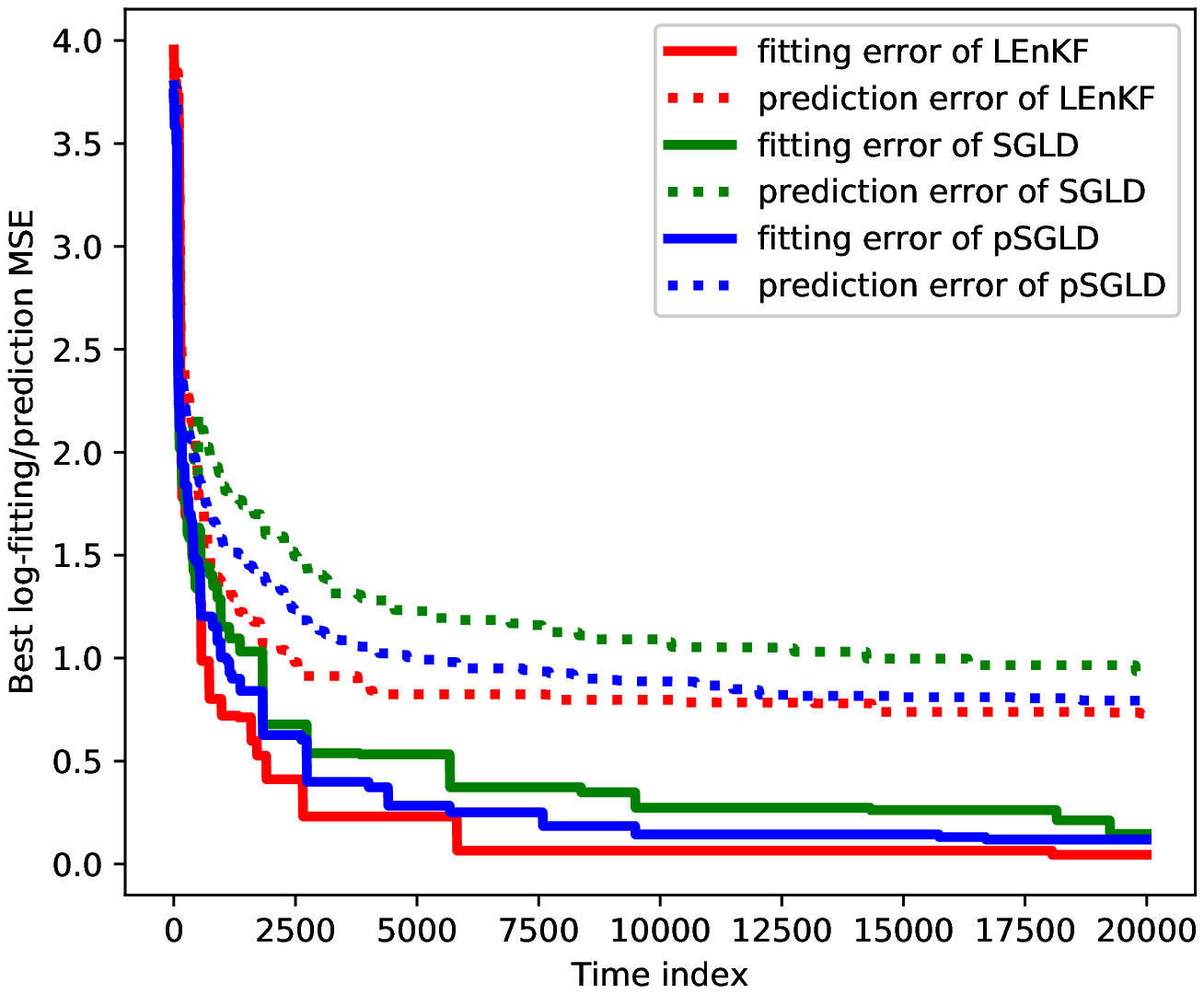} 
		\end{tabular}
		\caption{Comparison of the best fitting and prediction MSEs by the time (plotted in logarithm): (a) by each chain of SGLD, pSGLD and Langevinized EnKF; 
		(b) by ensemble averaging of SGLD, pSGLD and averaging Langevinized EnKF.}
		\label{fig:nonlincomp_parallel}
\end{figure}

Table \ref{tab:nonlincomp} summarizes the results of Langevinized EnKF for 10 datasets under different settings of $\mK$ and $\alpha$, along with comparisons with SGLD and pSGLD. 
For each dataset, we calculated  
 ``MeanMSFE'' by averaging
 the fitting MSEs over the stages $t=15,001,\ldots, 20,000$,  and ``MeanMSPE'' by averaging the prediction MSEs over the stages $t=15,001,\ldots, 20,000$. 
Then their values were averaged over 10 datasets and denoted by 
``Ave-MeanMSFE'' and ``Ave-MeanMSPE'', respectively. The comparisons indicate that for this example, the Langevinized EnKF outperforms SGLD and pSGLD when $\mK$ and $\alpha$ are chosen appropriately, and it prefers to run with slightly large values of $\mK$ and $\alpha$. For this example, the Langevinized EnKF performed similarly with $\mK=5$ and $\mK=10$, while much better than 
 with $\mK=1$ and 2; and 
 the choice of $\alpha$ affects not much on its training error, but its prediction error: A larger value of $\alpha$ tends to lead to a smaller prediction error. 
As implied by (\ref{effeq}), running the Langevinized EnKF with a mini-batch size of $n$
is equivalent to running SGLD with a mini-batch size of 
$n\alpha/(1-\alpha)$. Indeed, as shown in 
Table \ref{tab:nonlincomp}, SGLD with $n=900$ produced  similar training and test errors as Langevinized EnKF with $n=100$, $\alpha=0.9$ and $\mK=5$.  
How the batch size affects on the prediction error is  an interesting problem in machine learning, which can
be partially explained by Theorem 1 of \cite{LiangCheng2013} which accounts for the effect of batch size on the convergence of SGD,
but deserves a further study. 


 \begin{table}[htbp]
    \caption{Comparison of Langevinized EnKF, parallel SGLD and parallel pSGLD for the nonlinear regression example, where the numbers in the parentheses denote the standard deviations of the averaged MeanMSFE and MeanMSPE values over 10 datasets. The CPU time was measured 
    on a personal computer with RAM16GB and 2.9GHz Intel Core i7. 
    }  
    \label{tab:nonlincomp}
    \centering
    \begin{tabular}{ccccccc}
         \toprule
        & $n$  & $\mK$ & $\alpha$ & Ave-MeanMSFE & Ave-MeanMSPE  & CPU(s) \\ \midrule
                 & 100      & 5 & 0.1 & 2.4749(0.11)  & 3.4482(0.16) & 4050.41 \\ 
                 &100 & 5 & 0.3 &  2.4171(0.09) & 3.3056(0.13) &  4171.32\\
                &100     & 5 & 0.5 & 2.4472(0.10)  & 3.1841(0.13) & 4082.99 \\
  & 100 & 5 & 0.8 & 2.5852(0.09)  & 3.1449(0.12) & 4040.84\\
    Langevinized EnKF & 100 & 5 & 0.9 & 2.4319(0.10) & 3.1437(0.14) &	4041.91\\ 
 & 100 & 10 & 0.9 & 2.3422(0.07) & 3.1469(0.17) & 7766.84 \\
 & 100 & 2 & 0.9 & 2.8509(0.17) & 4.4075(0.27)  & 1943.71\\
  & 100 & 1 &  0.9 & 3.6848(0.25) & 4.8561(0.22) & 1093.09\\
 \midrule
     & 100 &  & &	2.4747(0.12) &	3.3877(0.12) &	4111.09\\
\raisebox{1.5ex}{SGLD} & 900 &  & &	2.4255(0.17) &	3.1997(0.13)&	5772.94\\ \midrule
         pSGLD &100 &  & &	2.6317(0.11)&	3.3988(0.10) &	5624.10\\
         \bottomrule
    \end{tabular}
\end{table}

 In summary, this example shows that the Langevinized EnKF provides a more efficient algorithm than SGLD and pSGLD for training a Bayesian DNN, and that
 Bayesian DNN can be used in variable selection for 
 nonlinear systems by imposing a mixture Gaussian prior on each weight of the DNN.
 For simplicity, we consider only the case that the sample size is larger 
 than the DNN size, i.e., total number of connections of the fully connected DNN.  
 As shown in \cite{SunSLiang2019}, such a prior also works for the case that the sample size is much smaller than the DNN size. In this case, 
 posterior consistency and variable selection selection still hold. 

\section{Numerical Studies for Dynamic Learning Problems}

This section illustrates the performance of the Langevinized EnKF as a 
 particle filtering algorithm for dynamic problems. 
 Two examples are considered under the Bayesian framework.  
 One is uncertainty quantification for the Lorenz-96 model
 which has been considered as the benchmark example for weather forecasting. 
 The other is on-line learning with Long short-term memory (LSTM) networks. 

\subsection{Uncertainty Quantification for the Lorenz-96 Model} \label{LorenzSect}

The Lorenz-96 model was developed by Edward Lorenz in 1996
to study difficult questions regarding predictability in weather forecasting \citep{Lorenz1996}. 
 The model is given by
\[
\frac{d x^{i}}{d t}=(x^{i+1}-x^{i-2}) x^{i-1} - x^{i}+F, \quad i=1,2, \cdots, p,
\]
where $F=8$, $p=40$, and it is assumed that $x^{-1} = x^{p-1}$, $x^{0} = x^{p}$, and $x^{p+1}=x^{1}$.
Here $F$ is known as a forcing constant, and $F = 8$ is a common value known to cause chaotic behavior.
In order to generate the true state $\bX_{t}=(X^1_{t},\ldots,X^{p}_{t})$ for
 $t=1,2,\ldots,T$, we initialized $\bX_{0}$ by setting $X^i_{0}$ to $20$
for all $i$ but adding to $X^{20}_{0}$ a small perturbation of 0.1; 
we solved the differential equation using the fourth-order Runge-Kutta numerical method
with a time interval of $\Delta t=0.01$; and for each $i$ and $t$, we added to $X^i_t$ a random noise  
generated from $N(0,1)$. At each stage $t$,
data was observed for half of the state variables and masked with a standard Gaussian random noise, i.e.,
\[
y_t = H_t \bX_{t} + \epsilon_t, \quad t=1,2,\ldots,T,
\]
where $\epsilon_t \sim N(0, I_{p/2})$, and $H_t$ is a random selection matrix.
Figure \ref{fig:chaos} shows the simulated path of the 
partial state variables $(X_t^1, X_t^2, X_t^3)$ for $t=1,2,\ldots,T$, 
whose chaotic behavior indicates the challenge of the problem.     

\begin{figure}[htbp]
  \centering
  \includegraphics[width=4in]{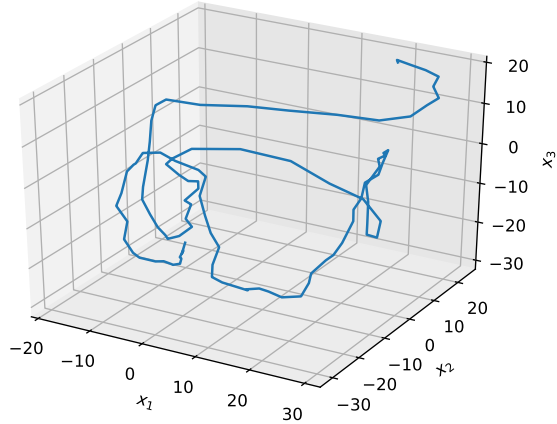}
  \caption{\quad Chaotic path of the partial state variables $(X_t^1, X_t^2, X_t^3)$ for $t=1,2,\ldots, 100$, 
     simulated from the Lorenz-96 Model.}
  \label{fig:chaos}
\end{figure}

Algorithm \ref{EnKFasimb} was applied to this example with the ensemble size $m=50$, 
the iteration number $\mK=20$, $k_0=\mK/2$,
and the learning rate $\epsilon_{t,k}=0.5/k^{0.9}$ for $k=1,2,\ldots,\mK$ and $t=1,2,\ldots,T$.
At each stage $t$, the state was estimated by averaging over the ensembles generated in iterations $k_0+1, k_0+2,\ldots, \mK$. The accuracy of the estimate was measured using the root mean-squared error (RMSE)
defined by
\[
RMSE_t =|| \hat{\bX}_{t}- {\bX}_t||_{2}/\sqrt{p},
\]
where $\hat{\bX}_t$ denote the estimate of $\bX_t$.
For comparison, the EnKF algorithm was also applied to this example
with the same ensemble size $m=50$. To be fair, it was run in a similar way to Langevinized EnKF except that the Kalman gain matrix was estimated based on the ensemble, without the resampling 
step being performed, and the random error was drawn from $N(0, V_t)$ in the analysis step. 

Figure \ref{fig:conf} compares the estimates of $X_t^1$, $X_t^2$ and $X_t^3$ produced by 
Langevinized EnKF and EnKF for one simulated dataset. 
The plots are similar for the other components $X_t^4,\ldots,X_t^{40}$. 
The comparison shows that
the EnKF and Langevinized EnKF produced comparable RMSE$_t$'s (see Figure \ref{fig:conf}~(d)). 
This is interesting as the EnKF is known to provide the optimal linear estimator of the 
conditional mean \citep{LawTt2016}. However,  
the Langevinized EnKF provided better uncertainty quantification for the estimates. For example, in Figure \ref{fig:conf}~(c), many state values  
are covered by the confidence band produced by the Langevinized EnKF, but not
by the confidence band by the EnKF.  
This is consistent with the existing result that the EnKF scheme is known to underestimate 
the confidence intervals, see e.g. \cite{SaetromO2013}.

\begin{figure}[htbp]
\begin{tabular}{cc}
(a) & (b) \\
\epsfig{figure=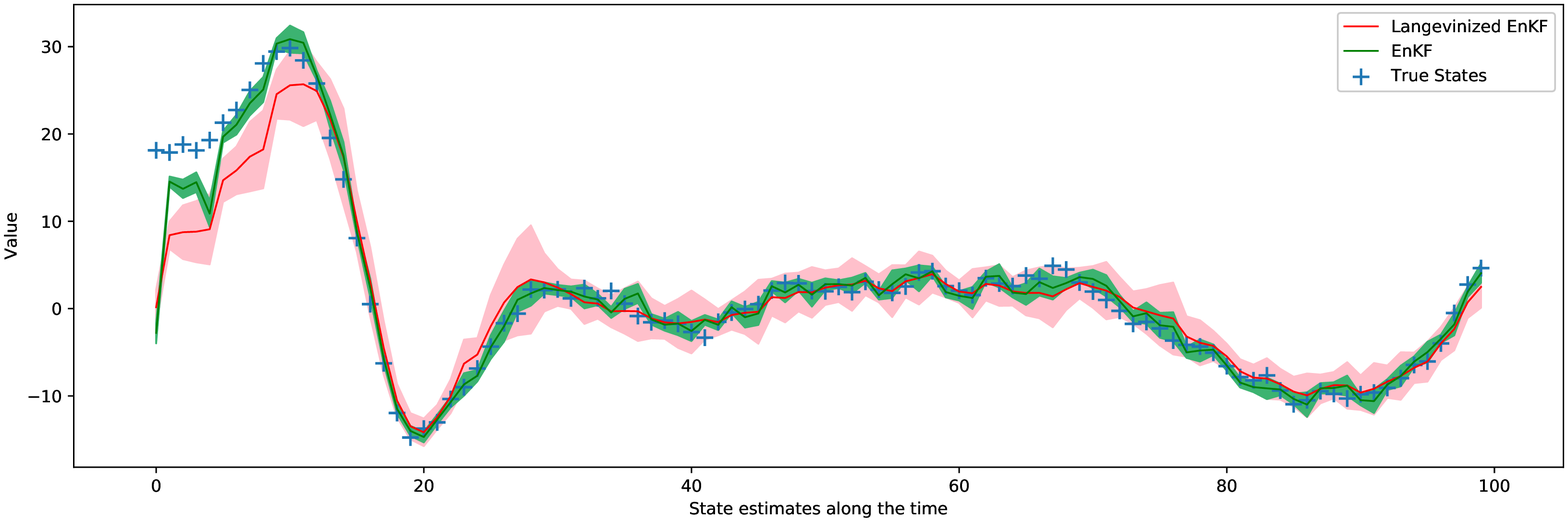,height=1.75in,width=3.0in,angle=0} &
\epsfig{figure=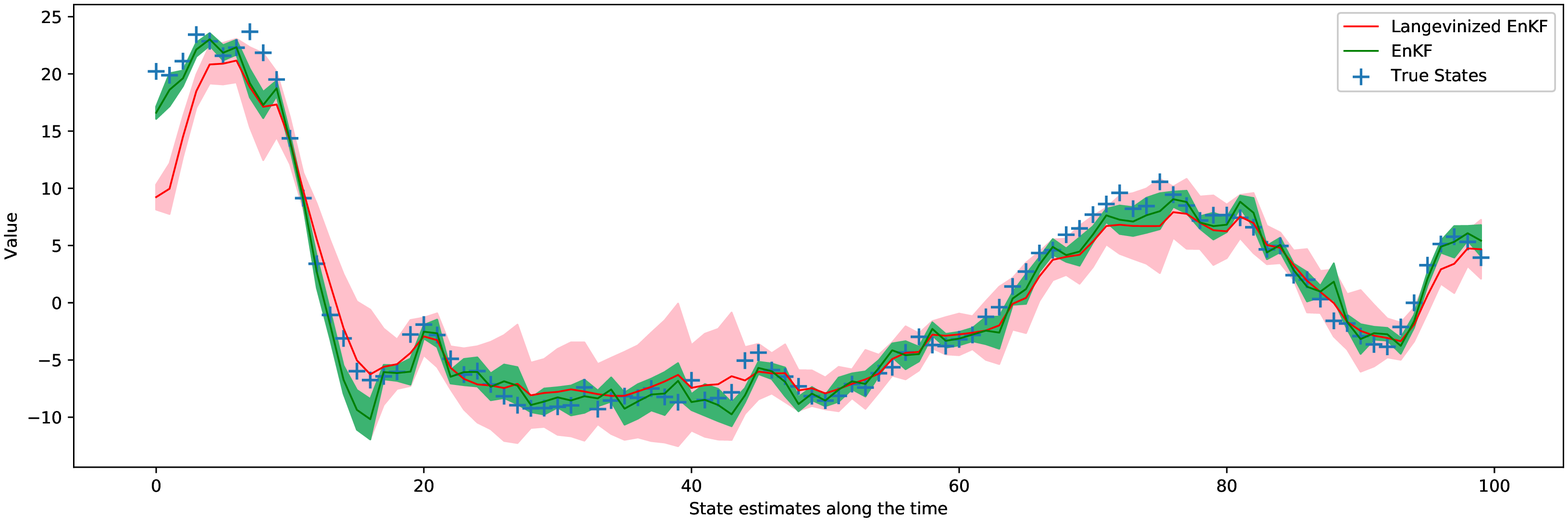,height=1.75in,width=3.0in,angle=0}  \\
(c) & (d) \\
\epsfig{figure=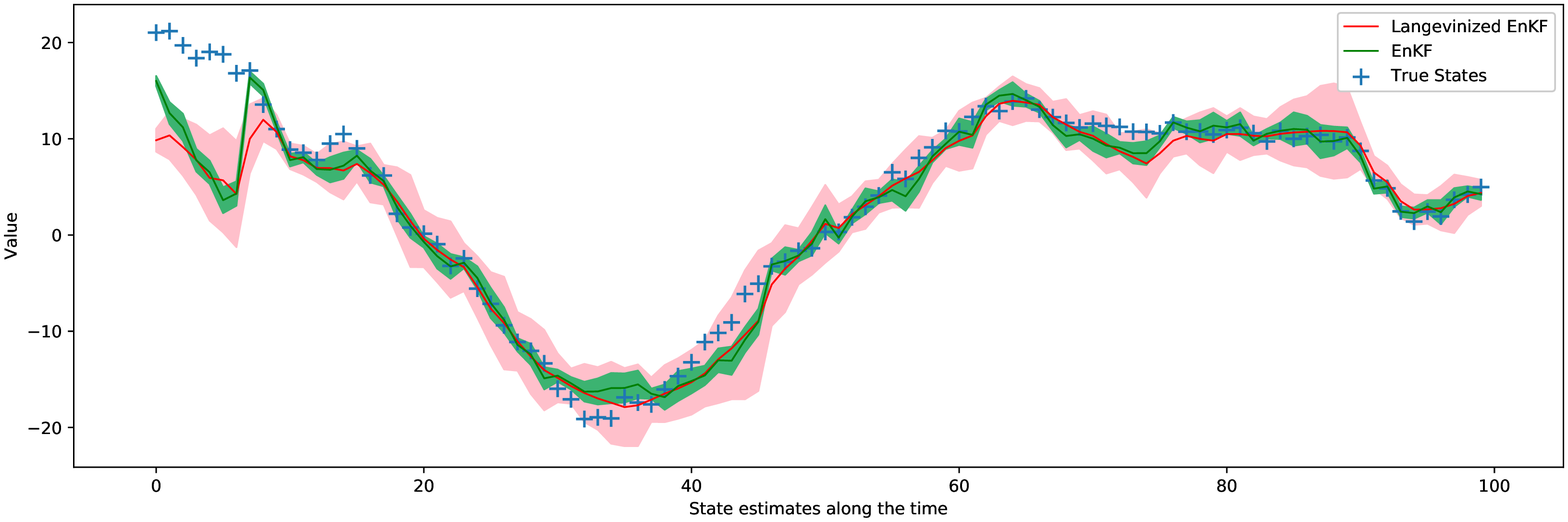,height=1.75in,width=3.0in,angle=0} & 
\epsfig{figure=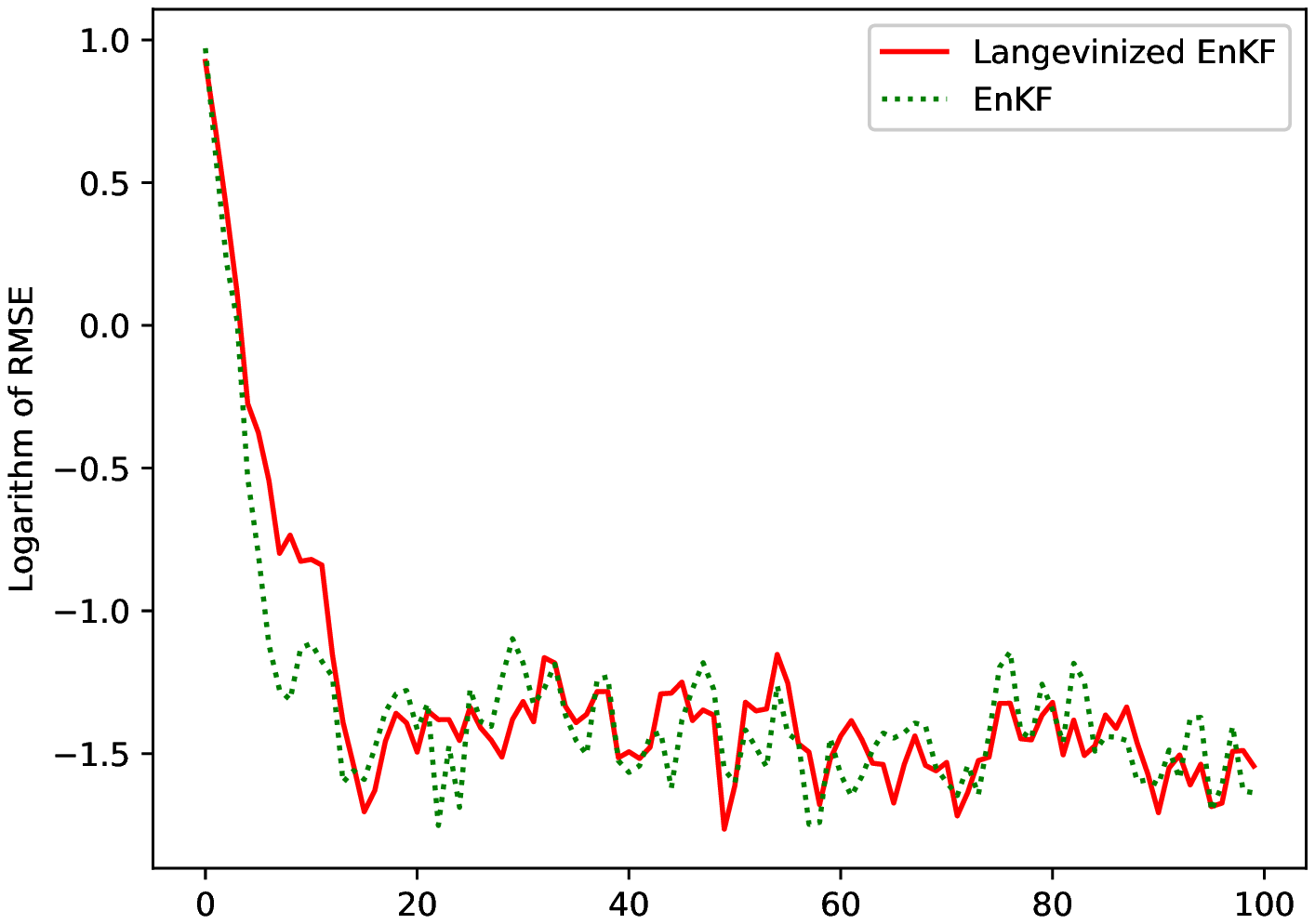,height=1.75in,width=3.0in,angle=0} \\
\end{tabular}
\caption{State estimates produced by the EnKF and Langevinized EnKF for the Lorenz-96 model 
 with $t=1,2,\ldots,100$:  
 plots (a)-(c) show, respectively, the estimates of $X_t^1$,  
 $X_t^2$ and $X_t^3$, where the true state values are 
 represented by `+', the estimates are represented by solid lines,
 and their 95\% confidence intervals are represented by shaded bands;  
 plot (d) shows $\log$(RMSE$_t$) along with stage $t$. } 
 \label{fig:conf}
\end{figure}

Figure \ref{coverageplot}~(a) shows the coverage probabilities of the 95\% confidence intervals 
produced by the EnKF and Langevinized EnKF, where the coverage probability was 
calculated by averaging over 40 state components at each stage $t \in \{1,2,\ldots,100\}$. 
Figure \ref{coverageplot}(b) shows the averaged coverage probabilities over 10 datasets. 
The comparison shows that the Langevinized EnKF produces the coverage probabilities closing to their 
nominal level, while the EnKF does not. This implies that the Langevinized EnKF is able to correctly quantify 
uncertainty of the estimates as $t$ becomes large. This is a remarkable result 
given the nonlinear and dynamic nature of the Lorenz-96 model!

\begin{figure}[htbp]
\begin{tabular}{cc}
(a) & (b) \\
\epsfig{figure=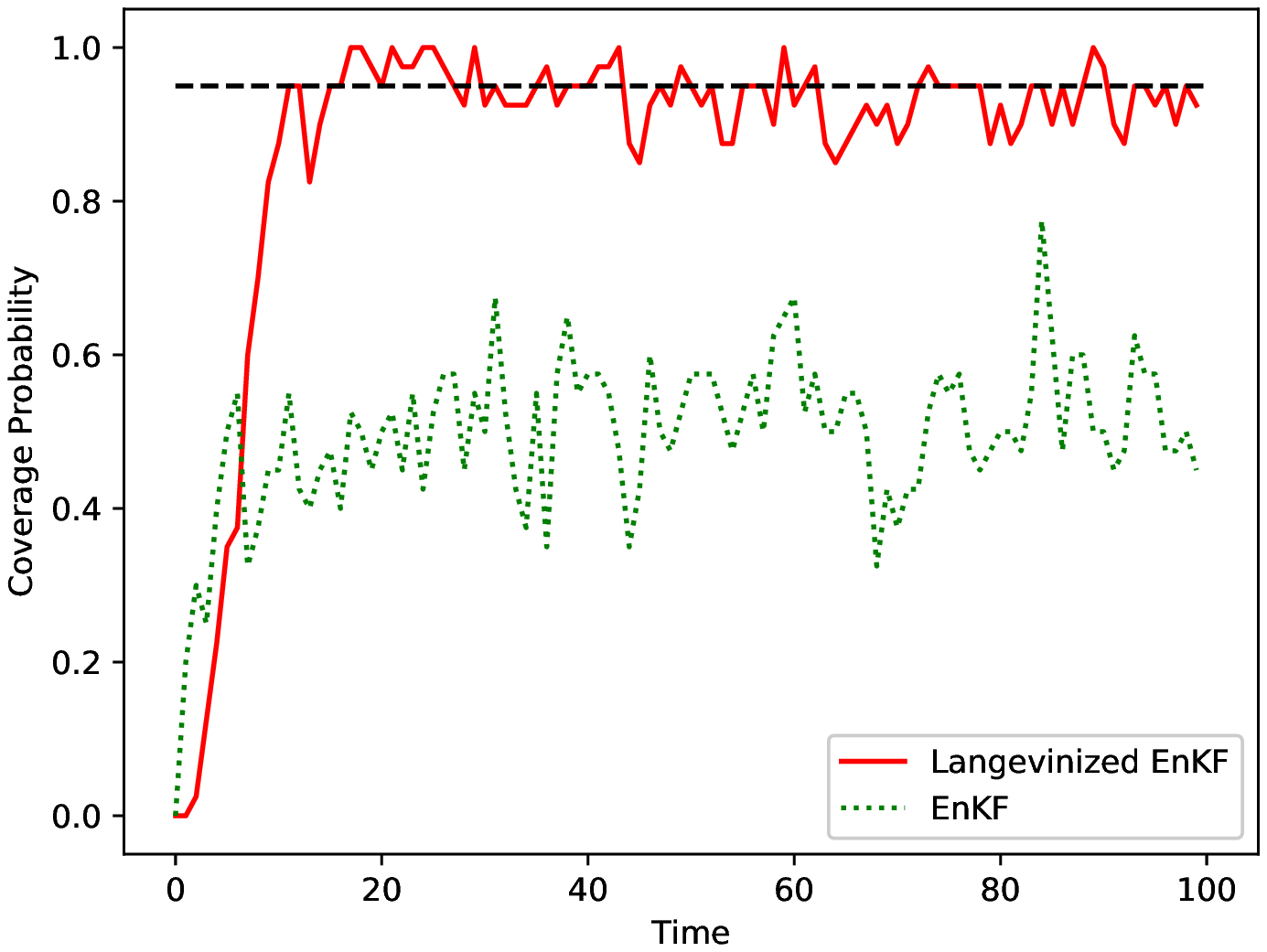,height=2.0in,width=3.25in,angle=0} &
\epsfig{figure=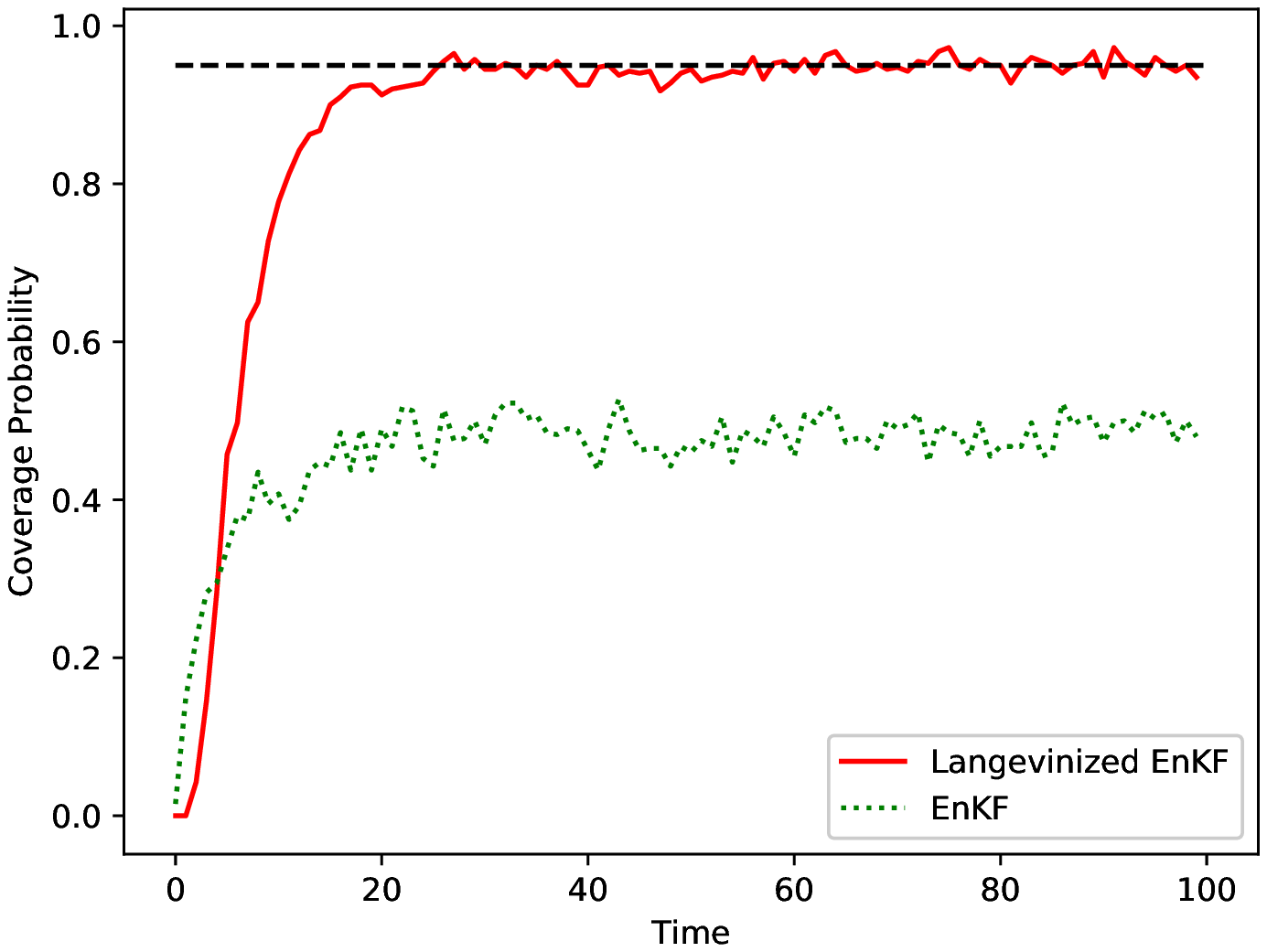,height=2.0in,width=3.25in,angle=0}  \\
\end{tabular}
\caption{Coverage probabilities of the 95\% confidence intervals 
 produced by EnKF and Langevinized EnKF for Lorenz-96 Model for stage $t=1,2,\ldots,100$:  
 (a) coverage probabilities with one dataset; (b) coverage 
  probabilities averaged over 10 datasets. }
 \label{coverageplot}
\end{figure}

Table \ref{Lorenztab} summarizes the results produced by the two methods on 10 datasets. In Langevinized EnKF, two choices of $k_0$ were tried.
For each dataset, we calculated the MeanRMSE  by averaging
RMSE$_t$ over the stages $t=21, 22,\ldots, 100$. 
Similarly, we calculated the MeanCP by averaging CP$_t$'s over the stages $t=21, 22,\ldots, 100$,
where CP$_t$ denotes the coverage probability calculated for one dataset at stage $t$. 
Then their values were averaged over 10 datasets and denoted by 
``Ave-MeanRMSE'' and ``Ave-MeanCP'', respectively.   
Table \ref{Lorenztab} also gives the CPU time cost by each method.
Compared to EnKF, Langevinized EnKF  produced slightly lower RMSEt’s, but much more accurate uncertainty quantification for the estimates. We note that Langevinized EnKF also produced very good results with $k_0=\mK-1$, which cost much less CPU time than with $k_0=\mK/2$.

\begin{table}[htbp]
\centering
\caption{Comparison of the EnKF and Langevinized EnKF, 
  where the averages over 10 independent datasets are reported with the 
  standard deviation given in the parentheses. The CPU time was recorded on a personal computer with RAM16GB and 2.9GHz Intel Core i7 for a single run of the method.} 
\label{Lorenztab}
   \begin{tabular}{ccccc}
      \toprule
           &$k_0$& Ave-MeanRMSE & Ave-MeanCP  & CPU(s) \\ \midrule
      & $\mK/2$ &  1.702(0.0343) & 0.948(0.0028) & 6.377(0.3942)\\
  \raisebox{1.5ex}{Langevinized EnKF}                      & $\mK-1$ &  1.714(0.0360) & 0.947(0.0034) & 3.350(0.0807)\\
      \midrule
      EnKF              &         &  1.722(0.0230) & 0.460(0.0029) & 0.817(0.0426) \\
      \bottomrule
    \end{tabular}
\end{table}

 In summary, the Langevinized EnKF can significantly outperform the EnKF for the Lorenz-96 model. The Langevinized EnKF can produce the same accurate state estimates as EnKF, but outperforms EnKF in uncertainty quantification. 

\subsection{Online Learning with LSTM Neural Networks}

\paragraph{Reformulation of LSTM Network} 
The LSTM network is a recurrent neural network architecture proposed by \cite{HochreiterSchmidhuber1997},
which has been widely used for machine learning tasks in dealing with time series data. 
Compared to traditional recurrent neural networks, 
hidden Markov models and other sequence learning methods, LSTM is less
sensitive to gap length of the data sequence. In addition, it is easy to train, less bothered 
by exploding and vanishing gradient problems.  
The LSTM network has been successfully used in natural language processing and handwriting recognition.
It won the ICDAR handwriting competition 2009 \citep{Graves4531750} and achieved a record 17.7\% phoneme error rate on the 
classic TIMIT natural speech dataset \citep{Graves2013SpeechRW}. 
In this section we show that the Langevinized EnKF is not only able to train LSTM networks as well as the stochastic gradient descent (SGD) method does, but also able to quantify 
uncertainty of the resulting estimates.  

Consider an autoregressive model of order $q$, denoted by AR($q$). Let $\bz_t=(z_{t-q+1},\cdots, z_{t-1}, z_t)$ denote the regression vector at stage $t$. Let $y_t=z_{t+1}\in \mathbb{R}^d$ denote the target output at stage $t$. The LSTM network with $s$ hidden neurons is defined by the following set of equations:
\begin{equation}
\begin{aligned} 
\eta_{t} &=h\left(\boldsymbol{W}^{(\eta)}\boldsymbol{z}_{t}+\boldsymbol{R}^{(\eta)} \boldsymbol{\psi}_{t-1}+\boldsymbol{b}^{(\eta)}\right), \\ 
\boldsymbol{i}_{t} &=\sigma\left(\boldsymbol{W}^{(i)}\boldsymbol{z}_{t}+\boldsymbol{R}^{(i)} \boldsymbol{\psi}_{t-1}+\boldsymbol{b}^{(i)}\right), \\ 
\boldsymbol{f}_{t} &=\sigma\left(\boldsymbol{W}^{(f)} \boldsymbol{z}_{t}+\boldsymbol{R}^{(f)} \boldsymbol{\psi}_{t-1}+\boldsymbol{b}^{(f)}\right), \\ 
\boldsymbol{c}_{t} &=\mathbf{\Lambda}_{t}^{(i)} \eta_{t}+\mathbf{\Lambda}_{t}^{(f)} \boldsymbol{c}_{t-1}, \\ \boldsymbol{o}_{t} &=\sigma\left(\boldsymbol{W}^{(o)} \boldsymbol{z}_{t}+\boldsymbol{R}^{(o)} \boldsymbol{\psi}_{t-1}+\boldsymbol{b}^{(o)}\right), \\ \boldsymbol{\psi}_{t} &=\mathbf{\Lambda}_{t}^{(o)} h\left(\boldsymbol{c}_{t}\right),
\end{aligned}
\end{equation}
where $\boldsymbol{\Lambda}_{t}^{(f)}=\operatorname{diag}\left(\boldsymbol{f}_{t}\right), \boldsymbol{\Lambda}_{t}^{(i)}=\operatorname{diag}\left(\boldsymbol{i}_{t}\right)$, 
 and  $\mathbf{\Lambda}_{t}^{(o)}=\operatorname{diag}\left(\boldsymbol{o}_{t}\right)$.
The activation function $h(\cdot)$ applies to vectors pointwisely and is
commonly set to $tanh(\cdot)$. The sigmoid function $\sigma(\cdot)$
also applies pointwisely to the vector elements. 
In terms of LSTM networks, 
$\bz_t\in \mR^{qd}$ is called input vector, $\bc_t \in \mR^s$ is called the state vector, 
$\boldsymbol{\psi}_t \in \mR^s$ is called the output vector, and  
$\bi_t$, $\boldsymbol{f}_t$ and $\boldsymbol{o}_t$ are called the input, forget and output gates, respectively. 
For the coefficient matrices and weight vectors, we have
 $\boldsymbol{W}^{(\eta)}, \boldsymbol{W}^{(i)}, \boldsymbol{W}^{(f)}, \boldsymbol{W}^{(o)} \in \mathbb{R}^{s \times qd}$, 
 $\boldsymbol{R}^{(\eta)}, \boldsymbol{R}^{(i)}, \boldsymbol{R}^{(f)}, \boldsymbol{R}^{(o)} \in \mathbb{R}^{s \times s}$, 
 and $\boldsymbol{b}^{(\eta)},\boldsymbol{b}^{(i)}, \boldsymbol{b}^{(f)}, \boldsymbol{b}^{(o)} \in \mathbb{R}^{s}$. For initialization, we set  
$\boldsymbol{\psi}_{0} =\boldsymbol{0}$, and $\boldsymbol{c}_{0} =\boldsymbol{0}$.
Given the output vector $\boldsymbol{\psi}_{t}$ , we can model the target output $y_t$ as 
\begin{equation} \label{LSTMeq2}
y_t=\boldsymbol{W} \boldsymbol{\psi}_{t} + \boldsymbol{b}+\boldsymbol{u}_t,
\end{equation}
where {$\boldsymbol{W} \in \mathbb{R}^{d \times s}, \boldsymbol{b} \in \mathbb{R}^{d}$}, 
 and $\boldsymbol{u}_t \sim N(0,\Gamma_t)$.  

For convenience, we group the parameters of the LSTM model as 
$\boldsymbol{\theta} = \{\boldsymbol{W}, \boldsymbol{b}, \boldsymbol{W}^{(\eta)}, \boldsymbol{R}^{(\eta)}, \boldsymbol{b}^{(\eta)}, \boldsymbol{W}^{(i)}$, $\boldsymbol{R}^{(i)}, \boldsymbol{b}^{(i)}, \boldsymbol{W}^{(f)}, \boldsymbol{R}^{(f)}, \boldsymbol{b}^{(f)}, \boldsymbol{W}^{(o)}, \boldsymbol{R}^{(o)}, \boldsymbol{b}^{(o)}\} \in \mathbb{R}^{n_{\theta}}$, where {$n_{\theta}= 4s^2+4sqd+4s+sd+d$}.
 With the state-augmentation approach, we can rewrite the LSTM model as a state-space model with a linear measurement equation as follows: 
\begin{equation} \label{LSTMeq3}
    \begin{aligned}\left[\begin{array}{l}{\boldsymbol{\theta}_{t}} \\ {\boldsymbol{c}_{t}} \\ {\boldsymbol{\psi}_{t}} \\ {\boldsymbol{\gamma}_{t}} \end{array}\right]&=\left[\begin{array}{c}{\boldsymbol{\theta}_{t-1}} \\
    {\boldsymbol{\Omega}\left(\boldsymbol{c}_{t-1}, \boldsymbol{z}_{t}, \boldsymbol{\psi}_{t-1}\right)} \\
    {\tau\left(\boldsymbol{c}_{t}, \boldsymbol{z}_{t}, \boldsymbol{\psi}_{t-1}\right)} \\
    {\boldsymbol{W}_{t} \boldsymbol{\psi}_{t}+\boldsymbol{b}}
    \end{array}\right] + \left[\begin{array}{c}{\boldsymbol{e}_{t}}\\
    {\boldsymbol{\zeta}_{t}} \\
    {\boldsymbol{\xi}_{t}} \\ {\bvarepsilon_{t}} \end{array}\right], \\ 
    y_{t} &=\boldsymbol{\gamma}_{t}+\bv_{t}, \end{aligned}   
\end{equation}
 where  $\bvarepsilon_{t}\sim N(0,\alpha \Gamma_t)$
for some constant $0<\alpha_t<1$,  $\bv_t \sim N(0,(1-\alpha) \Gamma_t)$, and 
$\Gamma_t$ is as defined in (\ref{LSTMeq2}).  
Let $\bx_t^T=(\btheta_t^T,\boldsymbol{c}_t^T, \boldsymbol{\psi}_t^T, \boldsymbol{\gamma}_t^T)$.
Then  
\[
 \pi(\bx_t|\bx_{t-1},\boldsymbol{z}_t)=\pi(\btheta_t|\btheta_{t-1},\boldsymbol{z}_t)
 \pi(\boldsymbol{c}_t|\btheta_t, \boldsymbol{c}_{t-1}, \boldsymbol{\psi}_{t-1} ,\boldsymbol{z}_t) 
 \pi(\boldsymbol{\psi}_t|\btheta_t, \boldsymbol{c}_{t}, \boldsymbol{\psi}_{t-1} ,\boldsymbol{z}_t)
 \pi(\boldsymbol{\gamma}_t|\btheta_t, \boldsymbol{\psi}_{t}). 
 \]  
 As in (\ref{augeqnon}), we can rewrite the state-space model (\ref{LSTMeq3}) as a dynamic system at each stage $t$: 
 \begin{equation} \label{augeqnon2}
 \begin{split} 
  \bx_{t,k} &= \bx_{t,k-1}+ \frac{\epsilon_t}{2} \nabla_{\bx} \log  \pi(\bx_{t,k-1}|\bx_{t-1}^s, \boldsymbol{z}_t) +  \bomega_{t,k} \\ 
  y_{t,k} &= H_t \bx_{t,k}+\bv_{t,k}, \\ 
 \end{split}
 \end{equation}
 where $\bx_{t,k}$ denote an estimate of $\bx_t$ obtained at iteration $k$ for $k=1,2,\ldots, \mK$, $y_{t,k}=y_t$ for $k=1,2,\ldots,\mK$, $H_t=(0,I)$ 
 such that $H_t x_{t}=\boldsymbol{\gamma}_{t}$, 
 $\bomega_{t,k} \sim N(0,\epsilon_{t} I_{p})$, $p$ is the dimension of $\bx_t$, 
 and $\bv_{t,k} \sim N(0, (1-\alpha) \Gamma_t)$.  With this formulation, 
 Algorithm \ref{EnKFasimb} can be applied to train the LSTM model and 
 and quantify uncertainty of the resulting estimate.
 
 \paragraph{Wind Stress Data}

We considered the wind stress dataset, which can be downloaded at https:// iridl.ldeo.columbia.edu. The dataset
consists of gridded (at a $2\times 2$ degrees resolution and corresponding to $d=1470$ spatial locations) 
monthly summaries of meridional wind pseudo-stress collected from Jan 1961 to Feb 2002. 
For this dataset, we set $q=6$ and $T=300$, i.e., modeling the data of the first 300 months using an AR(6) LSTM model. The data was scaled into the range $(-1,1)$ in preprocessing and 
then scaled back to the original range in results reporting.   

 The Langevinized EnKF was first applied to this example. For the model part, we set 
 $\boldsymbol{e}_{t} \sim N(0,0.0001 I)$, $\boldsymbol{\zeta}_{t}\sim N(0,0.0001 I)$, 
 $\boldsymbol{\xi}_{t}\sim N(0,0.0001 I )$, $\boldsymbol{u}_{t}\sim N(0,0.0001 I)$.  
 These model parameters are assumed to be known, although this is not necessary as 
 discussed at the end of the paper. For this example, we have tried different settings 
 for the model parameters. In general, a smaller variance setting will lead to a better fitting to the observations. 
 For the algorithmic part, 
 we set the ensemble size $m = 100$, $\mK = 10$, 
 $k_0=9$, 
 $\alpha = 0.9$, the number of hidden 
 neurons $s= 20$,  and 
 the learning rate $\epsilon_{t, k} = 0.0001/\max\{\kappa_b,k\}^{0.95}$  with $\kappa_b=8$ for $k = 1,\cdots, \mK$ 
 and  $t = 1,\cdots, T$.  At each stage $t$, the wind stress was estimated by averaging over $\hat{y}_{t,k}=H_t \bx_{t,k}$ for 
 last $\mK/2$ iterations. In addition, the credible interval for each component of $\bx_t$ was calculated 
 based on the ensemble obtained at stage $t$.  Each run cost about 5334.5 CPU seconds. The results are summarized in 
 Figure \ref{fig:windpath}, where the wind stress estimates at four selected 
 spatial locations and their 95\% credible intervals are plotted along with stages. 
 
 For comparison, SGD was also applied to this example with the same setting as the Langevinized EnKF, 
 i.e., they share the same learning rate and the same iteration number $\mK=10$ at each stage.
 The results are also summarized in Figure \ref{fig:windpath}, where the wind stress estimates at four 
 selected spatial locations are plotted along with stages. Each run of SGD cost 
 about 15.9 CPU seconds. Since the Langevinized EnKF had an ensemble size $m=100$,  each 
 chain cost only 53.3 CPU seconds. 
 The Langevinized EnKF cost more CPU time and as return, it produced more samples for uncertainty quantification. 
 
\begin{figure}[htbp]
\begin{tabular}{c} 
  \includegraphics[width=\textwidth]{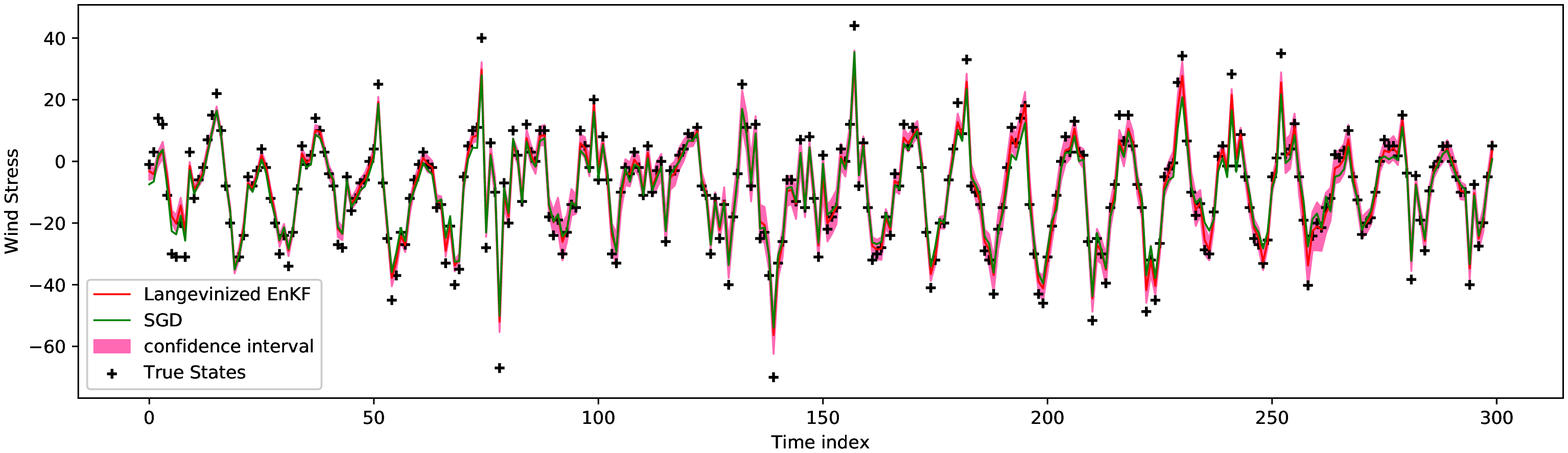} \\
  \includegraphics[width=\textwidth]{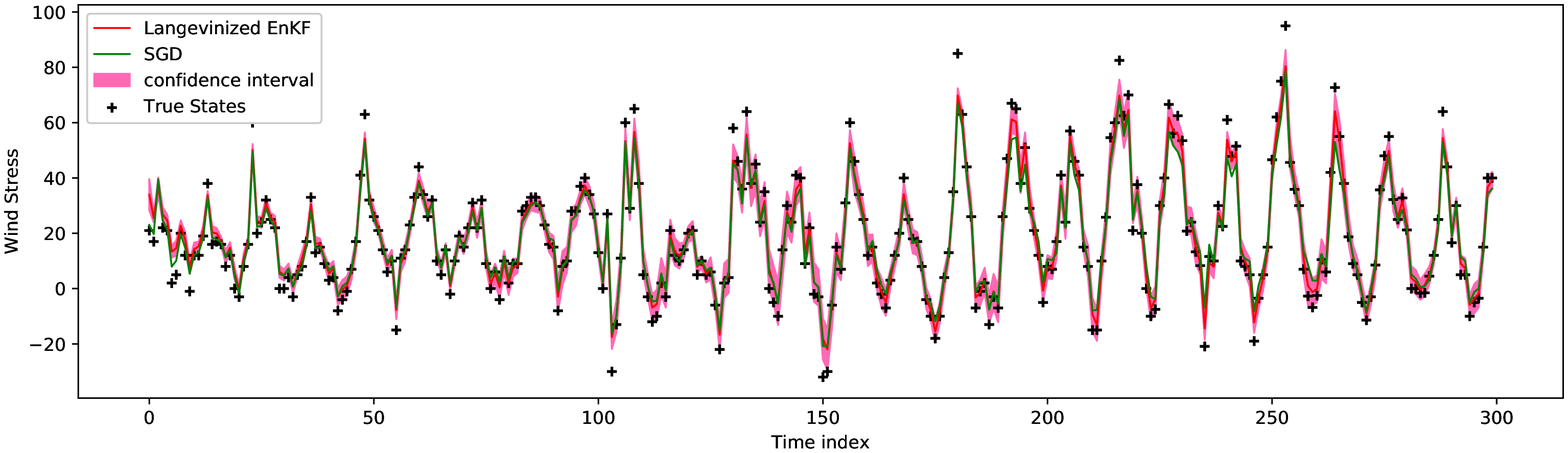} \\
  \includegraphics[width=\textwidth]{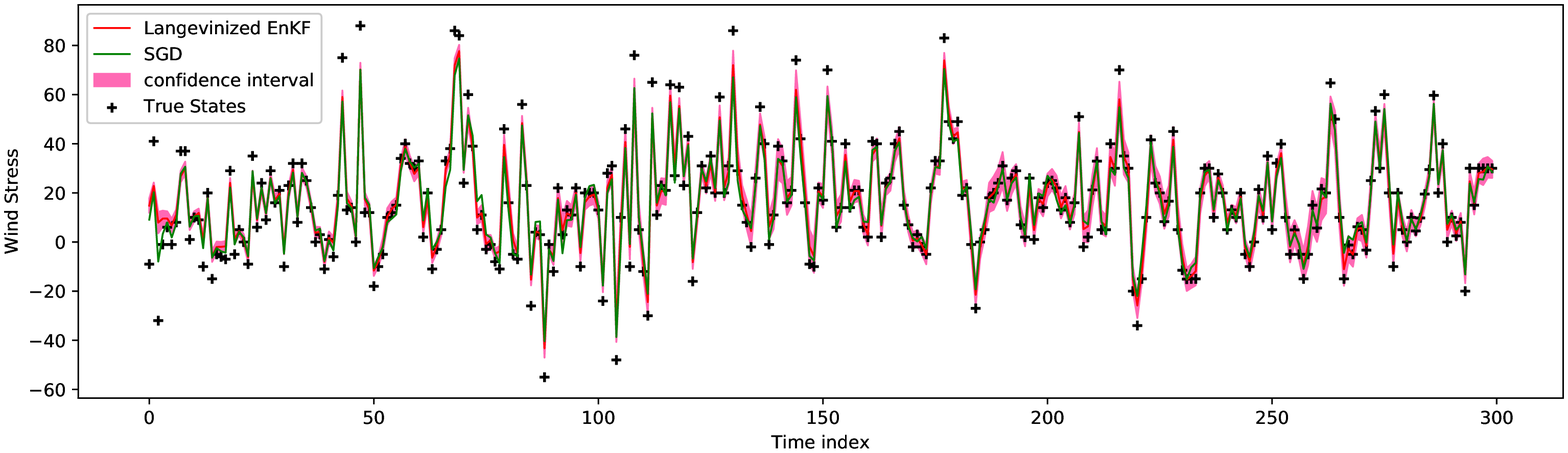}  \\
 \includegraphics[width=\textwidth]{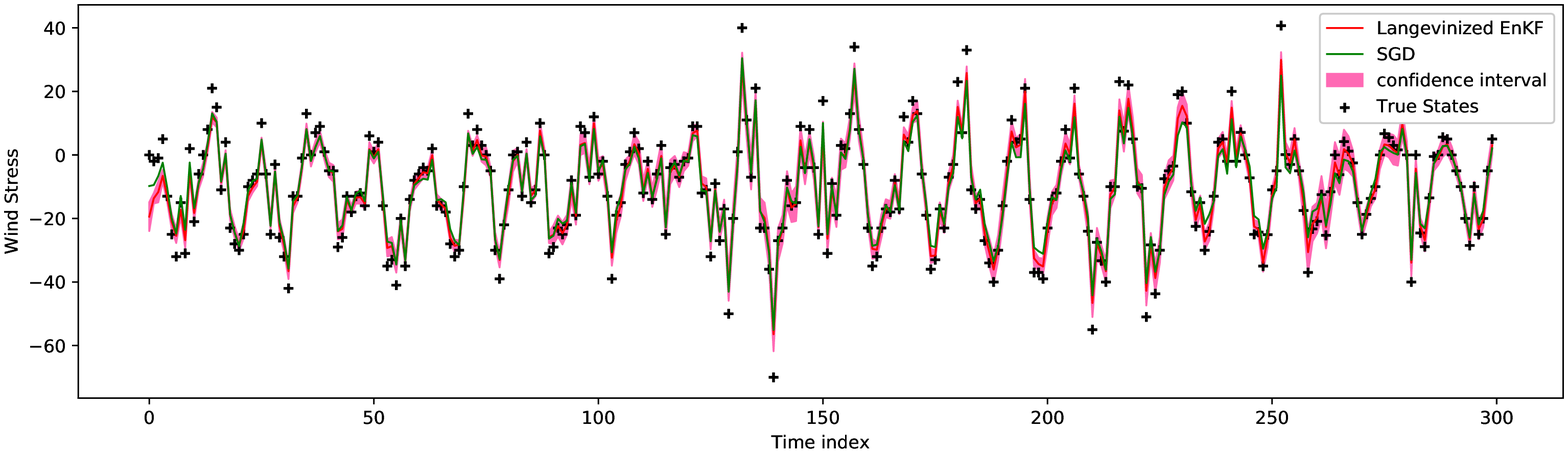}
 \end{tabular}
 \caption{ Wind stress estimates at four spatial locations and their 95\% credible interval along 
  with stages: the red line is for the Langevinized EnKF estimate; the pink shaded band is 
  for credible intervals of the Langevinized EnKF,  
  the green line is for the SGD estimate; and the blue cross '+' is for the true wind stress value.}
 \label{fig:windpath}
 \end{figure}

\begin{figure}[htbp]
        \centering
        \includegraphics[width=0.75\textwidth]{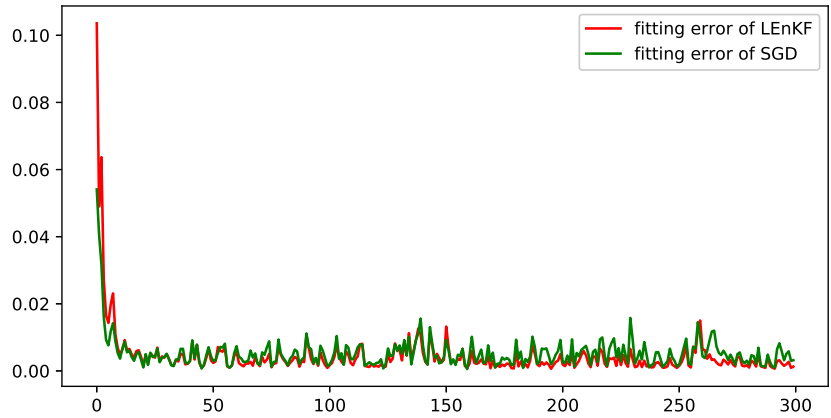}
		\caption{\quad Comparison of the mean squared fitting errors produced by SGD, Langevinized EnKF and   ensemble averaging Langevinized EnKF. }
		\label{fig:windloss}
\end{figure}

\begin{figure}[htbp]
		\centering
		\begin{subfigure}[b]{0.5\textwidth}
        \includegraphics[width=\textwidth]{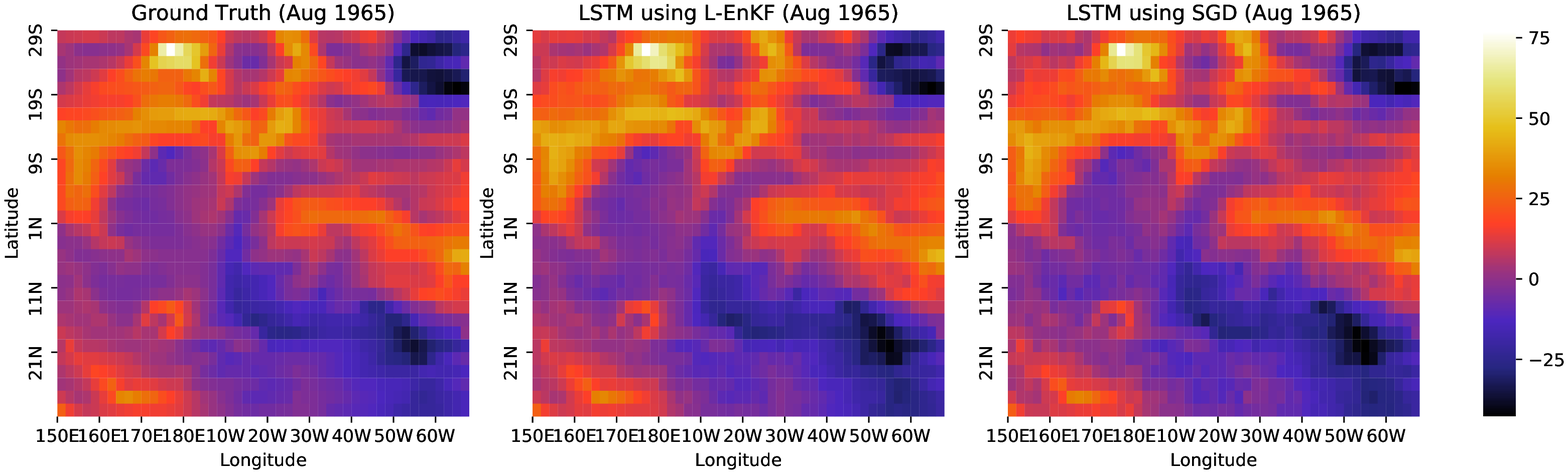}
        \end{subfigure}%
        ~
        \begin{subfigure}[b]{0.5\textwidth}
        \includegraphics[width=\textwidth]{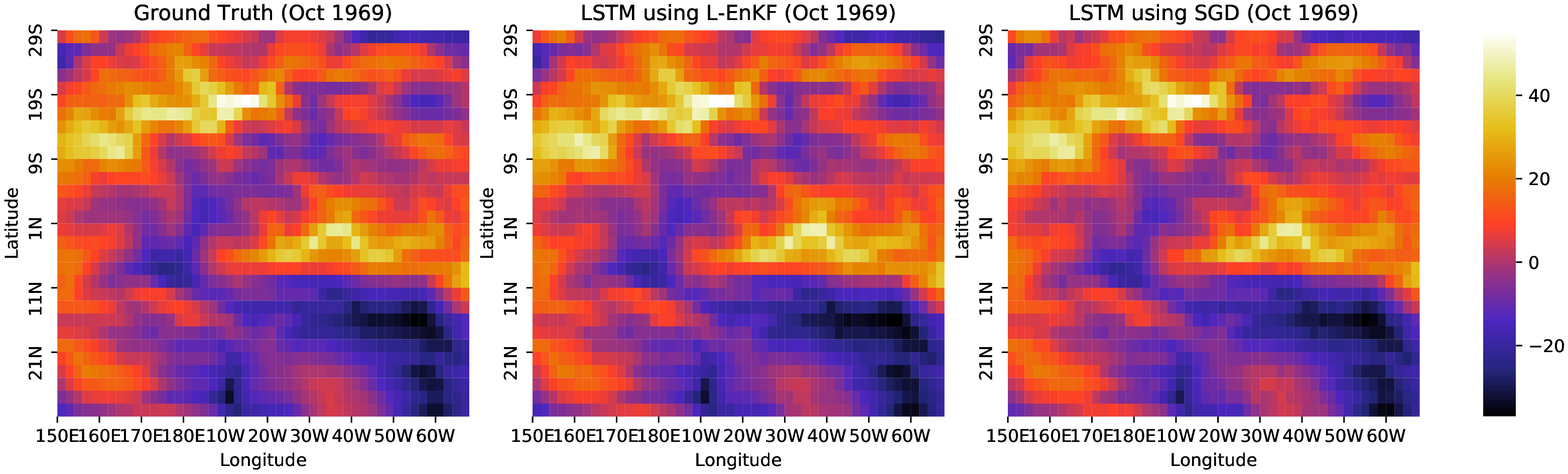}
        \end{subfigure}%
        
        \begin{subfigure}[b]{0.5\textwidth}
        \includegraphics[width=\textwidth]{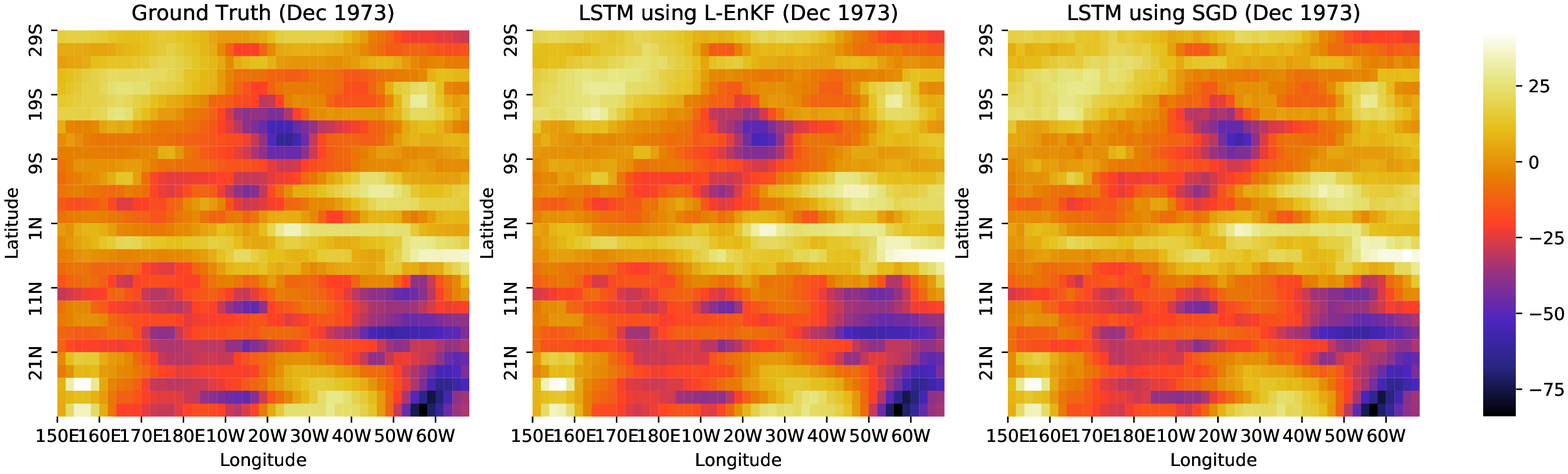}
        \end{subfigure}%
        ~
        \begin{subfigure}[b]{0.5\textwidth}
        \includegraphics[width=\textwidth]{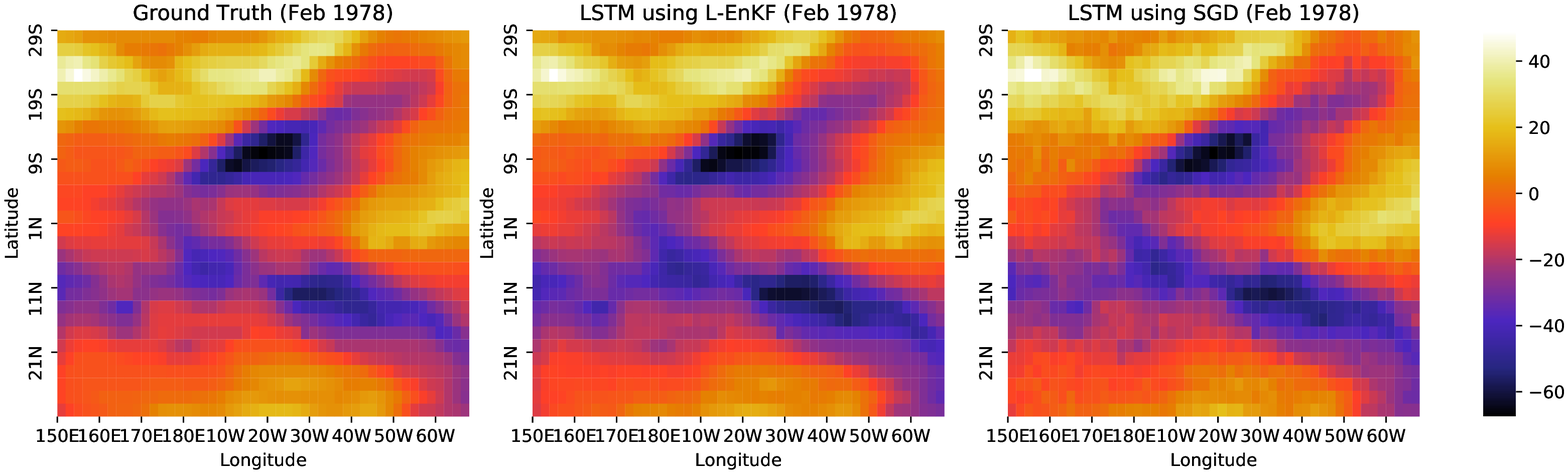}
        \end{subfigure}%
        
        \begin{subfigure}[b]{0.5\textwidth}
        \includegraphics[width=\textwidth]{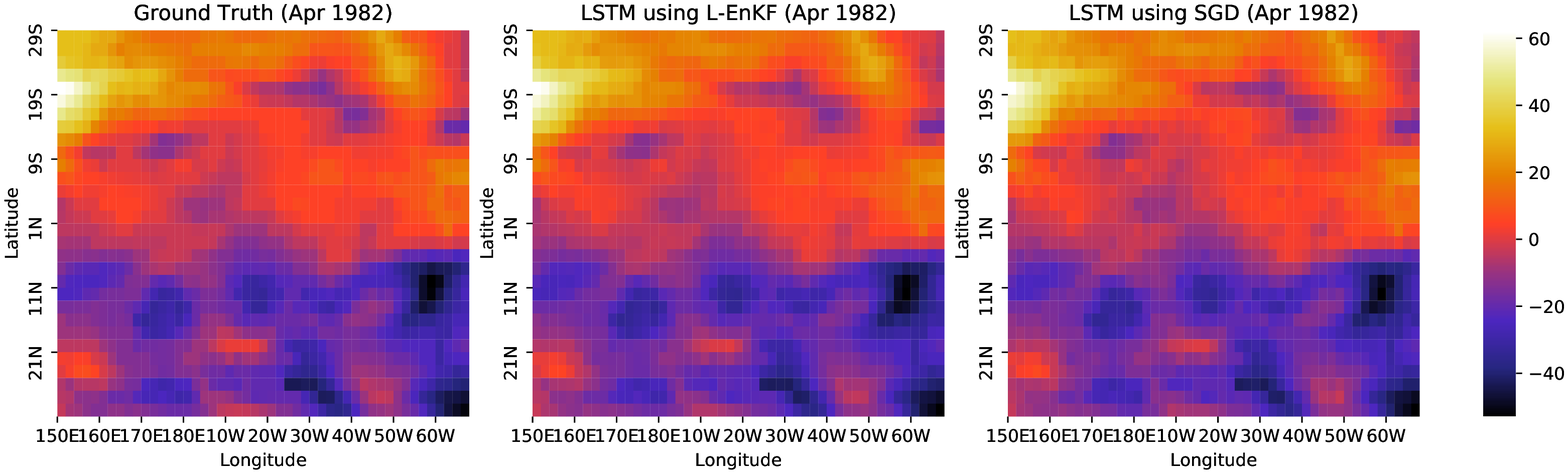}
        \end{subfigure}%
        ~
        \begin{subfigure}[b]{0.5\textwidth}
        \includegraphics[width=\textwidth]{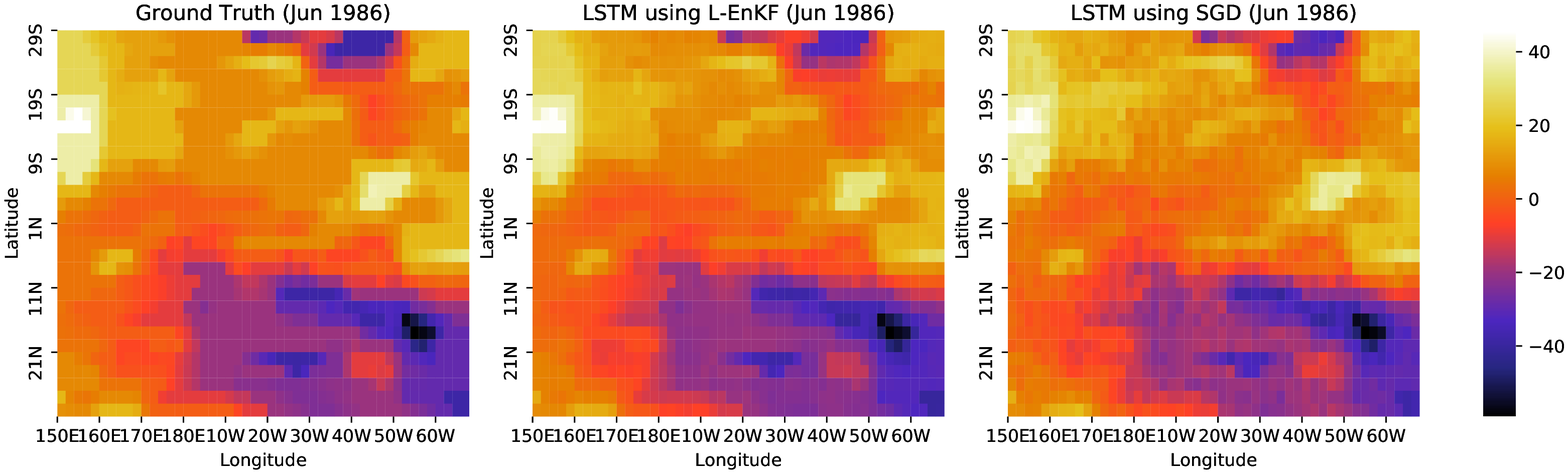}
        \end{subfigure}%
		\caption{Heat maps of the wind stress fitted by the Langevinized EnKF and SGD 
for six different months, August 1965, October 1969, December 1973, February 1978, April 1982, 
and June 1986: For both left and right panels, the left, middle and right columns show the true heat map, the heat map fitted by Langevinized EnKF, and the heat map fitted by SGD, respectively. }
		\label{fig:heatfit}
\end{figure}

Further, we calculated the mean squared fitting error $||\hat{y}_t -y_t||_2^2$
for stages $t=1,2,\ldots, T$ and for both methods. The results are summarized in Figure \ref{fig:windloss}, which indicates that
the Langevinized EnKF produced slightly smaller fitting errors than SGD.
Figure \ref{fig:heatfit} shows the heat maps of the wind stress fitted by the Langevinized EnKF and SGD
for six different months, August 1965, October 1969, December 1973, February 1978, April 1982,
and June 1986. The comparison with the true heat maps indicates that both SGD and Langevinized EnKF
can train the LSTM very well for this example.

In summary, this example shows that the Langevinized EnKF is not only able to train LSTM networks as does SGD, but also able to quantify uncertainty of the resulting estimates.

\section{Discussion}

This paper proposes a new particle filtering algorithm, Langevinized EnKF, by reformulating the EnKF under the framework of Langevin dynamics. The Langevinized EnKF converges to the right filtering distribution in Wasserstein distance under the big data scenario that the dynamic systems consists of a large number of stages and has a large number of observations at each stage.
The Langevinized EnKF  inherits the forecast-analysis procedure from the EnKF and the use of mini-batch data from the stochastic gradient 
MCMC algorithms, enabling it scalable with respect to both the dimension and the sample size.  
The Langevinized EnKF can be applied to both the inverse 
and data assimilation problems. 


Like the EnKF, the Langevinized EnKF is developed under 
the assumption that the state space model 
contains no unknown parameters. Extension of the algorithm to the case with unknown parameters can be done in different ways. One way is with the state augmentation approach. Taking 
the model (\ref{inveq1}) as an example: If the function $\pi(\cdot)$ or more general, the state propagator, 
 contains some unknown parameters, we can augment the state $x_t$ to include these parameters. Another way is with the EM algorithm \citep{DempsterEM1977}. Since the Langevinized EnKF is able to sample from the filtering distribution for given parameters, the EM algorithm can be conveniently used for estimating parameters. A related work is \cite{Aicher2019}, however, where the filtering distribution is simulated using traditional sequential Monte Carlo. Third,
 the parameters can also be estimated using an adaptive SGMCMC algorithm. Taking the model (\ref{inveq1}) as the example again: If the propagator $H_t$ or the observation noise covariance matrix contain some unknown parameters, then the parameters can be estimated in a recursive way. In this case, if the state-augmentation approach is used, the covariance 
 matrix $\Sigma_t= \frac{n}{N} (I-K_tH_t)$ (defined in equation (S3) of the Supplementary Material) will no longer be a constant matrix of the state variable 
 and thus the convergence to the correct filtering distribution will not be guaranteed any more. To tackle this issue,  
 a parameter updating step can be added to the Langevinized EnKF Algorithm \ref{EnKFnew} as follows:

 \begin{itemize}
 \item[3.] {\it (Parameter updating) Update the parameters according to the recursion}
 \[
  \bvartheta_{t}=(1-a_t) \bvartheta_{t-1}+a_t \phi(\bvartheta_{t-1},\bx_t^{a}),
 \]
 {\it  where $\bvartheta$ denotes the vector of unknown parameters, $\{a_t\}$ is a pre-specified, positive, decreasing sequence
 satisfying the conditions $\sum_t a_t =\infty$ and $\sum_t a_t^2 <\infty$,
 $\bx_t^a=(x_t^{a_1}, \ldots, x_t^{a_m})$ denotes the ensemble of samples at stage $t$, 
 and $\phi(\bvartheta_{t-1},\bx_t^{a})$ is
 a mapping driving to an estimate of $\bvartheta$ based on the ensemble $\bx_t^a$.}
 \end{itemize}

 Based on the theory of stochastic approximation \citep{RobbinsM1951},
 the mapping $\phi(\bvartheta_{t-1},\bx_t^{a})$ can be easily designed. 
 With the parameter updating step, the Langevinized EnKF will become an adaptive
 SGMCMC algorithm, where the target distribution varies from iteration to iteration. 
 The convergence of the adaptive SGMCMC algorithm has been studied in \cite{DengZLL2019}. 
 The theory of \cite{DengZLL2019} implies that 
 as $t\to \infty$, $\bvartheta_t$ converges to the true parameter vector
 in probability and the Langevinized EnKF converges
 to the correct posterior distribution.
 The above three parameter estimation methods also apply to  Algorithm \ref{EnKFnonlinear} and Algorithm \ref{EnKFasimb}.

 In practice, we often encounter the problems where the response variable follows a non-Gaussian distribution, e.g., multinomial or Poisson. The Langevinized EnKF 
 can be extended to these problems by
 introducing a latent variable. For example, consider an inverse problem, for which the latent variable model can be formulated as
 \begin{equation} \label{NasimExt}
 z  | y \sim \rho(z|y), \quad   y  =h(x)+\eta, \quad \eta \sim N(0, \Gamma),  
 \end{equation}
 where  $z$ is observed data following a non-Gaussian distribution $\rho(\cdot)$, 
 $y$ is the latent Gaussian variable, and $x$ is the parameter.
 To adapt the Langevinized EnKF to simulating from the posterior of $x$, we only need to add an
 imputation step in Algorithm \ref{EnKFnew}, between the forecast step and the analysis step.
 The imputation step is to simulate a latent vector $y$ from the distribution
 $\pi(y|x,z) \propto \rho(z|y) f(y|x)$. Since the imputation will lead to an
 unbiased estimate for the gradient of the involved log-density function,
 the proposed extension is valid, with which samples from the target posterior $\pi(x|z)$ can be generated.   
 A further extension of this algorithm to data assimilation problems is straightforward.



\appendix

\section*{Appendix}




\setcounter{table}{0}
\renewcommand{\thetable}{S\arabic{table}}
\setcounter{figure}{0}
\renewcommand{\thefigure}{S\arabic{figure}}
\setcounter{equation}{0}
\renewcommand{\theequation}{S\arabic{equation}}
\setcounter{algorithm}{0}
\renewcommand{\thealgorithm}{S\arabic{algorithm}}
\setcounter{lemma}{0}
\renewcommand{\thelemma}{S\arabic{lemma}}
\setcounter{corollary}{0}
\renewcommand{\thecorollary}{S\arabic{corollary}}
\setcounter{theorem}{0}
\renewcommand{\thetheorem}{S\arabic{theorem}}
\setcounter{remark}{0}
\renewcommand{\theremark}{S\arabic{remark}}


This supplementary material is organized as follows. Section A compares the Langevinized EnKF with parallel SGLD, pSGLD and SGNHT for a linear regression example. Section B proves the convergence of Algorithm 2 and Algorithm 4.

\section{Comparison of Langevinized EnKF with parallel SGLD, pSGLD and SGNHT}

For a thorough comparison, we also ran SGLD \citep{Welling2011BayesianLV}, pSGLD \citep{Li2016PSGLD} and SGNHT \citep{Ding2014BayesianSU} in parallel for the linear regression example considered in Section 4.1.
Each of the three algorithms was run for 1,000 iterations with 100 chains and exactly the same parameter setting as used in Section 4.1. 
The CPU times costed by SGLD, pSGLD, and SGNHT were 
38.18, 43.19, 42.39 CPU seconds, respectively. Recall that
the Langevinized EnKF with an ensemble size of $m=100$
cost  35.14 CPU seconds for 1,000 iterations on the same
computer. Figure \ref{para_linear} shows the trajectories of $(\beta_1,\beta_2,\ldots,\beta_9)$ produced by SGLD, pSGLD, SGNHT and Langevinized EnKF in the runs, where each trajectory was obtained by averaging over 100 chains at each iteration. 
For this example, Langevinized EnKF took less than 100 iterations to converge to the true values, SGNHT took about 1000 iterations, while SGLD and pSGLD failed to 
converge with 1000 iterations. 
 
\begin{figure}[htbp]
\begin{center}
\begin{tabular}{c}
\epsfig{figure=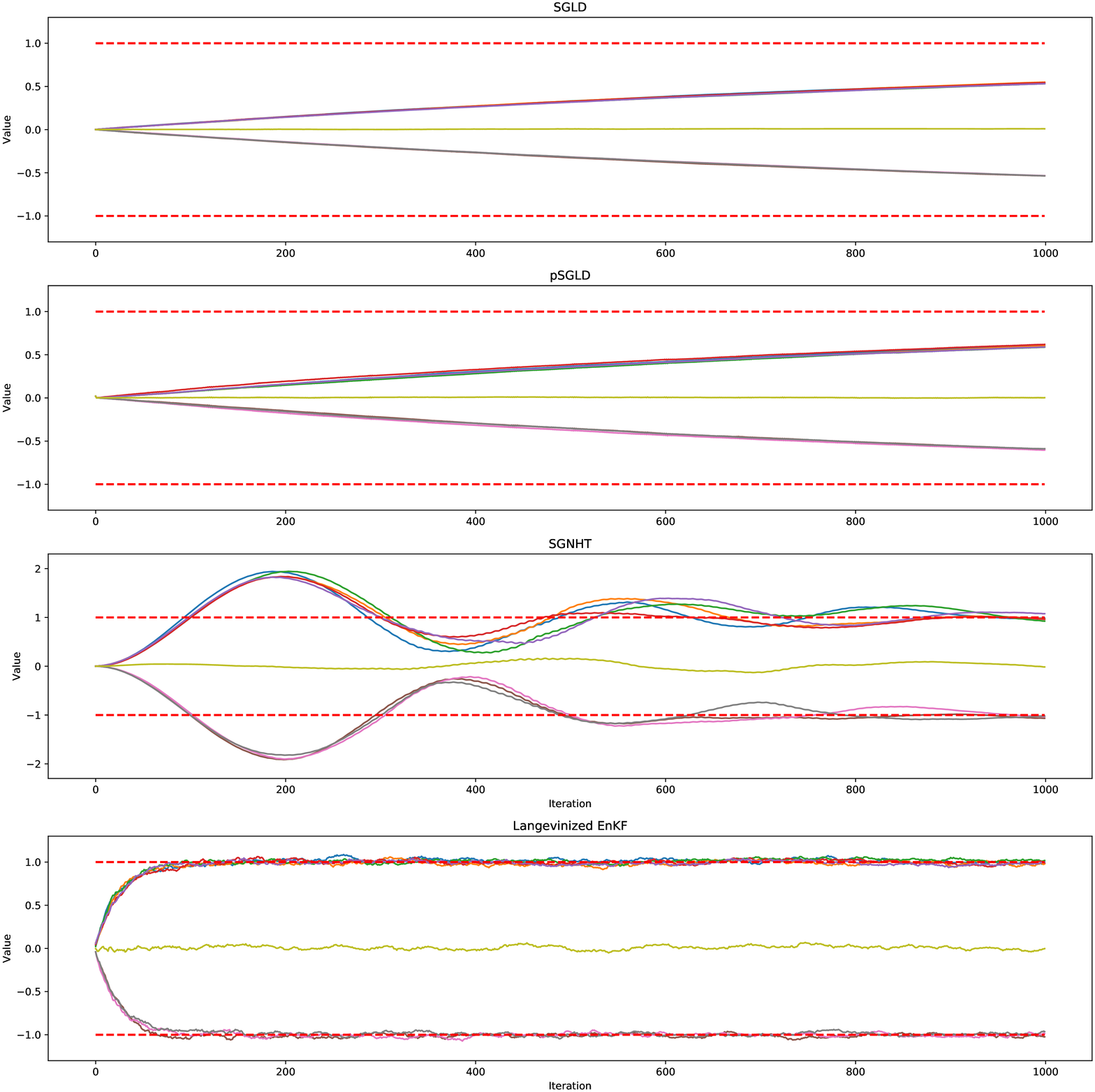,height=4.0in,width=6.0in,angle=0} 
\end{tabular}
\end{center}
 \caption{Convergence trajectories of SGLD, pSGLD, SGNHT and Langevinized EnKF for a large-scale linear regression example.
} 
\label{para_linear}
\end{figure}

\section{Proofs for the Convergence of Algorithm 2 and Algorithm 4}

We are interested in studying the convergence of the Langevinized EnKF in Wasserstein distance.
The $r$-th order Wasserstein distance between two probability measures $\mu$ and $\nu$ is defined by 
\[
W_r(\mu,\nu)=\left( \inf_{\pi\in \Pi(\mu,\nu)} \int_{\BX \times \BX} d(x,y)^r d\pi(x,y) \right)^{1/r}= 
 \inf_{\pi\in \Pi(\mu,\nu)} \left\{E_{\pi} d(X,Y)^r \right\}^{1/r}, 
\]
where $\Pi(\mu,\nu)$ denotes the collection of all probability measures on $\BX \times \BX$ with 
marginals $\mu$ and $\nu$ respectively.

\subsection{Convergence of Algorithm 2} 
 
 \begin{theorem} \label{them1}  Let $\Sigma_t=\frac{n}{N}(I-K_t H_t)$. 
   If $\lambda_l \leq \inf_t \lambda_{\min}(\Sigma_t) \leq \sup_t \lambda_{\max}(\Sigma_t) \leq \lambda_u$ holds for some positive constants $\lambda_l$ and $\lambda_u$, the log-prior density function $\log\pi(x)$ is differentiable with respect to $x$,
    and the learning rate $\epsilon_t=O(t^{-\varpi})$ for some $0<\varpi<1$, 
 then $\lim_{t\to \infty} W_2(\tilde{\pi}_t,\pi_*)=0$, where $\tilde{\pi}_t$ denotes the empirical distribution of $x_t^a$, $\pi_*=\pi(x|y)$ denotes the target posterior distribution, and $W_2(\cdot,\cdot)$ denotes the second-order Wasserstein distance.
 \end{theorem}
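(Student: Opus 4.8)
The plan is to show that the forecast--analysis pair of Algorithm~\ref{EnKFnew} is algebraically identical to a stochastic gradient Riemannian Langevin dynamics (SGRLD) recursion, and then to establish $W_2$-convergence for that recursion by comparison with a preconditioned Langevin diffusion whose invariant law is $\pi_*$.

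First I would carry out the reduction. Substituting the forecast step~(\ref{Algeq1}) into the analysis step~(\ref{Algeq2}) and abbreviating $g_{t-1}=\nabla\log\pi(x_{t-1}^{a})$, one obtains
\begin{equation*}
x_t^{a}=x_{t-1}^{a}+(I-K_tH_t)\,\tfrac{\epsilon_t n}{2N}\,g_{t-1}+K_t(y_t-H_tx_{t-1}^{a})+e_t,\qquad e_t=(I-K_tH_t)w_t-K_tv_t .
\end{equation*}
Two identities finish the bookkeeping. Since $Q_t=\epsilon_tI_p$ and $R_t=2V_t$, the Kalman-gain identity $K_t=(I-K_tH_t)Q_tH_t^{T}R_t^{-1}$ gives $K_t=\tfrac{\epsilon_t}{2}(I-K_tH_t)H_t^{T}V_t^{-1}$, so the two drift terms collapse to $\tfrac{\epsilon_t}{2}\Sigma_t\big[\tfrac{N}{n}H_t^{T}V_t^{-1}(y_t-H_tx_{t-1}^{a})+g_{t-1}\big]=\tfrac{\epsilon_t}{2}\Sigma_t\,\widehat{\nabla}\log\pi(x_{t-1}^{a}\mid y_t)$, with $\Sigma_t=\tfrac{n}{N}(I-K_tH_t)$. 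The covariance identity $(I-K_tH_t)Q_t(I-K_tH_t)^{T}+K_tR_tK_t^{T}=(I-K_tH_t)Q_t$ then gives $\var(e_t)=\tfrac{n}{N}(I-K_tH_t)Q_t=\epsilon_t\Sigma_t$. Hence Algorithm~\ref{EnKFnew} is exactly the SGRLD recursion~(\ref{SGRLDeq}), namely $x_t^{a}=x_{t-1}^{a}+\tfrac{\epsilon_t}{2}\Sigma_t\widehat{\nabla}\log\pi(x_{t-1}^{a}\mid y_t)+e_t$ with $e_t\sim N(0,\epsilon_t\Sigma_t)$; moreover, conditionally on the mini-batch $(y_t,H_t)$, $\widehat{\nabla}\log\pi(\cdot\mid y_t)$ is unbiased for $\nabla\log\pi_*=\nabla\log\pi(\cdot\mid y)$, and $\epsilon_t\Sigma_t$ is exactly the inverse Fisher information of the one-step transition, so the injected noise is calibrated to the drift.

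Second, I would compare this recursion with the preconditioned Langevin diffusion $dX_s=\tfrac12\Sigma\nabla\log\pi_*(X_s)\,ds+\Sigma^{1/2}\,dB_s$, which for any fixed positive-definite constant metric $\Sigma$ has $\pi_*$ as its unique invariant distribution---in particular for every $\Sigma=\Sigma_t$ admitted by the hypothesis $\lambda_lI\preceq\Sigma_t\preceq\lambda_uI$. Following the continuous-interpolation / one-step-error technique of the decreasing-step-size SGLD analyses \citep{chen2015convergence,Teh2016SGLD,SongLiang2020}, I would bound $W_2(\tilde\pi_t,\pi_*)$ by the sum of: (i) the $W_2$-contraction along the diffusion, which is geometric in the effective time $\sum_{s\le t}\epsilon_s$ under the regularity of $\log\pi$ together with the spectral bounds on $\Sigma_t$; (ii) the Euler discretization error accrued at step $t$; and (iii) the stochastic-gradient error at step $t$, controlled by $\epsilon_t^{2}$ times the (uniformly bounded) conditional variance of $\widehat{\nabla}\log\pi(\cdot\mid y_t)$. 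Since $\epsilon_t=O(t^{-\varpi})$ with $0<\varpi<1$ forces $\sum_t\epsilon_t=\infty$ (so the contraction factor tends to $0$) while $\epsilon_t\to0$ makes the per-step errors vanish and stay dominated by the contraction, a telescoping/Gr\"onwall estimate yields $W_2(\tilde\pi_t,\pi_*)\to0$; the uniform spectral bounds on $\Sigma_t$ keep all constants independent of $t$.

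The hard part will be handling the preconditioner $\Sigma_t$, which is simultaneously time-varying and \emph{statistically dependent} on the same mini-batch that produces $\widehat{\nabla}\log\pi(\cdot\mid y_t)$. Because of this dependence, the conditional one-step drift equals $\tfrac{\epsilon_t}{2}\,E[\Sigma_t\widehat{\nabla}\log\pi(x\mid y_t)]$, which is \emph{not} $\tfrac{\epsilon_t}{2}(E\Sigma_t)\nabla\log\pi_*(x)$, so the chain is not literally an Euler scheme for any single diffusion and the standard SGLD bias decomposition does not apply verbatim. The resolution---and the delicate step to make rigorous---is that conditionally on $(y_t,H_t)$ the update is an \emph{exact} inverse-Fisher-preconditioned Langevin step for the mini-batch target $\pi(\cdot\mid y_t)$, so averaging over the mini-batch draws across stages recovers $\pi_*$ exactly as it does for ordinary SGLD; turning this into a $W_2$ statement further requires uniform-in-$t$ second-moment and stability bounds on $\{x_t^{a}\}$ and summability of the error series, which in practice rest on the standard dissipativity-type regularity of $\log\pi$ (and mild conditions on the $V_t$) underlying the SGLD machinery invoked in step two.
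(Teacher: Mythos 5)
Your proposal is correct in outline and takes essentially the same route as the paper. The reduction is identical: the paper also substitutes the forecast into the analysis step, uses the gain identity $K_t=(I-K_tH_t)Q_tH_t^TR_t^{-1}$ and the covariance computation $\var(e_t)=\frac{n}{N}(I-K_tH_t)Q_t=\epsilon_t\Sigma_t$ to rewrite the update as $x_t^a=x_{t-1}^a+\frac{\epsilon_t}{2}\Sigma_t\widehat{\nabla}\log\pi(x_{t-1}^a\mid y_t)+e_t$, and its convergence step is precisely a Dalalyan--Karagulyan-type one-step $W_2$ bound for Langevin iterations with preconditioning, inexact gradients and decaying step sizes (Lemma \ref{lemS2} and Corollary \ref{decay}), applied with zero bias and a gradient-noise variance bounded by a quadratic in $\|x\|$ --- i.e., your contraction-plus-discretization-plus-gradient-noise decomposition in different words. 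The one place you genuinely depart is the handling of the mini-batch dependence of $\Sigma_t$. Your observation that $E[\Sigma_t\widehat{\nabla}\log\pi(x\mid y_t)]\neq(E\Sigma_t)\nabla\log\pi_*(x)$ is sharp, but the cure you sketch --- that each update is an ``exact'' inverse-Fisher step for the mini-batch target and that averaging over mini-batches recovers $\pi_*$ ``as for ordinary SGLD'' --- would not stand as a proof: the steps are Euler-discretized rather than exact, and per-step stationarity with respect to varying mini-batch targets does not average into stationarity for $\pi_*$ (nor is that how SGLD itself is analyzed). The paper does not argue this way; it takes the uniform eigenvalue bounds $\lambda_l,\lambda_u$ on $\Sigma_t$ as a hypothesis and feeds the (marginal-over-mini-batch) unbiasedness and quadratic variance bound of the gradient error into the quantitative one-step lemma with varying conditioning matrices, which is exactly the ``uniform second-moment plus summability'' bookkeeping you defer to at the end; a fully rigorous account of the correlation you flag would rerun that lemma with the bias term measuring $\|E[\Sigma_t\zeta_t\mid x]\|$ rather than $\|\Sigma_t E[\zeta_t\mid x]\|$, which the remark following the lemma (allowing state-dependent bias) accommodates. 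Note also that the contraction in both your argument and the paper's lemma requires strong log-concavity and gradient-Lipschitz continuity of $\log\pi_*$, which holds for this Gaussian-likelihood model with a suitable prior but is stronger than the bare differentiability stated in the theorem, so your appeal to ``regularity of $\log\pi$'' is carrying the same implicit assumption as the paper's proof.
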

\begin{proof} 
  First, we consider the Kalman gain matrix $K_t=Q_t H_t^T(R_t+H_tQH_t^T)^{-1}$, which, with some algebra, can be shown
 \begin{equation} \label{Keq}
  K_t=(I-K_t H_t) Q_t H_t^T R_t^{-1}=(H_t^T R_t^{-1} H_t+Q_t^{-1})^{-1} H_t^T R_t^{-1}. 
 \end{equation}
 Let $\mu_t=E(x_t^f|x_{t-1}^a)=x_{t-1}^a+\delta_t$, where
 $\delta_t= \epsilon_t \frac{n}{2N} \nabla \log\pi(x_{t-1}^a)$.
 Therefore, $x_t^f=\mu_t+w_t$.

 Taking conditional expectation on both sides of  equation (7) of the main text, 
 we have
 \begin{equation} \label{newtoneq2}
  E(x_t^a|x_{t-1}^a) =\mu_t+ K_t (y_t-H_t \mu_t) 
         = x_t^f+ K_t(y_t-H_t x_t^f)-(I-K_tH_t)w_t. 
 \end{equation}
 With the identity (\ref{Keq}), (\ref{newtoneq2}) can be further written as
 \begin{equation} \label{newtoneq3}
 \begin{split}
  E(x_t^a|x_{t-1}^a) &={x}_{t-1}^a+\delta_t+K_t(y_t-H_t {x}_{t-1}^a-H_t\delta_t) 
   = {x}_{t-1}^a + K_t(y_t- H_t {x}_{t-1}^a)+(I-K_tH_t) \delta_t \\
        &=  {x}_{t-1}^a + (I-K_t H_t)Q_t H_t^T R_t^{-1}(y_t-H_t {x}_{t-1}^a)+(I-K_tH_t) \delta_t \\
        &=  {x}_{t-1}^a + (I-K_t H_t)Q_t \left[ H_t^T R_t^{-1}(y_t- H_t {x}_{t-1}^a) + Q_t^{-1} \delta_t \right] \\
        &=   {x}_{t-1}^a + \frac{n}{2N} (I-K_tH_t)Q_t \left[ \frac{N}{n}H_t^T V_t^{-1}(y_t-H_t {x}_{t-1}^a) + 
               \nabla \log \pi(x_{t-1}^a) \right], \\
        &= {x}_{t-1}^a+ \frac{\epsilon_t}{2} \Sigma_t \left[ \frac{N}{n}H_t^T V_t^{-1}(y_t-H_t {x}_{t-1}^a) + 
               \nabla \log \pi(x_{t-1}^a) \right], \\
  \end{split}
 \end{equation}
 by defining  $\Sigma_t= \frac{n}{N} (I-K_tH_t)$ and by noting $Q_t=\epsilon_t I_p$
 and $R_t=2 V_t$.

 For the Langevinized EnKF, the difference between the equation (7) of the main text and equation 
 (\ref{newtoneq2}) is
 \[
  e_t=(I-K_t H_t) w_t -K_t v_t=w_t-K_t(H_tw_t+v_t),
 \]
 for which the mean $E(e_t)=0$ and the covariance
 \[
 \begin{split} 
\var(e_t) & = \frac{n}{N} Q_t+K_t(\frac{n}{N} H_t Q_t H_t^T+\frac{n}{N} R_t) K_t^T-2 \frac{n}{N} K_t H_t Q_t 
  =\frac{n}{N}\left[ Q_t+ K_t H_t Q_t-2 K_t H_t Q_t\right] \\
 & = \frac{n}{N} (I-K_t H_t) Q_t= \epsilon_t \Sigma_t, \\
\end{split}
 \]
 where the second equality holds due to the symmetry of $Q_t$ and $R_t$ and the
 identity $K_t( H_t Q_t H_t^T+R_t) K_t^T=K_t( H_t Q_t H_t^T+R_t) (H_t Q_t^T H_t^T+R_t^T)^{-1} H_t Q_t^T=K_t H_t Q_t$.
 Then, by (\ref{newtoneq3}), we have 
 \begin{equation} \label{June9eq}
 \begin{split}
  x_t^a & =x_t^f +K_t\left[y_n- H_t x_t^f-v_t\right] \\
   & =x_{t-1}^a+ \frac{\epsilon_t}{2} \Sigma_t \left[ \frac{N}{n}H_t^T V_t^{-1}(y_t-H_t {x}_{t-1}^a) + 
      \nabla \log \pi(x_{t-1}^a) \right] +e_t, \\
\end{split}
 \end{equation}
 where $\frac{N}{n} H_t^T V^{-1} (y_t- H_t x_{t-1}^a)$ represents an unbiased
  estimator for the gradient of
  the log-likelihood function, and $\nabla \log \pi(x_{t-1}^a)$ represents the gradient
  of the log-prior density function. 
  By Corollary \ref{decay} (with $\eta = 0$), we have
  $\lim_{t\to \infty} W_2(\tilde{\pi}_t, \pi_*)=0$.
 For this algorithm, it is easy to see that (\ref{biaseq0}) is satisfied, for which the bias factor $\eta=0$ as $\frac{N}{n}H_t^T V_t^{-1}(y_t-H_t {x}_{t-1}^a) + 
      \nabla \log \pi(x_{t-1}^a)$ forms an unbiased estimator of $\nabla \log \pi(x|y)$ at $x_{t-1}^a$,
      and the variance of the estimation error is upper bounded by a quadratic function of $\|x_{t-1}^a\|$.
\end{proof}

\subsection{Convergence of Algorithm 4} 

Let $\pi_t = \pi(x_{t}|y_{1:t})$ denote the filtering distribution at stage $t$, and let $\tilde \pi_t$ denote 
the marginal distribution of $x_{t,\mK}^{a,i}$ generated by Algorithm 4 at iteration $\mK$ of stage $t$. 
 The following conditions are assumed for the dynamic system (1) of the main text at 
 each stage $t$:
  
\begin{itemize}
\item[(A.1)] $\pi_t$ is $s_t$-strongly log-concave:
\begin{equation} \label{Aeq3}
f(x_t)-f(x_t^*)-\nabla f(x_t^*)^T (x_t-x_t^*) \geq \frac{s_t}{2} \|x_t-x_t^*\|_2^2,  \quad \forall x_t,x_t^* \in \mR_t^p,
\end{equation}
where $f(x_t)= -\log \pi(x_t|y_{1:t})=-\log \pi_t$, and $s_t$ is a positive number satisfying 
$s_t \geq c N_t$ for some constant $c>0$. 

\item[(A.2)] $\log(\pi_t)$ is $S_t$-gradient Lipschitz continuous:
 \begin{equation} \label{Aeq4}
 \| \nabla f(x_t)-\nabla f(x_t^*) \|_2 \leq S_t \|x_t-x_t\|_2, \quad \forall x_t,x_t^* \in \mR_t^p.  
\end{equation}
where $S_t$ is a positive number satisfying $S_t\leq CN_t$ for some constant $C>0$. Note that we must have $s_t\leq S_t$.

\item[(A.3)] Let $\Sigma_{t,k}=\frac{n}{N}(I-K_{t,k}H_{t,k})$, and assume that
$\lambda_{t,l}\leq\inf_k\lambda_{\min}(\Sigma_{t,k})\leq \sup_k\lambda_{\max}(\Sigma_{t,k})\leq \lambda_{t,u}$
for some $\lambda_{t,l}$ and $\lambda_{t,u}$, where
$\lambda_{\max}(\cdot)$ and $\lambda_{\min}(\cdot)$ denote the largest and smallest eigenvalues, respectively.

\item[(A.4)] 
The stochastic error induced by the sub-sampling procedure has a bounded variance, i.e.,  $\forall x\in \mR_t^p$,
\[
E[\|(N/n)H_{t,k}^TV_t^{-1}(y_{t,k}-H_{t,k}x) - H_t^T\Gamma_t^{-1}(y_t-H_tx) \|^2]\leq \sigma_{t,s}^2(p+\|x\|^2),
\]
for some constant $\sigma_{t,s}^2>0$, where the expectation is with respect to random sub-sampling.


\item[(A.5)] The state propagator $g(x_t)$ is $l$-Lipschitz and bounded by $M_g$ (i.e., $\sup_x\|g(x)\|\leq M_g$), and $\lambda_{\min}(U_t) \geq \lambda_{t,s}>0$ for some positive constant $\lambda_{t,s}$. 

\item[(A.6)] 
There exist some constant $M$ such that $W_2(\nu_{t+1}, \pi_{t+1})\leq M$ for all $t\geq 0$,
where 
$\nu_{t+1}(x_{t+1})=\int \pi(x_{t+1}|x_{t})\pi_t(x_t)dx_t$ is the ideal stage initial distribution of $x_{t+1,0}^{a,i}$ for $t\geq1$, and $\nu_1$ is the initial distribution used at stage 1. 
Similarly, we  define $\tilde\nu_{t+1}(x_{t+1})=\int \pi(x_{t+1}|x_{t})\tilde\pi_t(x_t)dx_t$ to be the   practical stage initial distribution of $x_{t+1,0}^{a,i}$ for $t\geq 1$.



\end{itemize}



\begin{remark} \label{remarkApp1}
Log-concavity and strong log-concavity are preserved by products and marginalization \citep{SaumardWellner2014}.
If the prior density $\pi(x_1)$ is log-concave, the state transition density  $\pi(x_{t}|x_{t-1})$ is log-concave with respect to both $x_{t}$ and $x_{t-1}$ for each
stage $t$, and the emission density $\pi(y_t|x_t)$ is $\lambda_t$-strongly-log-concave 
for each stage $t$ where 
$\lambda_t$ is the smallest eigenvalue of $H_t^T\Gamma_t^{-1}H_t$, then condition (A.1) holds with $s_t=\lambda_t$.
Furthermore, by Brascamp-Lieb inequality \citep{brascamp2002extensions}, we must have that $\pi_t$ has finite variance, that is
\begin{equation}\label{postvar}
p\sigma_{t,v}^2:=E_{\pi_t} \|X-E(X)\|^2 \leq p/s_t.
\end{equation}
Strongly log-concave conditions are commonly used in the theoretical study of Langevin Monte Carlo, see e.g. \cite{DalalyanK2017} and \cite{cheng2018convergence}. These conditions potentially can be relaxed following the work of \cite{durmus2017nonasymptotic}.
\end{remark}

\begin{remark}
Assumption (A.6) says the ideal initialization distribution in each stage is not too bad, and  essentially, it actually requires that the data are coherent to the state space model (1) given in the main text, in the sense that the predictive distribution based on $y_{1:t}$ ($\pi(x_{t+1}|y_{1:t})$) and estimated parameter based on $y_{t+1}$ ($\hat x_{t+1}=(H_{t+1}^TH_{t+1})^{-1}H_{t+1}^T y_{t+1}$) are close.
\end{remark}



\begin{lemma}\label{lem0}
Let $\mu$ and $\nu$ be two distribution laws on $\mathbb{R}^p$, and let $f$ be an $L$-Lipschitz continuous function,
then
\[
\bigg\|\int f(x)d\mu(x)-\int f(x)d\nu(x)\bigg\|\leq LW_2(\mu,\nu).
\]
\end{lemma}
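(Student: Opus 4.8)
The plan is to use the coupling (Kantorovich) characterization of the Wasserstein distance together with Jensen's inequality. First I would fix an arbitrary coupling $\pi\in\Pi(\mu,\nu)$, i.e., a probability measure on $\mathbb{R}^p\times\mathbb{R}^p$ whose first marginal is $\mu$ and whose second marginal is $\nu$. Using the two marginal identities I can rewrite the difference of integrals as a single integral against the coupling,
\[
\int f(x)\,d\mu(x)-\int f(x)\,d\nu(x)=\int_{\mathbb{R}^p\times\mathbb{R}^p}\bigl(f(x)-f(y)\bigr)\,d\pi(x,y).
\]
A word of care is needed only on well-definedness: a Lipschitz function grows at most linearly, so the integrals are finite whenever the measures in question have finite second moments, and if $W_2(\mu,\nu)=\infty$ the claimed bound is vacuous; in the applications in this paper the relevant second moments are finite.

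Next I would move the norm inside the integral via the triangle inequality (equivalently, Jensen's inequality applied to the convex function $\|\cdot\|$) and then invoke the $L$-Lipschitz property of $f$:
\[
\Bigl\|\int\bigl(f(x)-f(y)\bigr)\,d\pi(x,y)\Bigr\|\le\int\|f(x)-f(y)\|\,d\pi(x,y)\le L\int\|x-y\|\,d\pi(x,y).
\]
I would then bound this $L^1$-type quantity by its $L^2$ counterpart using the Cauchy--Schwarz inequality (equivalently, Jensen applied to $t\mapsto t^2$):
\[
\int\|x-y\|\,d\pi(x,y)\le\Bigl(\int\|x-y\|^2\,d\pi(x,y)\Bigr)^{1/2}.
\]

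Finally, since this chain of inequalities holds for \emph{every} coupling $\pi\in\Pi(\mu,\nu)$, I would take the infimum over $\pi$ on the right-hand side, which by the definition of the second-order Wasserstein distance recalled at the start of this section equals $W_2(\mu,\nu)$, yielding $\bigl\|\int f\,d\mu-\int f\,d\nu\bigr\|\le L\,W_2(\mu,\nu)$. There is essentially no serious obstacle in this argument; the only point requiring attention is the integrability/well-definedness just mentioned, which follows from the at-most-linear growth of Lipschitz functions and the finiteness of the relevant second moments, and which is automatic whenever the asserted bound is non-trivial.
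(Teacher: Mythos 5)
Your proof is correct and follows essentially the same route as the paper's: couple $\mu$ and $\nu$, rewrite the difference of integrals as an expectation of $f(X)-f(Y)$ under the coupling, and chain the triangle inequality, the Lipschitz bound, and Cauchy--Schwarz (Jensen) to reach $L\,W_2(\mu,\nu)$. The only cosmetic difference is that you work with an arbitrary coupling and take the infimum at the end, whereas the paper invokes an optimal (distance-attaining) coupling directly; both are fine, and your remark on integrability is a harmless addition.
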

\begin{proof}
By definition of Wasserstein distance, there exist random variables 
$X_1$ and $X_2$, whose marginal distributions follow $\mu$ and $\nu$ respectively, such that $\|X_1-X_2\|_{L_2}=(E\|X_1-X_2\|_2^2)^{1/2}=W_2(\mu,\nu)$.
\[
\begin{split}
    &\bigg\|\int f(x)d\mu(x)-\int f(x)d\nu(x)\bigg\| = \|Ef(X_1)-Ef(X_2)\| \leq E\|f(X_1)-f(X_2)\|\\
    \leq& EL\|X_1-X_2\|_2 = LE\sqrt{\|X_1-X_2\|_2^2}\leq L\sqrt{E\|X_1-X_2\|_2^2}=LW_2(\mu,\nu).
\end{split}
\]
\end{proof}

\begin{lemma}\label{lem6}
Under the assumption (A.5), we have
\[
W_2(\tilde \nu_{t+1}, \nu_{t+1})\leq lW_2(\pi_t,\tilde\pi_t)
\]
\end{lemma}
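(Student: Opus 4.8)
The plan is to bound $W_2(\tilde\nu_{t+1},\nu_{t+1})$ by exhibiting a single explicit coupling of the two laws and then invoking the Lipschitz property of $g$ from (A.5). First I would take $(X_t,\tilde X_t)$ to be an \emph{optimal} coupling of $\pi_t$ and $\tilde\pi_t$, i.e.\ a pair of random vectors on a common probability space with marginals $\pi_t$ and $\tilde\pi_t$ such that $E\|X_t-\tilde X_t\|_2^2 = W_2^2(\pi_t,\tilde\pi_t)$; such a minimizer exists on $\mathbb{R}^p$ because both marginals have finite second moment (for $\pi_t$ this is guaranteed by \eqref{postvar} in Remark \ref{remarkApp1}, and the analogous bound holds for $\tilde\pi_t$). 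Next I would draw a single innovation $u\sim N(0,U_{t+1})$ independent of $(X_t,\tilde X_t)$ and set
\[
X_{t+1} = g(X_t) + u, \qquad \tilde X_{t+1} = g(\tilde X_t) + u .
\]
Conditionally on $X_t$, the vector $X_{t+1}$ has density $\pi(x_{t+1}\mid X_t)$, so marginalizing over $X_t\sim\pi_t$ gives $X_{t+1}\sim\nu_{t+1}$; the same argument with $\tilde X_t\sim\tilde\pi_t$ gives $\tilde X_{t+1}\sim\tilde\nu_{t+1}$. Hence $(X_{t+1},\tilde X_{t+1})$ is a valid coupling of $\nu_{t+1}$ and $\tilde\nu_{t+1}$ (all four laws have finite second moments since $g$ is bounded by $M_g$ and $u$ is Gaussian).

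The key step is then the elementary estimate: since the common noise cancels and $g$ is $l$-Lipschitz by (A.5),
\[
W_2^2(\tilde\nu_{t+1},\nu_{t+1}) \le E\|X_{t+1}-\tilde X_{t+1}\|_2^2 = E\|g(X_t)-g(\tilde X_t)\|_2^2 \le l^2\, E\|X_t-\tilde X_t\|_2^2 = l^2\, W_2^2(\pi_t,\tilde\pi_t).
\]
Taking square roots gives $W_2(\tilde\nu_{t+1},\nu_{t+1})\le l\, W_2(\pi_t,\tilde\pi_t)$, which is the claim.

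There is essentially no hard part. The only two points that merit a sentence of care are: (i) the existence of the optimal $W_2$ coupling of $\pi_t$ and $\tilde\pi_t$, which rests on both marginals having finite second moment, handled via \eqref{postvar}; and (ii) the verification that the synchronously coupled push-forwards $X_{t+1}$ and $\tilde X_{t+1}$ have exactly the marginal laws $\nu_{t+1}$ and $\tilde\nu_{t+1}$, which is immediate from the definition of $\nu_{t+1},\tilde\nu_{t+1}$ as mixtures of the Gaussian transition kernel $\pi(\cdot\mid x_t)=N(g(x_t),U_{t+1})$. The device that makes the proof work is the use of a common innovation $u$ (a \emph{synchronous} coupling), so that the stochastic part of the evolution drops out and the discrepancy is controlled purely by the contraction/expansion constant $l$ of the deterministic propagator $g$.
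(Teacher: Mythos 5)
Your proof is correct and is essentially identical to the paper's: both use the optimal coupling of $\pi_t$ and $\tilde\pi_t$, push it forward synchronously with a common Gaussian innovation $u\sim N(0,U_{t+1})$, and apply the $l$-Lipschitz bound on $g$ from (A.5). Your added remarks on existence of the optimal coupling and verification of the marginals are fine but not a different argument.
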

\begin{proof}
By definition of Wasserstein distance, there exist random variables 
$X_1$ and $X_2$, whose marginal distributions are $\pi_t$ and $\tilde \pi_t$ respectively, and $E(\|X_1-X_2\|_2^2)=W_2^2(\tilde\pi_t,\pi_t)$. Define $Y_1=g(X_1)+u$, and $Y_2=g(x_2)+u$, where $u \sim N(0,U_{t+1})$ such that the marginal distributions of $Y_1$ and $Y_2$ are $\nu_{t+1}$ and $\tilde\nu_{t+1}$ respectively. Then,
\[
\begin{split}
    &W_2^2(\tilde \nu_{t+1}, \nu_{t+1}) \leq
    E\|Y_1-Y_2\|^2_2 = E\|g(X_1)-g(X_2)\|_2^2
    \leq El^2\|X_1-X_2\|_2^2 = l^2W_2^2(\pi_t,\tilde\pi_t).
\end{split}
\]
\end{proof}

\begin{lemma}\label{lem2}
If $f$ is an $L$-Lipschitz continuous function, then $E\|f(X)-E(f(X))\|^2_2\leq L^2E\|X-E(X)\|^2_2$.
\end{lemma}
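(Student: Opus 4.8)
The plan is to reduce the statement to two elementary facts: that the mean minimizes the expected squared deviation, and the pointwise Lipschitz bound on $f$. First I would record that for any square-integrable random vector $Z$ and any deterministic $c$,
\[
E\|Z-E Z\|_2^2\leq E\|Z-c\|_2^2,
\]
which follows by writing $\|Z-c\|_2^2=\|Z-EZ\|_2^2+2\langle Z-EZ,\,EZ-c\rangle+\|EZ-c\|_2^2$ and taking expectations, the cross term vanishing because $E(Z-EZ)=0$. Applying this with $Z=f(X)$ and the particular choice $c=f(EX)$ gives
\[
E\|f(X)-E f(X)\|_2^2\leq E\|f(X)-f(EX)\|_2^2 .
\]

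Next I would invoke the $L$-Lipschitz property of $f$ pointwise: for every $x$ in the domain, $\|f(x)-f(EX)\|_2\leq L\|x-EX\|_2$, hence $\|f(X)-f(EX)\|_2^2\leq L^2\|X-EX\|_2^2$ almost surely. Taking expectations and chaining with the previous display yields
\[
E\|f(X)-E f(X)\|_2^2\leq E\|f(X)-f(EX)\|_2^2\leq L^2\, E\|X-EX\|_2^2,
\]
which is the claim. An equivalent symmetrization argument, which I would mention in passing, is: let $X'$ be an independent copy of $X$; then $E\|f(X)-Ef(X)\|_2^2=\frac12 E\|f(X)-f(X')\|_2^2\leq\frac{L^2}{2}E\|X-X'\|_2^2=L^2 E\|X-EX\|_2^2$, using that the variance of a random vector equals half the expected squared difference of two independent copies.

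I do not anticipate a genuine obstacle here; the only point meriting a word of care is integrability of $f(X)$, which is automatic since $\|f(X)\|_2\leq\|f(EX)\|_2+L\|X-EX\|_2$ and $X$ has a finite second moment in all the settings where the lemma is applied (and if $E\|X-EX\|_2^2=\infty$ the inequality is vacuous). The lemma is a standard variance-contraction fact, recorded here so that it can be applied to the sub-sampling error term in the proof of Theorem S2.
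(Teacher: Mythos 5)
Your proposal is correct. Your main argument takes a slightly different route from the paper: you first use the fact that the mean minimizes the expected squared deviation, bounding $E\|f(X)-Ef(X)\|_2^2$ by $E\|f(X)-f(EX)\|_2^2$, and then apply the Lipschitz bound pointwise at the anchor $EX$; the paper instead uses the symmetrization identity $E\|W-EW\|_2^2=\tfrac12 E\|W-W'\|_2^2$ with two independent copies, applies the Lipschitz bound to $\|f(X_1)-f(X_2)\|_2$, and converts back with the same identity for $X$ --- which is exactly the alternative you sketch in passing. The two arguments are equally elementary and give the same constant $L^2$; yours needs $f$ to be evaluated at the deterministic point $EX$ (harmless here, since $f$ is Lipschitz on all of $\mathbb{R}^p$ and $E\|X\|_2^2<\infty$ in the applications), while the paper's avoids any reference to $f(EX)$ and works directly with the variance-as-half-expected-squared-difference identity. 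Your remark on integrability and on the inequality being vacuous when the right-hand side is infinite is a fine, if inessential, addition.
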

\begin{proof}
Let $X_1$ and $X_2$ be two independent copies of $X$. Then 
\[
\begin{split}
    &E\|f(X)-E(f(X))\|^2_2=(1/2)E\|f(X_1)-f(X_2)\|^2_2\leq (1/2)E(L\|X_1-X_2\|_2)^2\\
    \leq& L^2(1/2)E(\|X_1-X_2\|_2)^2=L^2E\|X-E(X)\|_2^2.
\end{split}
\]
\end{proof}

\begin{lemma}\label{lem3}
Let $X\sim\mu$ and $Y\sim\nu$, then \[E\|Y-E(Y)\|_2^2\leq E\|X-E(X)\|_2^2+W_2^2(\mu,\nu)+2W_2(\mu,\nu)\sqrt{E\|X-E(X)\|_2^2}.\]
\end{lemma}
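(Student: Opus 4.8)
The plan is to realize the optimal coupling between $\mu$ and $\nu$ explicitly and then expand the squared $L_2$-norm. First I would invoke the definition of the second-order Wasserstein distance to obtain random variables $X\sim\mu$ and $Y\sim\nu$ on a common probability space with $E\|X-Y\|_2^2 = W_2^2(\mu,\nu)$. Write $a = E(X)$ and $b = E(Y)$, and decompose
\[
Y - b = (X - a) + \big[(Y-X) - (b-a)\big].
\]
Taking the squared norm and expectation, the cross term is $2\,E\big\langle X-a,\ (Y-X)-(b-a)\big\rangle$, which after distributing becomes $2\,E\langle X-a, Y-X\rangle$ since $E(X-a)=0$ kills the $(b-a)$ piece.

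Next I would bound the remaining pieces. The middle term satisfies
\[
E\|(Y-X)-(b-a)\|_2^2 = E\|Y-X\|_2^2 - \|b-a\|_2^2 \leq E\|Y-X\|_2^2 = W_2^2(\mu,\nu),
\]
using that the variance of $Y-X$ is at most its second moment. For the cross term, Cauchy--Schwarz gives
\[
\big|E\langle X-a,\ (Y-X)-(b-a)\rangle\big| \leq \sqrt{E\|X-a\|_2^2}\ \sqrt{E\|(Y-X)-(b-a)\|_2^2} \leq \sqrt{E\|X-E(X)\|_2^2}\ W_2(\mu,\nu).
\]
Combining the three bounds yields exactly
\[
E\|Y-E(Y)\|_2^2 \leq E\|X-E(X)\|_2^2 + W_2^2(\mu,\nu) + 2 W_2(\mu,\nu)\sqrt{E\|X-E(X)\|_2^2},
\]
as claimed. (One can note the right-hand side is $\big(\sqrt{E\|X-E(X)\|_2^2} + W_2(\mu,\nu)\big)^2$, i.e. this is just a triangle inequality in $L_2$ for centered variables, but the termwise bookkeeping above is the cleanest self-contained route.)

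I do not anticipate a serious obstacle here; the only point requiring a little care is making sure the coupling achieving $W_2$ exists (which holds since $\BX=\mathbb{R}^p$ is Polish and the infimum in the definition of $W_2$ is attained) and that $X,Y$ have finite second moments so that $E(X)$, $E(Y)$ and all the expectations above are well-defined — this is implicit in the statement since otherwise $W_2(\mu,\nu)$ and $E\|X-E(X)\|_2^2$ would be infinite and the inequality trivial. The bound on the centered second moment of $Y-X$ via its uncentered second moment, and the single application of Cauchy--Schwarz to the cross term, are the two substantive steps; everything else is algebra.
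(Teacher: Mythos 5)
Your proof is correct and takes essentially the same route as the paper's: pass to the (near-)optimal coupling, expand the centered second moment, bound the centered second moment of $Y-X$ by $W_2^2(\mu,\nu)$, and control the cross term by Cauchy--Schwarz. The paper merely normalizes $E(X)=0$ and drops the nonpositive $-\|E(Y)\|_2^2$ term rather than centering $Y-X$ explicitly as you do, which is a cosmetic difference in bookkeeping only.
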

\begin{proof}
By definition of Wasserstein metric, we can assume that
$X$ and $Y$ satisfy $\|X-Y\|_{L_2}=(E\|X-Y\|_2^2)^{1/2}=W_2(\mu,\nu)$.
Without loss of generality, we also assume that $EX$, the mean of measure $\mu$, is 0. Then
\[
\begin{split}
    &[E\|Y-E(Y)\|_2^2-E\|X\|_2^2]-W_2^2(\mu,\nu)\\
    =&EY^TY-(EY)^T(EY)-EX^TX-EX^TX-EY^TY+2EX^TY\\
    =&2EX^TY-2EX^TX-(EY)^T(EY)\leq 2EX^T(Y-X)\leq 2E\|X\|_2\|Y-X\|_2\\
    \leq&2\sqrt{E\|X\|_2^2E\|Y-X\|_2^2}=2W_2(\mu,\nu)\sqrt{E\|X\|_2^2}.
\end{split}
\]
\end{proof}

 
\begin{lemma}\label{w20}
Let $X\sim\mu$ and $Y\sim\nu$, then 
\[
 E\|Y\|^2\leq E\|X\|^2+W_2^2(\mu,\nu)+2W_2(\mu,\nu)\sqrt{E\|X\|^2},
 \]
 where $W_2(\cdot,\cdot)$ denotes the second order Wasserstein distance. 
\end{lemma}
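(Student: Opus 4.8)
The plan is to mimic the argument already used for Lemma \ref{lem3}, exploiting only that the $L_2$-norm of a random vector satisfies the triangle inequality. First I would invoke the definition of the second-order Wasserstein distance to put $\mu$ and $\nu$ on a common probability space: there exist random vectors $X\sim\mu$ and $Y\sim\nu$, coupled optimally, such that $\|X-Y\|_{L_2}:=(E\|X-Y\|_2^2)^{1/2}=W_2(\mu,\nu)$. This is the only place where the structure of $W_2$ enters the proof.

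Then I would simply apply the triangle inequality in $L_2$,
\[
\|Y\|_{L_2}\le \|X\|_{L_2}+\|Y-X\|_{L_2}=\sqrt{E\|X\|^2}+W_2(\mu,\nu),
\]
and square both sides to obtain
\[
E\|Y\|^2\le E\|X\|^2+2W_2(\mu,\nu)\sqrt{E\|X\|^2}+W_2^2(\mu,\nu),
\]
which is exactly the claim. Equivalently, one may expand $E\|Y\|^2=E\|X\|^2+2E\,X^T(Y-X)+E\|Y-X\|^2$ and bound the cross term by Cauchy--Schwarz, $E\,X^T(Y-X)\le\sqrt{E\|X\|^2}\sqrt{E\|Y-X\|^2}=\sqrt{E\|X\|^2}\,W_2(\mu,\nu)$; this is the form that most closely parallels the displayed computation in the proof of Lemma \ref{lem3}, the only difference being that here we do not recenter $X$ to have mean zero.

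There is essentially no obstacle: the statement is a soft consequence of the triangle (or Cauchy--Schwarz) inequality once an optimal coupling is in hand. The only point deserving a word of care is the attainment of the infimum defining $W_2$; if one prefers not to appeal to existence of optimal transport plans, one can instead take couplings with $E\|X_n-Y_n\|_2^2\to W_2^2(\mu,\nu)$ and fixed marginals, run the estimate above for each $n$, and let $n\to\infty$. The inequality is also trivially true when $E\|X\|^2=\infty$.
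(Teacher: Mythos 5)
Your proposal is correct and is essentially the paper's own argument: the paper also takes an optimal coupling with $E\|X-Y\|^2=W_2^2(\mu,\nu)$, expands $E\|Y\|^2-E\|X\|^2-W_2^2(\mu,\nu)=2E\,X^T(Y-X)$, and bounds the cross term by Cauchy--Schwarz, which is exactly the second formulation you describe (and is equivalent to squaring your $L_2$ triangle inequality). Your extra remarks on near-optimal couplings and the infinite-second-moment case are fine but not needed.
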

\begin{proof}
By definition of Wasserstein metric, W.O.L.G, we can assume that
$X$ and $Y$ satisfy $E\|X-Y\|^2=W_2^2(\mu,\nu)$. Then
\[
\begin{split}
    &[E\|Y\|^2-E\|X\|^2]-W_2^2(\mu,\nu)\\
    =&EY^TY-EX^TX-EX^TX-EY^TY+2EX^TY\\
    =&2EX^TY-2EX^TX = 2EX^T(Y-X)\leq 2E\|X\| \|Y-X\|\\
    \leq&2\sqrt{E\|X\|^2E\|Y-X\|^2}=2W_2(\mu,\nu)\sqrt{E\|X\|^2}.
\end{split}
\]
\end{proof}

 Lemma \ref{lemS2} is a generalization of Theorem 4 of \cite{DalalyanK2017}, as well as a generalization of 
 Lemma S2 of \cite{SongLiang2020}. 
 
\begin{lemma} \label{lemS2} 
 Let $x_{k}$ and $x_{k+1}$ be two random vectors in $\mR^p$ satisfying
\[
 x_{k+1} = x_{k}-\epsilon \Sigma [\nabla f(x_{k})+\zeta_{k}]+\sqrt{2\epsilon}e_{k+1},
\]
where $e_{k+1} \sim N(0,\Sigma)$, and $\zeta_{k}$ denotes the random error of the gradient estimate 
 which can depend on $x_{k}$.  
Let $\pi_k$ be the distribution of $x_{k}$, and let $\pi_*\propto \exp\{-f\}$ be the target distribution.
Suppose that $\zeta_{k}$ satisfies
 \begin{equation} \label{biaseq0}
 \| E(\zeta_{k}|x_{k}) \|^2  \leq \eta^2p, 
 \quad E[ \| \zeta_{k}- E(\zeta_{k}| x_{k}) \|^2 ] \leq \sigma^2_1 p+ \sigma^2_2\|x_{k}\|^2,
 \end{equation}
 for some constants $\eta$ and $\sigma$, and $\zeta_{k}$'s 
 are independent of $e_{k+1}$'s.
 If the function $f$ is $s$-strongly convex and $S$-gradient-Lipschitz, 
 $\lambda_{\min}(\Sigma)=\lambda_l$, $\lambda_{\max}(\Sigma)=\lambda_u$, 
 and the learning rate $\epsilon \leq 2/(s \lambda_l+S \lambda_u)$, then 
 \begin{equation} \label{coneq}
 \begin{split}
 W_2^2(\pi_{k+1},\pi_*)    \leq&\left[(1-\lambda_l s\epsilon+\sqrt 2\sigma_2 \lambda_u \epsilon)W_2(\pi_{k},\pi_*)+1.65S(\lambda_u^3\epsilon^3p)^{1/2}+\epsilon\eta\lambda_u\sqrt p\right]^2\\
   &+\epsilon^2\sigma^2_1\lambda_u^2p+2\epsilon^2\sigma^2_2\lambda_u^2pV.
   \end{split}
 \end{equation}
 where $V=\int \|x\|^2 \pi_*(x)dx$. 
\end{lemma}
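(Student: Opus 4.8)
The plan is to adapt the one-step Langevin Monte Carlo bound of \cite{DalalyanK2017} (their Theorem~4) and of \cite{SongLiang2020} to the preconditioned and biased setting, via a synchronous coupling with the continuous-time diffusion. First I would introduce the preconditioned Langevin SDE $dY_\tau=-\Sigma\nabla f(Y_\tau)\,d\tau+\sqrt 2\,\Sigma^{1/2}\,dW_\tau$, whose invariant law is $\pi_*\propto e^{-f}$, start it from $Y_0=y$ where $(x_{k},y)$ is an optimal $W_2$ coupling of $(\pi_{k},\pi_*)$, and drive it by a Brownian motion $W$ on $[0,\epsilon]$ coupled so that $\sqrt{2\epsilon}\,e_{k+1}=\sqrt2\,\Sigma^{1/2}W_\epsilon$. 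By stationarity $Y_\epsilon\sim\pi_*$, hence $W_2^2(\pi_{k+1},\pi_*)\le E\|x_{k+1}-Y_\epsilon\|^2$, and since the Gaussian parts cancel,
\begin{equation*}
\begin{split}
x_{k+1}-Y_\epsilon =&\ \big[(x_k-y)-\epsilon\Sigma(\nabla f(x_k)-\nabla f(y))\big]+\int_0^\epsilon\Sigma\big(\nabla f(Y_\tau)-\nabla f(y)\big)\,d\tau\\
&-\epsilon\Sigma\,E(\zeta_k\mid x_k)-\epsilon\Sigma\big(\zeta_k-E(\zeta_k\mid x_k)\big)\ =:\ A+B-C-D.
\end{split}
\end{equation*}
Because $D$ is centered conditionally on $x_k$ and independent of the driving Brownian path, $E\langle A+B-C,D\rangle=0$, so $E\|x_{k+1}-Y_\epsilon\|^2=E\|A+B-C\|^2+E\|D\|^2$, and then $\|A+B-C\|_{L_2}\le\|A\|_{L_2}+\|B\|_{L_2}+\|C\|_{L_2}$.

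Next I would estimate the four terms. (i) For the exact-gradient step $A$, write $\nabla f(x_k)-\nabla f(y)=M(x_k-y)$ with $M=\int_0^1\nabla^2 f(y+t(x_k-y))\,dt$ satisfying $sI\preceq M\preceq SI$; since $I-\epsilon\Sigma M$ is similar to the symmetric matrix $I-\epsilon\Sigma^{1/2}M\Sigma^{1/2}$ whose eigenvalues lie in $[1-\epsilon S\lambda_u,\,1-\epsilon s\lambda_l]$, the condition $\epsilon\le 2/(s\lambda_l+S\lambda_u)$ makes the step contractive with factor $1-\epsilon s\lambda_l$ once the change of basis is accounted for — equivalently, the substitution $\widetilde x=\Sigma^{-1/2}x$ turns the recursion into an unpreconditioned SGLD step for $\widetilde f=f\circ\Sigma^{1/2}$, which is $s\lambda_l$-strongly convex and $S\lambda_u$-gradient-Lipschitz, so the Euclidean bounds of \cite{DalalyanK2017} and \cite{SongLiang2020} apply and are transferred back using $\lambda_l I\preceq\Sigma\preceq\lambda_u I$; this gives $\|A\|_{L_2}\le(1-\lambda_l s\epsilon)W_2(\pi_k,\pi_*)$. (ii) For the bias term, $\|C\|_{L_2}\le\lambda_u\|E(\zeta_k\mid x_k)\|_{L_2}\le\lambda_u\eta\sqrt p$ directly from (\ref{biaseq0}). (iii) For the discretization term, $\|B\|\le\lambda_u S\int_0^\epsilon\|Y_\tau-Y_0\|\,d\tau$; using $E\|Y_\tau-Y_0\|^2\le \tau^2\lambda_u^2\,E_{\pi_*}\|\nabla f\|^2+2\tau\lambda_u p$ (It\^o isometry, stationarity of $Y_\tau$, $\operatorname{tr}\Sigma\le\lambda_u p$) together with $E_{\pi_*}\|\nabla f\|^2=E_{\pi_*}[\operatorname{tr}\nabla^2 f]\le Sp$ (integration by parts, valid by the log-concavity of $\pi_*$ noted in Remark~\ref{remarkApp1}), integrating over $\tau\in[0,\epsilon]$ gives $E\|B\|^2\le c_0 S^2\lambda_u^3 p\,\epsilon^3$ with an absolute constant which, sharpened as in \cite{DalalyanK2017}, yields $\|B\|_{L_2}\le 1.65\,S(\lambda_u^3\epsilon^3 p)^{1/2}$. (iv) For the stochastic-noise term, $E\|D\|^2\le\epsilon^2\lambda_u^2(\sigma_1^2 p+\sigma_2^2 E\|x_k\|^2)$, and Lemma~\ref{w20} with $X\sim\pi_*$, $Y=x_k\sim\pi_k$ gives $E\|x_k\|^2\le V+W_2^2(\pi_k,\pi_*)+2W_2(\pi_k,\pi_*)\sqrt V\le 2V+2W_2^2(\pi_k,\pi_*)$ by the arithmetic--geometric inequality.

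Finally I would assemble the pieces: (iv) yields $E\|D\|^2\le\epsilon^2\sigma_1^2\lambda_u^2 p+2\epsilon^2\sigma_2^2\lambda_u^2 V+\big(\sqrt2\,\sigma_2\lambda_u\epsilon\,W_2(\pi_k,\pi_*)\big)^2$, so writing $u=\|A\|_{L_2}+\|B\|_{L_2}+\|C\|_{L_2}$ and $v=\sqrt2\,\sigma_2\lambda_u\epsilon\,W_2(\pi_k,\pi_*)$,
\[
E\|x_{k+1}-Y_\epsilon\|^2\le u^2+v^2+\epsilon^2\sigma_1^2\lambda_u^2 p+2\epsilon^2\sigma_2^2\lambda_u^2 V\le (u+v)^2+\epsilon^2\sigma_1^2\lambda_u^2 p+2\epsilon^2\sigma_2^2\lambda_u^2 V,
\]
and plugging the bounds from (i)--(iii) into $u+v$ makes it equal to $(1-\lambda_l s\epsilon+\sqrt2\,\sigma_2\lambda_u\epsilon)W_2(\pi_k,\pi_*)+1.65\,S(\lambda_u^3\epsilon^3 p)^{1/2}+\epsilon\eta\lambda_u\sqrt p$, which is precisely (\ref{coneq}). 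The main obstacle is the contraction estimate in step~(i): because $\Sigma$ and $\nabla^2 f$ do not commute, $\|I-\epsilon\Sigma M\|$ cannot be bounded by $1-\epsilon s\lambda_l$ in the plain Euclidean norm, so the contraction must be proved in the $\Sigma^{-1}$-weighted geometry (equivalently, after the $\Sigma^{-1/2}$ change of variables) and then carried carefully back to the unweighted $W_2$ distance — keeping every occurrence of $\lambda_l$ and $\lambda_u$ consistent through this transfer, and synchronizing it with the conditional-variance bookkeeping of step~(iv), is the delicate part; the diffusion moment estimate in step~(iii) is routine given strong log-concavity.
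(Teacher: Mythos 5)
Your proposal is, modulo a change of variables, the same proof as the paper's. The paper first maps $x\mapsto\Sigma^{-1/2}x$, so that the recursion becomes an ordinary (unpreconditioned) inexact-gradient Langevin step for $\tilde f(x)=f(\Sigma^{1/2}x)$, which is $s\lambda_l$-strongly convex and $S\lambda_u$-gradient-Lipschitz, couples it with the diffusion $dL_\tau=-\Sigma^{1/2}\nabla f(\Sigma^{1/2}L_\tau)d\tau+\sqrt2\,dW_\tau$ started from stationarity, and then multiplies the Dalalyan--Karagulyan decomposition through by $\Sigma^{1/2}$ to return to the Euclidean $W_2$; your diffusion $Y_\tau=\Sigma^{1/2}L_\tau$, your splitting $A+B-C-D$, your orthogonality of the centered-noise term, the $1.65S(\lambda_u^3\epsilon^3p)^{1/2}$ discretization bound (the paper's $\|\Sigma^{1/2}W\|_{L_2}$ term), the bias bound $\epsilon\eta\lambda_u\sqrt p$, and the use of Lemma \ref{w20} plus the absorption of the $\sqrt2\,\sigma_2\lambda_u\epsilon W_2$ piece into the squared bracket are all exactly the paper's steps, with the same constants. (Your final bound carries $2\epsilon^2\sigma_2^2\lambda_u^2V$ rather than the statement's $2\epsilon^2\sigma_2^2\lambda_u^2pV$, which matches what the paper's own proof actually derives.)

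The one step you leave open --- the contraction $\|A\|_{L_2}\le(1-s\lambda_l\epsilon)W_2(\pi_k,\pi_*)$ in the plain Euclidean norm, i.e.\ $\|(I-\epsilon\Sigma M)v\|\le(1-s\lambda_l\epsilon)\|v\|$ with $M=\int_0^1\nabla^2f(y+t(x_k-y))\,dt$ --- is precisely the inequality the paper asserts as $\|\Sigma^{1/2}\Delta_1-\epsilon\Sigma^{1/2}U\|_2\le\rho\|\Sigma^{1/2}\Delta_1\|_2$ ``by similar argument of Lemma 2 of Dalalyan and Karagulyan,'' with no further detail: after undoing the transform this is the same non-normal matrix $I-\epsilon\Sigma M$, whose eigenvalues lie in $[1-\epsilon S\lambda_u,1-\epsilon s\lambda_l]$ but whose Euclidean operator norm is not controlled by its spectral radius unless $\Sigma$ and $\nabla^2f$ commute; Lemma 2 of Dalalyan--Karagulyan applies verbatim only in the transformed coordinates, i.e.\ in the $\Sigma^{-1}$-weighted norm, and the naive transfer back to the Euclidean norm costs a factor $\sqrt{\lambda_u/\lambda_l}$, which would inflate the contraction constant. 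So your honest flagging of this as ``the delicate part'' does not put you behind the paper --- you have reproduced its argument at the same level of detail --- but be aware that neither your sketch nor the paper's proof supplies the justification for this step in the unweighted $W_2$; a clean fix is to run the whole recursion in the $\Sigma^{-1}$-weighted Wasserstein metric (where the contraction is genuine) and convert to the Euclidean $W_2$ only once at the end, accepting the resulting condition-number factors in the constants.
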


\begin{proof}
First of all, the updating iteration can be rewritten as:
\[
\tilde x_{k+1} =\tilde   x_{k}-\epsilon [ \nabla\tilde f(\tilde x_{k})+\tilde\zeta_{k}]+\sqrt{2\epsilon}\tilde e_{k+1},
\]
where $\tilde f(x) = f(\Sigma^{1/2}x)$, $\tilde x_k =\Sigma^{-1/2}  x_{k} $, $\tilde\zeta_{k}=\Sigma^{1/2}\zeta_{k}$ and $\tilde e_{k+1} \sim N(0,I)$.

Let $\tilde \pi_*$ denote the distribution $\tilde\pi_*\propto \exp\{-\tilde f\}$.
It is easy to see that the distribution $\tilde\pi_*$ is $s\lambda_l$-strongly log-concave and $S\lambda_u$-gradient-Lipschitz. In addition, $\tilde\zeta_k$ satisfies
\begin{equation}\label{eql0}
\begin{split}
\| E(\tilde\zeta_{k}|\tilde x_{k}) \|^2 &= \|\Sigma^{1/2} E(\zeta_{k}|x_{k}) \|^2  \leq \lambda_u \eta^2 p\\
E[ \| \tilde\zeta_{k}- E(\tilde\zeta_{k}|\tilde x_{k}) \|^2 ]
&=E[ \| \Sigma^{1/2}\zeta_{k}- E(\Sigma^{1/2}\zeta_{k}|x_{k}) \|^2 ]
\leq \lambda_u\sigma^2_1 p+ \lambda_u\sigma^2_2\|\Sigma^{1/2}\tilde x_{k}\|^2,
\end{split}
\end{equation}
 
Let $L_t$ be the stochastic process defined by $dL_t=-(\Sigma^{1/2} \nabla f(\Sigma^{1/2}L_t))dt+\sqrt 2dW_t$ with initialization $L_0\sim\tilde\pi_*$ (hence $L_t\sim\tilde\pi_*$). Define $\Delta_2=L_\epsilon-\tilde x_{t+1}$ and 
$\Delta_1=L_0-\tilde x_{t}$. Then, by the same arguments used in the proof of Proposition 2 in \cite{DalalyanK2017}.
we have 
\begin{equation}\label{eql1}
\begin{split}
&\|\Sigma^{1/2}\Delta_2\|_{L_2}^2\\
\leq& \{\|\Sigma^{1/2}\Delta_1-\epsilon \Sigma^{1/2}U\|_{L_2} +\|\Sigma^{1/2}W\|_{L_2}+\epsilon\|\Sigma^{1/2}E(\tilde\zeta_{k}|\tilde x_{k})\|_{L_2}\}^2+
\epsilon^2\|\Sigma^{1/2}(\tilde\zeta_{k}- E(\tilde\zeta_{k}|\tilde x_{k}) ) \|_{L_2}^2,
\end{split}
\end{equation}
where $W=\int_0^{\epsilon}(\nabla\tilde f(L_t)-\nabla\tilde f(L_0))dt$ and 
$U=\nabla\tilde f(\tilde x_k+\Delta_1) - \nabla\tilde f(\tilde x_k)$.

By Lemma 4 of \cite{DalalyanK2017},
$\|W\|_{L_2}\leq 0.5\sqrt{\epsilon^4S^3\lambda_u^3p}+(2/3)\sqrt{2\epsilon^3p}S\lambda_u\leq 1.65S\lambda_u(\epsilon^3p)^{1/2}$.
By similar argument of Lemma 2 of \cite{DalalyanK2017}, we can show that
$\|\Sigma^{1/2}\Delta_1-\epsilon \Sigma^{1/2}U\|_{2} \leq \rho \|\Sigma^{1/2}\Delta_1\|_2$, where $\rho = \max(1-s\lambda_l\epsilon,S\lambda_u\epsilon-1)=1-s\lambda_l\epsilon$. Combining with (\ref{eql0}) and (\ref{eql1}), we have that
\begin{equation*}
\begin{split}
\|\Sigma^{1/2}\Delta_2\|_{L_2}^2
\leq \{\rho\|\Sigma^{1/2}\Delta_1\|_{L_2} +1.65S(\lambda_u^3\epsilon^3p)^{1/2}+\epsilon\lambda_u\eta\sqrt p\}^2+\epsilon^2\lambda_u^2\sigma^2_1 p+\epsilon^2\lambda_u^2\sigma^2_2E\| x_{k}\|^2,
\end{split}
\end{equation*}
which further implies 
\begin{equation*}
\begin{split}
W_2^2(\pi_{k+1},\pi_*)
\leq \{(1-s\lambda_l\epsilon)W_2^2(\pi_{k+1},\pi_*) +1.65S(\lambda_u^3\epsilon^3p)^{1/2}+\epsilon\lambda_u\eta\sqrt p\}^2+\epsilon^2\lambda_u^2\sigma^2_1 p+\epsilon^2\lambda_u^2\sigma^2_2E\| x_{k}\|^2.
\end{split}
\end{equation*}

By Lemma \ref{w20}$, E\|x_{k}\|^2\leq (W_2(\pi_{k},\pi_*)+\sqrt V)^2$,
we can derive that
\[
\begin{split}
W_2^2(\pi_{k+1},\pi_*)
 \leq&\left[(1-s\lambda_l\epsilon)W_2(\pi_{k},\pi_*)+1.65S(\lambda_u^3\epsilon^3p)^{1/2}+\epsilon\eta\lambda_u\sqrt p\right]^2\\
   &+\epsilon^2\sigma^2_1\lambda_u^2p+\epsilon^2\sigma^2_2\lambda_u^2(W_2(\pi_{k},\pi_*)+\sqrt V)^2\\
   \leq &\left[(1-s\lambda_l\epsilon)W_2(\pi_{k},\pi_*)+1.65S(\lambda_u^3\epsilon^3p)^{1/2}+\epsilon\eta\lambda_u\sqrt p\right]^2\\
   &+\epsilon^2\sigma^2_1\lambda_u^2p+2\epsilon^2\sigma^2_2\lambda_u^2pV+2\epsilon^2\sigma^2_2\lambda_u^2W_2^2(\pi_{k},\pi_*)\\
   \leq &\left[(1-s\lambda_l\epsilon+\sqrt 2\sigma_2\epsilon\lambda_u)W_2(\pi_{k},\pi_*)+1.65S(\lambda_u^3\epsilon^3p)^{1/2}+\epsilon\eta\lambda_u\sqrt p\right]^2\\
  & +\epsilon^2\sigma^2_1\lambda_u^2p
    +2\epsilon^2\sigma^2_2\lambda_u^2pV,
\end{split}
\]
which concludes the proof.
\end{proof}
 

\begin{remark} \label{rem6}
Condition (\ref{biaseq0}) in Lemma \ref{lemS2} can be further relaxed. For example if we assume bias of gradient estimation is not uniformly bound, increasing with $\|x_{k}\|$ in the form $\| E(\zeta_{k}|x_{k}) \|^2  \leq \eta^2 (\sqrt p+\|x_{k}\|)^2$, then by the same technique we can prove that
 \begin{equation*}
 \begin{split}
 W_2^2(\pi_{k+1},\pi_*)    \leq &\left[(1-s\lambda_l\epsilon+\eta\lambda_u\epsilon+\sqrt 2\sigma_2\epsilon\lambda_u)W_2(\pi_{k},\pi_*)+1.65S(\lambda_u^3\epsilon^3p)^{1/2}+\epsilon\eta\lambda_u(\sqrt p+\sqrt V)\right]^2\\
   &+\epsilon^2\sigma^2_1\lambda_u^2p+2\epsilon^2\sigma^2_2\lambda_u^2pV.
\end{split}\end{equation*}
In the proof of Lemma \ref{lemS2}, $p$ is treated as a constant, so it is  trivial to extend the proof to the case that $\| E(\zeta_k|x_{k}) \|^2$ and $E[ \| \zeta_k- E(\zeta_k|x_{k}) \|^2 ]$ increase in a polynomial of $p$.
In that case, the statements in Remark 7, Corollary \ref{decay}, and Theorem \ref{Them2} can be updated accordingly. 
\end{remark}
 
\begin{remark} \label{remarkApp2} 
Consider a Langevin Monte Carlo algorithm with inaccurate gradients, varying conditioning matrices and a constant learning rate $\epsilon$, i.e.,
\[x_{k+1} = x_{k}-\epsilon\Sigma_{k}[\nabla f(x_{k})+\zeta_{k}]+\sqrt{2\epsilon}\xi_{k+1};\quad \xi_{k+1}\sim N(0,\Sigma_{k}). \]
If  $\Sigma_{k}$ is positive definite, $\lambda_l\leq \inf_k\lambda_{\min}(\Sigma_k)\leq \sup_k\lambda_{\max}(\Sigma_k)
\leq \lambda_u$ and $\epsilon\leq 2/(s \lambda_l+S \lambda_u)$, then (\ref{coneq}) holds for all iterations. Combining it with Lemma 1 of \cite{DalalyanK2017}, it is easy to obtain that
   \begin{equation} \label{coneq2}
  \begin{split}
  W_2(\pi_k,\pi_*) 
  \leq& (1-s\lambda_l\epsilon+\sqrt 2\sigma_2\epsilon\lambda_u)^k W_2(\pi_0,\pi_*)\\
  &+\frac{\eta\lambda_u\sqrt p}{s\lambda_l-\sqrt 2\sigma_2\lambda_u}  +\frac{1.65S(\lambda_u^3\epsilon p)^{1/2}}{s\lambda_l-\sqrt 2\sigma_2\lambda_u} + \frac{\sqrt{\epsilon}\lambda_u(\sigma^2_1p+2\sigma^2_2V)}{ 1.65S(\lambda_u p)^{1/2}},
  \end{split}
 \end{equation}
 where the first term in the RHS of (\ref{coneq2}) converges to 0 if $s\lambda_l>\sqrt{2}\sigma_2\lambda_u$.
\end{remark} 

The next corollary provides a decaying-learning-rate version of the convergence result (\ref{coneq2}).
\begin{corollary}\label{decay}
Consider a Langevin Monte Carlo algorithm
\[x_{k+1} = x_{k}-\epsilon_{k+1}\Sigma_{k}[\nabla f(x_{k})+\zeta_{k}]+\sqrt{2\epsilon_{k+1}}\xi_{k};\quad \xi_{k}\sim N(0,\Sigma_k),\]
where $\zeta_k$ satisfies (\ref{biaseq0}), $\Sigma_{k}$ is positive definite, $\lambda_l\leq \inf_k\lambda_{\min}(\Sigma_k)\leq \sup_k\lambda_{\max}(\Sigma_k)
\leq \lambda_u$ and the learning rate  
$\epsilon_k =  2/[(s \lambda_l+S \lambda_u)k^{\varpi}]$ for some $\varpi\in(0,1)$. If $s\lambda_l>\sqrt{2}\sigma_2\lambda_u$, then we have
\begin{equation}
   \begin{split}
 \limsup_{k\rightarrow\infty} W_2(\pi_k,\pi_*)\leq \frac{\varphi}{1-\varphi}\frac{\eta\lambda_u\sqrt p}{s\lambda_l-\sqrt 2\sigma_2\lambda_u},\mbox{ for some constant }\varphi\in(0,1).
  \end{split}
\end{equation}
\end{corollary}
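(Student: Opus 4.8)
The plan is to iterate the one‑step estimate of Lemma~\ref{lemS2} and read off a limit superior. Write $a_k:=W_2(\pi_k,\pi_*)$ and $\beta:=s\lambda_l-\sqrt2\,\sigma_2\lambda_u>0$. Since $k^{\varpi}\ge1$, the learning rate obeys $\epsilon_k\le 2/(s\lambda_l+S\lambda_u)$ for every $k$, so Lemma~\ref{lemS2} applies at each step with $\Sigma=\Sigma_k$ and $\epsilon=\epsilon_{k+1}$; writing $C_1:=1.65\,S\lambda_u^{3/2}p^{1/2}$ and $C_3:=(\sigma_1^2+2\sigma_2^2V)\lambda_u^2p$ (with $V<\infty$, which holds because $f$ is $s$-strongly convex), inequality (\ref{coneq}) becomes
\[
a_{k+1}^2 \;\le\; \big[(1-\beta\epsilon_{k+1})a_k+C_1\epsilon_{k+1}^{3/2}+\eta\lambda_u\sqrt p\,\epsilon_{k+1}\big]^2+C_3\epsilon_{k+1}^2 .
\]
First I would establish that $\{a_k\}$ is bounded: for $k$ large enough that $\epsilon_{k+1}\le1$ we have $C_1\epsilon_{k+1}^{3/2}\le C_1\epsilon_{k+1}$ and $C_3\epsilon_{k+1}^2\le C_3\epsilon_{k+1}$, and the display yields $a_{k+1}\le(1-\beta\epsilon_{k+1})a_k+D\epsilon_{k+1}$ for a constant $D$, hence $a_{k+1}\le\max(a_k,D/\beta)$; therefore $a_k\le A$ for all $k$, for some finite $A$.

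With $a_k\le A$ secured, I would expand the square and isolate the order‑$\epsilon_{k+1}$ drift term that actually matters:
\[
a_{k+1}^2 \;\le\; a_k^2-2\beta\epsilon_{k+1}\,a_k(a_k-c)+\mathrm{Rem}_{k+1},\qquad c:=\frac{\eta\lambda_u\sqrt p}{\beta},
\]
where, using $a_k\le A$, the remainder satisfies $|\mathrm{Rem}_{k+1}|\le M'\epsilon_{k+1}^{3/2}$ for some constant $M'$ (it collects all terms of order $\epsilon_{k+1}^{3/2},\epsilon_{k+1}^2,\epsilon_{k+1}^3$). Fix $\delta>0$. For $k$ large enough that $M'\epsilon_{k+1}^{1/2}\le\beta\delta^2$ and $D\epsilon_{k+1}\le\delta$: if $a_k\ge c+\delta$ then $a_k(a_k-c)\ge\delta^2$, so $a_{k+1}^2\le a_k^2-\beta\delta^2\epsilon_{k+1}$; while if $a_k\le c+\delta$ then the crude bound $a_{k+1}\le(1-\beta\epsilon_{k+1})a_k+D\epsilon_{k+1}$ gives $a_{k+1}\le c+2\delta$. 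Consequently the set $\{a_k\le c+2\delta\}$ is absorbing from some stage on; and it is reached, since otherwise $a_k\ge c+\delta$ for all large $k$ would force, by summing $a_{k+1}^2\le a_k^2-\beta\delta^2\epsilon_{k+1}$ and using $\sum_k\epsilon_k=\infty$, the contradiction $a_k^2\to-\infty$. Hence $\limsup_k a_k\le c+2\delta$; letting $\delta\downarrow0$ gives $\limsup_k W_2(\pi_k,\pi_*)\le\eta\lambda_u\sqrt p/(s\lambda_l-\sqrt2\,\sigma_2\lambda_u)$, which is the asserted bound for any $\varphi\in[1/2,1)$ (e.g.\ $\varphi=1/2$), and collapses to $\lim_k W_2(\pi_k,\pi_*)=0$ when $\eta=0$, as invoked in the proof of Theorem~\ref{them1}.

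The main obstacle is the bookkeeping around the square root. Applying the crude bound $\sqrt{x^2+y^2}\le x+y$ directly to the first display leaves a spurious additive $\sqrt{C_3}/\beta$ in the limit, which does not vanish with $\eta$ and would wreck the $\eta=0$ application. One must instead work with the \emph{squared} recursion, use the separately established boundedness of $\{a_k\}$ to demote the $\epsilon_{k+1}^{3/2}$ and $\epsilon_{k+1}^2$ contributions to a genuinely $o(\epsilon_{k+1})$ remainder, and only then run the Chung‑type deterministic argument, so that the coefficient of $\epsilon_{k+1}$ that survives in the limit is exactly $2\beta a_k(a_k-c)$. The remaining ingredients — admissibility of $\epsilon_{k+1}$ in Lemma~\ref{lemS2}, finiteness of $V$ under $s$‑strong log‑concavity (Brascamp--Lieb, cf.\ Remark~\ref{remarkApp1}), and the elementary absorbing‑set argument — are routine.
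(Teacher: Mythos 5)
Your proposal is correct, but it reaches the corollary by a genuinely different route from the paper. The paper never analyzes the decaying-step recursion iterate by iterate: it partitions the iterations into blocks $K_0<K_1<\cdots$, with $K_i$ the smallest integer such that $K_i^{-\varpi}\le (1+i)^{-\chi}$ and $\chi=\varpi/(1-\varpi)$, so that each block contains roughly $(\chi/\varpi)\,\epsilon_{K_{i+1}}^{-1}$ iterations over which the step size is essentially constant; it then applies the fixed-step bound (\ref{coneq2}) of Remark \ref{remarkApp2} (which itself invokes Lemma 1 of \cite{DalalyanK2017} to keep the variance contribution at order $\sqrt{\epsilon}$) inside each block, extracts a uniform block-level contraction factor $\varphi<1$, and iterates the block recursion (\ref{w2decaynonasy}); the bias then accumulates to $\frac{\varphi}{1-\varphi}\,\eta\lambda_u\sqrt{p}/(s\lambda_l-\sqrt{2}\sigma_2\lambda_u)$ while the $K_{T-t}^{-\varpi/2}$ terms vanish. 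You instead run a deterministic drift/absorbing-set argument directly on the squared one-step inequality of Lemma \ref{lemS2}, first proving boundedness of $a_k=W_2(\pi_k,\pi_*)$ so that every term other than $-2\beta\epsilon_{k+1}a_k(a_k-c)$ can be absorbed into an $O(\epsilon_{k+1}^{3/2})$ remainder, and then using $\sum_k\epsilon_k=\infty$. Both devices handle the subtlety you rightly flag—the variance term must not survive at order one, or the $\eta=0$ application in Theorem \ref{them1} would fail—the paper via the Dalalyan--Karagulyan lemma hidden in (\ref{coneq2}), you via the squared recursion plus boundedness. What your route buys: it is self-contained modulo Lemma \ref{lemS2} (no blocking construction, no appeal to (\ref{coneq2})), and it yields the sharper limit $\limsup_k W_2(\pi_k,\pi_*)\le \eta\lambda_u\sqrt{p}/(s\lambda_l-\sqrt{2}\sigma_2\lambda_u)$, i.e.\ the corollary with $\varphi=1/2$, whereas the paper's $\varphi$ is only bounded below by $\exp\{-2(s\lambda_l-\sqrt{2}\sigma_2\lambda_u)\chi/((s\lambda_l+S\lambda_u)\varpi)\}$, so its prefactor $\varphi/(1-\varphi)$ can be large. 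What the paper's route buys: the finite-$T$ block estimate (\ref{w2decaynonasy}), which is reused quantitatively in the proof of Theorem \ref{Them2} (see (\ref{w2diff1}) and (\ref{w2diff2})), something your purely asymptotic limsup argument does not supply. Both arguments implicitly assume $W_2(\pi_0,\pi_*)<\infty$ and $V<\infty$, and both apply the one-step bound with the varying $\Sigma_k$ exactly as Remark \ref{remarkApp2} observes, so there is no gap on that score.
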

\begin{proof}
The proof of this corollary closely follows the proof of Theorem 2(i) in \cite{SongLiang2020}.
Let $K_0=0$, and $K_i$ ($i>0$) be the smallest integer such that $K_i^{-\varpi}\leq (1+i)^{-\chi}$, where $\chi=\varpi/(1-\varpi)$. Thus,  asymptotically, we have $K_{i+1}-K_{i}\approx (\chi/\varpi)K_{i+1}^\varpi$.

In the spirit of (\ref{coneq2}), we have 
\[
  \begin{split}
W_2(\pi_{K_{i+1}},\pi_*) 
  \leq& (1-(s\lambda_l-\sqrt 2\sigma_2\lambda_u)\epsilon_{K_{i+1}})^{K_{i+1}-K_i} W_2(\pi_{K_i},\pi_*)\\
  &+\frac{\eta\lambda_u\sqrt p}{s\lambda_l-\sqrt 2\sigma_2\lambda_u}  +\left[\frac{1.65S(\lambda_u^3 p)^{1/2}}{s\lambda_l-\sqrt 2\sigma_2\lambda_u} + \frac{\lambda_u(\sigma^2_1p+2\sigma^2_2V)}{ 1.65S(\lambda_u p)^{1/2}}\right]\sqrt{\epsilon_{K_{i}}}.
  \end{split}
  \]
Note that due to the fact that $K_{i+1}-K_{i}\approx  (\chi/\varpi)\epsilon_{K_{i+1}}^{-1}$, we have
\[\lim_{i\rightarrow\infty}[1-(s\lambda_l-\sqrt 2\sigma_2\lambda_u)\epsilon_{K_{i+1}}]^{K_{i+1}-K_i} = \exp\left\{-\frac{2(s\lambda_l-\sqrt 2\sigma_2\lambda_u)\chi}{(s \lambda_l+S \lambda_u)\varpi}\right\}<1.\] 
Therefore, there exists some positive constant $\varphi\in(\exp(-\frac{2(s\lambda_l-\sqrt 2\sigma_2\lambda_u)\chi}{(s \lambda_l+S \lambda_u)\varpi}),1)$, such that for all $i\geq 1$, $(1-(s\lambda_l-\sqrt 2\sigma_2\lambda_u)\epsilon_{K_{i+1}})^{K_{i+1}-K_i}\leq \varphi$, i.e.,
\[
  \begin{split}
W_2(\pi_{K_{i+1}},\pi_*) 
  \leq \varphi W_2(\pi_{K_i},\pi_*)
  +\frac{\eta\lambda_u\sqrt p}{s\lambda_l-\sqrt 2\sigma_2\lambda_u}  +\left[\frac{1.65S(\lambda_u^3 p)^{1/2}}{s\lambda_l-\sqrt 2\sigma_2\lambda_u} + \frac{\lambda_u(\sigma^2_1p+2\sigma^2_2V)}{ 1.65S(\lambda_u p)^{1/2}}\right]\sqrt{\epsilon_{K_{i}}}.
  \end{split}
  \]
The above recursive inequality implies that 
\begin{equation}\label{w2decaynonasy}
  \begin{split}
&W_2(\pi_{K_{T}},\pi_*) 
  \leq \varphi^T W_2(\pi_{K_0}=\pi_0,\pi_*)
  +(\sum_{t=1}^{T}\varphi^{t-1})\frac{\eta\lambda_u\sqrt p}{s\lambda_l-\sqrt 2\sigma_2\lambda_u} \\
  &\quad+(\sum_{t=1}^{T} \varphi^{t-1}K_{T-t}^{-\varpi/2}) \sqrt{\frac{2}{s \lambda_l+S \lambda_u}}\left[\frac{1.65S(\lambda_u^3 p)^{1/2}}{s\lambda_l-\sqrt 2\sigma_2\lambda_u} + \frac{\lambda_u(\sigma^2_1p+2\sigma^2_2V)}{ 1.65S(\lambda_u p)^{1/2}}\right].
  \end{split}
  \end{equation}
As $T\rightarrow\infty$, $\sum_{t=1}^{T} \varphi^{t-1}K_{T-t}^{-\varpi/2}\rightarrow 0$, hence we have that $W_2(\pi_{K_{T+1}},\pi_*) \rightarrow \frac{\varphi}{1-\varphi}\frac{\eta\lambda_u\sqrt p}{s\lambda_l-\sqrt 2\sigma_2\lambda_u}$.
\end{proof}

\begin{remark}
For  technical simplicity, we require $\varpi<1$ for the decay of the learning rate ($\epsilon_t \propto t^{-\varpi}$) in the above corollary. We conjecture that the corollary still holds under the choice $\epsilon_t\propto t^{-1}$, i.e., $\varpi=1$. However, more subtle technical tools are necessary to rigorously characterize the convergence rate under $\epsilon_t\propto t^{-1}$, see \cite{Teh2016SGLD}.
\end{remark}

\begin{theorem} \label{Them2} 
Assume that the conditions (A.1)-(A.6) hold, the ensemble size is sufficiently large, $c_1N_t\leq s_t\leq S_t\leq c_2N_t$, $c_3(n_t/N_t)\leq \lambda_{t,l}\leq \lambda_{t,u}\leq c_4(n_t/N_t)$, and $\sigma^2_{t,s}\leq c_5(N_t/n_t)$.
Let $\sup_{t} 1/(N_tn_t)\leq c_6$ for some sufficiently small constant $c_6$
such that 
\[
\varphi_0=2\frac{c_1c_3-2\sqrt{c_5c_6}c_4}{c_1c_3+c_2c_4}\in(0,1).
\]
Denote $V_t = \int \|x\|^2d\pi_{t}$ and it is assumed to satisfy the constraint $V_t\leq c_7p$ for some constant $c_7>0$.
We choose the learning rate $\epsilon_{t,k} =2/[(s_{t}\lambda_{t,l}+S_t\lambda_{t,u})k^{\varpi}]$ ($k=1,\dots,\mK$, $t=1,\dots,\infty$) for some $\varpi\in(0,1)$. If all $N_t$'s are bounded away from 0, then  $\limsup_{t\rightarrow\infty} W_2(\tilde\pi_{t+1},\pi_{t+1})= O(1/\lim\inf_t N_t)$, as long as $\mK$ is sufficiently large (refer to (\ref{formalres}) for formal conditions).


\end{theorem}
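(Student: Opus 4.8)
The plan is to establish a stage-wise recursion for $\delta_t:=W_2(\tilde\pi_t,\pi_t)$ and then iterate it. The first step is to recognize that, within a single stage $t$, Algorithm~\ref{EnKFasimb} is a preconditioned (Riemannian) SGLD run targeting the filtering distribution $\pi_t$. Repeating the algebra of (\ref{Keq})--(\ref{June9eq}) with $Q_{t,k}=\epsilon_{t,k}I_p$, $R_t=2V_t$ and using the gradient identity (\ref{identeq}), the analysis update (\ref{Algeq2c}) can be put in the form
\[
x_{t,k}^{a}=x_{t,k-1}^{a}-\epsilon_{t,k}\,\Sigma_{t,k}\bigl[\nabla f_t(x_{t,k-1}^{a})+\zeta_{t,k}\bigr]+\sqrt{2\epsilon_{t,k}}\,\xi_{t,k},\qquad \xi_{t,k}\sim N(0,\Sigma_{t,k}),
\]
where $f_t=-\log\pi_t$, $\Sigma_{t,k}$ is as in (A.3), and $\zeta_{t,k}=\zeta_{t,k}^{(1)}+\zeta_{t,k}^{(2)}$ splits the gradient error into the sub-sampling error $\zeta^{(1)}_{t,k}$ of the likelihood gradient (mean zero, second moment controlled by (A.4)) and the error $\zeta^{(2)}_{t,k}$ of estimating $\nabla\log\pi(x\mid y_{1:t-1})$ through the importance-resampling step. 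Conditions (A.1)--(A.3) place us in the scope of Lemma~\ref{lemS2} and Corollary~\ref{decay} with the within-stage schedule $\epsilon_{t,k}=2/[(s_t\lambda_{t,l}+S_t\lambda_{t,u})k^{\varpi}]$; for stage $t=1$ there is no importance resampling, $\zeta^{(2)}_{1,k}\equiv0$, and $\delta_1$ is bounded directly as in Theorem~\ref{them1}, so the recursion is well initialized.

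\textbf{The per-stage bound.} Applying the non-asymptotic estimate (\ref{w2decaynonasy}) across the $\mK$ iterations of stage $t$, started from the practical initial law $\tilde\nu_t$, gives
\[
\delta_t\ \leq\ \Phi_{\mK}\,W_2(\tilde\nu_t,\pi_t)\ +\ \frac{C_1\,\lambda_{t,u}}{s_t\lambda_{t,l}-\sqrt2\,\sigma_{t,s}\lambda_{t,u}}\,\bigl\|E(\zeta^{(2)}_{t,k}\mid x_{t,k-1})\bigr\|\ +\ R_{\mK,t},
\]
where $\Phi_{\mK}$ is the $\mK$-step contraction factor (of $\prod_{k\leq\mK}(1-\varphi_0 k^{-\varpi})$ type, with $\varphi_0$ as in the statement) and $R_{\mK,t}$ collects the accumulated discretization and sub-sampling-variance terms. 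Three ingredients then finish the bound. First, Lemma~\ref{lem6} and (A.6) give $W_2(\tilde\nu_t,\pi_t)\leq W_2(\tilde\nu_t,\nu_t)+W_2(\nu_t,\pi_t)\leq l\,\delta_{t-1}+M$. Second, for the bias: the importance-resampling step approximates $\pi(x_{t-1}\mid x_{t,k-1},y_{1:t-1})\propto \phi(x_{t,k-1};g(x_{t-1}),U_t)\,\pi(x_{t-1}\mid y_{1:t-1})$ by re-weighting the pool $\mX_{t-1}$, whose members follow $\tilde\pi_{t-1}$; since $g$ is bounded by (A.5) the weights are bounded away from $0$ and $\infty$, so the exact tilting map is $W_2$-Lipschitz in its input law, and with the self-normalized importance sampling bias recorded in Remark~\ref{rem3b} the resampled law lies within $W_2$-distance $C_2\delta_{t-1}+C_3/|\mX_{t-1}|$ of the true conditional. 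Combining this with (\ref{identeq}), Lemma~\ref{lem0}, $g$ being $l$-Lipschitz, and $\lambda_{\min}(U_t)\geq\lambda_{t,s}$, one obtains $\|E(\zeta^{(2)}_{t,k}\mid x_{t,k-1})\|\leq(l/\lambda_{t,s})(C_2\delta_{t-1}+C_3/|\mX_{t-1}|)$; using (A.3) and the stated scalings, $\lambda_{t,u}/(s_t\lambda_{t,l})\asymp 1/N_t$, so the whole bias contribution is $O\bigl(N_t^{-1}(\delta_{t-1}+|\mX_{t-1}|^{-1})\bigr)$. Third, because of the decreasing within-stage schedule $\epsilon_{t,k}\propto k^{-\varpi}$, both $\Phi_{\mK}$ and $\sup_tR_{\mK,t}$ tend to $0$ as $\mK\to\infty$ (this is where the ``$\mK$ sufficiently large'' hypothesis enters), while $R_{\mK,t}$ stays uniformly bounded in $t$ since $c_1N_t\leq s_t\leq S_t\leq c_2N_t$, $\lambda_{t,l},\lambda_{t,u}\asymp n_t/N_t$, $\sigma_{t,s}^2\leq c_5N_t/n_t$, $V_t\leq c_7p$, and $\liminf_tN_t>0$.

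\textbf{Closing the recursion.} Collecting the three pieces yields
\[
\delta_t\ \leq\ \Bigl(\Phi_{\mK}l+\tfrac{C_4}{N_t}\Bigr)\delta_{t-1}\ +\ \Phi_{\mK}M\ +\ \tfrac{C_5}{N_t|\mX_{t-1}|}\ +\ R_{\mK,t}.
\]
One would then pick $\mK$ large enough that $\Phi_{\mK}l\leq\tfrac12$ and $\Phi_{\mK}M+\sup_tR_{\mK,t}\leq c_\ast/\liminf_tN_t$ (possible since these $\mK$-terms vanish while $\liminf_tN_t$ is finite), take the ensemble size large so $|\mX_{t-1}|$ is large, and use the smallness of $c_6$ (which already underlies $\varphi_0\in(0,1)$) together with $\inf_tN_t>0$ to ensure $\sup_t(\Phi_{\mK}l+C_4/N_t)=:\kappa<1$. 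Then $\delta_t\leq\kappa\,\delta_{t-1}+\bar r$ with $\bar r=O(1/\liminf_tN_t)$, and iterating gives $\limsup_{t\to\infty}\delta_t\leq\bar r/(1-\kappa)=O(1/\liminf_tN_t)$, which is the assertion.

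\textbf{Expected main obstacle.} The crux is the second ingredient: turning the informal statement ``$-U_t^{-1}(x-g(\tilde x_{t-1}))$ is unbiased for $\nabla\log\pi(x\mid y_{1:t-1})$'' into a quantitative bound. The hard part will be showing that the map sending a law $\mu$ to the self-normalized tilting $\propto \phi(\cdot\,;g(\cdot),U_t)\,\mu$ is $W_2$-Lipschitz with a constant bounded \emph{uniformly in $t$} (so the $1/N_t$ gain from $\lambda_{t,u}/(s_t\lambda_{t,l})$ survives), and that the self-normalization bias is genuinely $O(1/|\mX_{t-1}|)$ with uniformly controlled constants; the boundedness of $g$ and the uniform ellipticity of $U_t$ in (A.5) are precisely what makes this work. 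The remaining work --- verifying $\Phi_{\mK}$ and $R_{\mK,t}$ are $o_{\mK}(1)$ uniformly in $t$ under the stated scalings, and bookkeeping the constants so that $\kappa<1$ --- is routine given Lemma~\ref{lemS2} and Corollary~\ref{decay}.
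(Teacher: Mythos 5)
Your overall architecture (viewing each stage as a preconditioned SGLD run, controlling the stage initialization through Lemma \ref{lem6} and (A.6), then iterating a stage-wise recursion for $\delta_t=W_2(\tilde\pi_t,\pi_t)$) coincides with the paper's. The genuine gap is in your second ingredient, the bias of the resampling-based gradient estimate. You bound it by $(l/\lambda_{t,s})\bigl(C_2\delta_{t-1}+C_3/|\mX_{t-1}|\bigr)$, which requires (i) that the self-normalized tilting map $\mu\mapsto \phi(x;g(\cdot),U_t)\,\mu\big/\int\phi(x;g(z),U_t)\,d\mu(z)$ be $W_2$-Lipschitz uniformly in $t$ and in the conditioning particle $x=x_{t+1,k}^{a,i}$, and (ii) that the self-normalization bias be $O(1/|\mX_{t-1}|)$ with uniformly controlled constants. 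Neither follows from what you invoke: boundedness of $g$ only gives a weight ratio of order $\exp\{c\,M_g(\|x\|+M_g)/\lambda_{t,s}\}$, which is not uniform in $x$ (and $x$ is unbounded under the laws in play); and even uniformly bounded weights yield total-variation-type control of the tilted law, not $W_2$-Lipschitzness, so converting to $W_2$ brings in moment/dimension factors that threaten exactly the $1/N_t$ gain your recursion needs. You flag this as the expected main obstacle, but the proposal does not close it, and as stated the inference ``bounded weights $\Rightarrow$ $W_2$-Lipschitz tilting'' would fail.

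The paper's proof sidesteps this difficulty entirely. Because the quantity being averaged is $-U_t^{-1}\bigl(x_{t+1,k}^{a,i}-g(z)\bigr)$ and only $g(z)$ depends on the resampled variable, the difference of its expectations under the exact conditional $\pi(z|x_{t+1,k}^{a,i},y_{1:t})$ and under \emph{any} surrogate law is bounded by $2M_g/\lambda_{t,s}$ (equation (\ref{bias1})), uniformly in $t$, in the particle, and in the ensemble; this also absorbs the self-normalization bias of Remark \ref{rem3b} without any $O(1/|\mX_{t-1}|)$ analysis. Fed into Lemma \ref{lemS2}/Corollary \ref{decay} as $\eta=2M_g/(\sqrt{p}\,\lambda_{t,s})$ and multiplied by the prefactor $\lambda_{t+1,u}/(s_{t+1}\lambda_{t+1,l}-2\sigma_{t+1,s}\lambda_{t+1,u})\asymp 1/N_{t+1}$, this constant bias accumulates geometrically into the persistent term $C_1\frac{\varphi}{1-\varphi}\frac{2M_g}{N_{t+1}}$ of (\ref{w2diff3}), and that term—not an ensemble-size or $\mK$-dependent remainder—is the source of the $O(1/\liminf_t N_t)$ limit. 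In the paper the dependence on $\delta_t$ enters only through the \emph{variance} of the gradient (Lemmas \ref{lem2}–\ref{lem3}), producing the $\delta_t^2$ term in (\ref{w2diff3}); your $R_{\mK,t}$ would have to carry this term, and its uniform-in-$t$ boundedness requires the a priori bound $\delta_t\le\overline W$ that the paper establishes by a bootstrap before iterating, a step your recursion also omits. If you replace the Lipschitz-tilting claim by the crude uniform bias bound $2M_g/\lambda_{t,s}$ and add the $\overline W$ bootstrap, your argument collapses onto the paper's and goes through.
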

\begin{proof}
Define $K_i$ as in the proof of Corollary \ref{decay}, and we let $\mathcal{K}=K_{\varkappa}(\asymp \varkappa^{\chi/\varpi})$ for some $\varkappa$ to be specified later, where $\chi=\varpi/(1-\varpi)$.

At stage $t=1$, Algorithm 4 performs exactly as Algorithm 2; that is, it is a Langevin Monte Carlo algorithm with a varying conditioning matrix as discussed in Section 3. By Corollary \ref{decay} with no bias $\eta=0$, we obtain that there exists some $\varphi\in(\exp\{-\varphi_0\chi/\varpi\},1)$, such that
\begin{equation}\label{w2diff1}
\begin{split}
 W_2(\tilde\pi_1,\pi_1)\leq&\varphi^\varkappa W_2(\nu_{1}, \pi_1)+(\sum_{j=1}^{\varkappa} \varphi^{j-1}K_{\varkappa-j}^{-\frac{\varpi}{2}}) \sqrt{\frac{2}{s_{1} \lambda_{1,l}+S_{1} \lambda_{1,u}}}\\
 &\times   \left[ \frac{1.65S_1\lambda_{1,u}\sqrt{p\lambda_{1,u}}}{s_1\lambda_{1,l}-\sqrt 2\sigma_{1,s}\lambda_{1,u}}+
 \frac{\sigma^2_{1,s}\lambda_{1,u}(p+2V_1)}{1.65 S_1\sqrt{\lambda_{1,u}p}
 }\right]\\
 \leq &\varphi^\varkappa M+C_0(\sum_{j=1}^{\varkappa} \varphi^{j-1}K_{\varkappa-j}^{-\frac{\varpi}{2}})\sqrt{\frac{p}{N_1}},\\
\end{split}
\end{equation}
for some constant $C_0$ (which only depends on the $c_i$'s values in the statement of this theorem), where $\nu_{1}$ denotes the prior distribution of $x_1$. 

Now, we study the relationship between $W_2(\tilde\pi_{t+1},\pi_{t+1})$ and $W_2(\tilde\pi_t,\pi_t)$ for $t\geq2$. As discussed in Section 3.1, at stage $t+1$, the algorithm can be rewritten as 
\begin{equation}\label{sg}
\begin{split}
x_{t+1,k+1}^{a,i}& =x_{t+1,k}^{a,i}
+\epsilon_{t+1,k+1}\Sigma_{t+1,k+1}\left[
\frac{N}{n}H_{t+1,k}^TV_{t+1}^{-1}(y_{t+1,k}-H_{t+1,k}x_{t+1,k}^{a,i})
+\nabla\log \pi(x_{t+1,k}^{a,i}|\tilde{x}_{t,k}^i)
\right]+e_{t+1}\\
& \stackrel{\Delta}{=} x_{t+1,k}^{a,i}+ \epsilon_{t+1,k+1}\Sigma_{t+1,k+1} \left[(I)+(II)\right] + e_{t+1}, \\ 
\end{split}
\end{equation}
where $e_{t+1}\sim N(0,2\epsilon_{t+1,k+1}\Sigma_{t+1,k+1})$, and $\tilde{x}_{t,k}^i$ denotes a sample drawn from 
 the set $\mX_t$ according to an importance weight proportional 
 to $\pi(x_{t+1,k}^{a,i}|\tilde{x}_{t,k}^i)$.

To apply (\ref{w2decaynonasy}), it is necessary to study the bias and variance of the gradient estimate used in (\ref{sg}). Note that the term (I) is unbiased due to the property of simple random sampling. 
To study the bias of term (II), we define 
$
\pi(z|x_{t+1,k}^{a,i}, y_{1:t}) \propto \pi(x_{t+1,k}^{a,i}|z) \pi_t(z|y_{1:t})$;
that is, $\pi(z|x_{t+1}^{a,i},y_{1:t})$ can be viewed as a posterior density obtained with the prior density $\pi_t(z|y_{1:t})$ and the likelihood $\pi(x_{t+1}^{a,i}|z)$. Similarly, we define 
$
\tilde{\pi}(z|x_{t+1,k}^{a,i}, y_{1:t}) \propto \pi(x_{t+1,k}^{a,i}|z) \tilde{\pi}_t(z|y_{1:t})$.
Then, by equation (18) of the main text,
the bias of term (II) can be bounded by 
\begin{equation}\label{bias1}
\begin{split}
   &\left \|\int \nabla\log \pi(x_{t+1,k}^{a,i}|z) 
  [d\tilde\pi(z|x_{t+1,k}^{a,i},y_{1:t})-d\pi(z|x_{t+1,k}^{a,i}, y_{1:t})]\right \| \\
  =&\left \| \int U_t^{-1}[x_{t+1,k}^{a,i}-g(z)]
  [d\tilde\pi(z|x_{t+1,k}^{a,i},y_{1:t})-d\pi(z|x_{t+1,k}^{a,i}, y_{1:t})]\right \| \\
  =& \left\| -\int U_t^{-1}g(z)
  [d\tilde\pi(z|x_{t+1,k}^{a,i},y_{1:t})-d\pi(z|x_{t+1,k}^{a,i}, y_{1:t})] \right\|
  \leq  2M_g/\lambda_{t,s},
\end{split}
\end{equation}
which holds for any $\tilde{\pi}(\cdot|\cdot)$. 

By Condition (A.4), the variance of term (I) is bounded by 
$\sigma_{t+1,s}^2(p+\|x\|^2)$.
The variance of term (II) is upper bounded by
\begin{equation} \label{bias2}
\begin{split}
&E\bigg\|\nabla\log \pi(x_{t+1,k}^{a,i}|\tilde{x}_{t,k}^i)
 -E\left(\nabla\log \pi(x_{t+1,k}^{a,i}|\tilde{x}_{t,k}^i) \right)\bigg\|^2
\leq (l/\lambda_{t,s})^2E\|\tilde{x}_{t,k}^i-E (\tilde{x}_{t,k}^i)\|^2 \quad\mbox{(by Lemma \ref{lem2})}\\
\leq& (l/\lambda_{t,s})^2\left[W_2(\pi_t,\tilde\pi_t)^2+p\sigma_{t,v}^2+2 W_2(\pi_t,\tilde\pi_t)\sqrt{p\sigma_{t,v}^2} \right]
\quad\mbox{ by Lemma \ref{lem3} and (\ref{postvar})}.
\end{split}
\end{equation}
Combining the above results together, the variance of the estimated gradient is upper bounded by
\begin{equation} \label{newaddeq}
2\sigma^2_{t+1,s}p+2(l/\lambda_{t,s})^2(W_2(\pi_t,\tilde\pi_t)+\sqrt{p\sigma_{t,v}^2})^2  +2\sigma^2_{t+1,s}\|x_{t+1,k}^{a,i}\|^2
:=\sigma_p^2+2\sigma^2_{t+1,s}\|x_{t+1,k}^{a,i}\|^2,
\end{equation}
which implies that a smaller value of  $W_2(\pi_t,\tilde{\pi}_t)$ will help to reduce the variance of the stochastic gradient at iteration $t+1$ 
as well as  $W_2(\tilde{\pi}_{t+1},\pi_{t+1})$ as implied by (\ref{w2diff2}).

Recall that $\tilde{\nu}_{t+1}$ denotes the practical state initial distribution of $x_{t+1,0}^{a,i}$.  
Applying equation (\ref{w2decaynonasy}) with $\sigma_1^2 = \sigma_p^2/p$, $\sigma_2^2=2\sigma^2_{t+1,s}$ and $\eta = 2M_g/(\sqrt{p}\lambda_{t,s})$, we obtain that there exists some $\varphi\in(\exp\{-\varphi_0\chi/\varpi\},1)$ such that
\begin{equation}\label{w2diff2}
\begin{split}
W_2(\tilde\pi_{t+1},\pi_{t+1}) \leq &\varphi^{\varkappa} W_2(\tilde\nu_{t+1},\pi_{t+1})+\frac{\varphi}{1-\varphi} \frac{2M_g\lambda_{t+1,u}}{\lambda_{t,s}(s_{t+1}\lambda_{t+1,l}-2\sigma_{t+1,s}\lambda_{t+1,u})}\\
+&(\sum_{j=1}^{\varkappa} \varphi^{j-1}K_{\varkappa-j}^{-\frac{\varpi}{2}}) \sqrt{\frac{2}{s_{t+1} \lambda_{t+1,l}+S_{t+1} \lambda_{t+1,u}}}\\
&\times\left[\frac{1.65S_{t+1}\sqrt{ p\lambda_{t+1,u}^3}}{s_{t+1}\lambda_{t+1,l}-2\sigma_{t+1,s}\lambda_{t+1,u}}+ \frac{\sqrt{\lambda_{t+1,u}} (\sigma_p^2+4\sigma_{t+1,s}^2V_t)}{1.65 S_{t+1}\sqrt p}\right].\\
\end{split}
\end{equation}
Note that $\varphi$ exists because $\varphi_0\leq \min_t 2\frac{s_{t+1}\lambda_{t+1,l}-2\sigma_{t+1,s}\lambda_{t+1,u}}{s_{t+1}\lambda_{t+1,l}+S_{t+1}\lambda_{t+1,u}}$, and w.o.l.g., we let the $\varphi$'s in (\ref{w2diff2}) and (\ref{w2diff1}) match.
 By Assumption (A.6) and Lemma \ref{lem6}, $W_2(\tilde\nu_{t+1},\pi_{t+1})$ $\leq M+lW_2(\tilde\pi_{t},\pi_{t})$.
 Further, combining with other conditions
 stated in the theorem, we have
\begin{equation}\label{w2diff3}
\begin{split}
W_2(\tilde\pi_{t+1},\pi_{t+1})
\leq&\varphi^{\varkappa}(M+lW_2(\tilde\pi_{t},\pi_{t}))+C_1\frac{\varphi}{1-\varphi}\frac{2M_g}{N_{t+1}}\\
&+C_2(\sum_{j=1}^{\varkappa} \varphi^{j-1}K_{\varkappa-j}^{-\frac{\varpi}{2}})\sqrt{\frac{p}{N_{t+1}}}+C_3(\sum_{j=1}^{\varkappa} \varphi^{j-1}K_{\varkappa-j}^{-\frac{\varpi}{2}})\sqrt{\frac{1}{N_{t+1}^3p}}W_2^2(\pi_t,\tilde\pi_t),
\end{split}
\end{equation}
for some positive constants $C_1$, $C_2$ and $C_3$ (which only depend on the $c_i$'s values in the statement of this theorem), where the last term follows from the 
definition of $\sigma_p^2$ given in (\ref{newaddeq}). 


Based on the recursive inequalities (\ref{w2diff1}) and (\ref{w2diff3}), one can conclude that $W_2(\tilde\pi_{t+1},\pi_{t+1})$ can converge to any arbitrarily small quantity, as long as that $\varkappa$ and $N_t$ are sufficiently large. Let $\tilde\varepsilon$ be some small positive quantity $\tilde\varepsilon\in(0,1)$, and we assume that sample size $N_t$ is non-decreasing with respect to $t$, $N_t$ and the iteration number $\mathcal{K}=K_{\varkappa}(\asymp \varkappa^{\chi/\varpi})$ satisfy
\begin{equation}\label{formalres}
\begin{split}
    &\lim N_t=N_\infty;\\
    &\varphi^\varkappa<\min\{\tilde\varepsilon/(3M),1/(3l)\} \mbox{ and } \sum_{j=1}^{\varkappa} \varphi^{j-1}K_{\varkappa-j}^{-\frac{\varpi}{2}}\leq \min\left\{\frac{\sqrt{N_1^3p}}{3C_3\overline W},1, \frac{\tilde\varepsilon}{C_2\sqrt{p/N_1}} \right\},
\end{split}
\end{equation}

Let's define $\overline W=\max\{1+2C_1\frac{\varphi}{1-\varphi}\frac{2M_g}{N_{1}}+2C_2\sqrt{\frac{p}{N_{1}}},1+C_0\sqrt p\}$, then it is not difficult to verify that 
(i) $W_2(\tilde \pi_t,\pi_t)\leq \overline W$ for all $t\geq 1$ and (ii) $W_2(\tilde \pi_{t+1},\pi_{t+1})\leq \tilde\varepsilon/3+W_2(\tilde \pi_{t},\pi_{t})/3+W_2^2(\tilde \pi_{t},\pi_{t})/(3\overline W)+C_1\frac{\varphi}{1-\varphi}\frac{2M_g}{N_{t+1}}+\tilde\varepsilon$. 
Combine them with the fact that $\lim_{T}\sum_{t=1}^T(2/3)^{T-t}N_t^{-1}=2\lim_t (1/{N_t})$, this further implies
\[
\limsup_{t\rightarrow\infty} W_2(\tilde\pi_{t+1},\pi_{t+1})\leq 4\tilde\varepsilon+ 2C_1\frac{\varphi}{1-\varphi}M_g{N_\infty^{-1}}.
\]

Equivalently, we claim that given a sufficiently large number of iterations in each stage, then
\[
\limsup_{t\rightarrow\infty} W_2(\tilde\pi_{t+1},\pi_{t+1}) = O\left( \frac{1}{{\liminf_t N_t}}\right).
\]
 \end{proof}

\bibliographystyle{asa}
\bibliography{ref}

\end{document}